\documentclass[11pt]{article}
\usepackage[T1]{fontenc}
\usepackage{graphicx}
\usepackage{fullpage}
\usepackage{xcolor}
\usepackage{amsmath,amssymb,amsfonts,amsthm,enumerate}
\usepackage{nicefrac}
\usepackage{hyperref,cleveref}
\usepackage{authblk}
\usepackage{times}
\usepackage{xspace}
\usepackage{bbm}
\usepackage[margin=1in]{geometry}
\usepackage[procnumbered, linesnumbered,algoruled,noend]{algorithm2e}
\usepackage{multirow}
\usepackage{amsmath, amsthm, amsfonts, mathdots, thm-restate, mathrsfs, mathtools}
\usepackage{enumitem}
\usepackage{amsmath,amssymb,amsthm,mathtools}
\usepackage[margin=1in]{geometry}
\usepackage{enumitem}
\usepackage[normalem]{ulem}

\usepackage[
    backend=biber,
    style=numeric,
    sortcites=true,
    maxcitenames=99,
    maxbibnames=99
]{biblatex}
\usepackage[margin=1in]{geometry}
\usepackage{hyperref}
\usepackage{cleveref}

\crefname{algocf}{algorithm}{algorithms}
\Crefname{algocf}{Algorithm}{Algorithms}

\newtheorem{theorem}{Theorem}[section]

\newtheorem{claim}[theorem]{Claim}

\newtheorem{lemma}[theorem]{Lemma}

\newtheorem{definition}[theorem]{Definition}

\newcommand{\E}{\mathbb{E}}
\newcommand{\RR}{\mathbb{R}}
\newcommand{\OPT}{O}
\newcommand{\1}{\mathbf{1}}
\newcommand{\Ical}{\mathcal{I}}

\newcommand{\Mcal}{\mathcal{M}}

\newcommand{\Bcal}{\mathcal{B}}
\newcommand{\indicator}{\mathbf{1}}
\newcommand{\vc}[1]{\mathbf{#1}}
\newcommand{\comment}[1]{}
\usepackage{xcolor}
\newcommand{\swap}{\textnormal{\texttt{swap}}}
\DeclareMathOperator*{\argmax}{arg\,max}

\newcommand{\ignore}[1]{}
\def \M {\mathcal M}
\def \I {\mathcal I}
\def \F {\mathcal F}
\def \cons {4}
\def \constwo {20}
\def \Ev {\mathcal{E}}
\newcommand{\mar}{\textnormal{Mar}}

\newcommand*{\email}[1]{\href{mailto:#1}{\nolinkurl{#1}} }

\newenvironment{claimproof}
  {%
    \begin{proof}%
  }
  {%
    \end{proof}%
  }

\definecolor{thiagocolor}{RGB}{128, 0, 128}
\newcommand{\thiago}[1]{{\color{thiagocolor} \sf $\infty$ Thiago: [#1]}}

\definecolor{mohitcolor}{RGB}{0, 128, 128}
\newcommand{\ms}[1]{{\color{mohitcolor} \sf $\infty$ Mohit: [#1]}}

\newcommand{\roy}[1]{{\color{magenta} \sf $\infty$ Roy: [#1]}}

\newcommand{\sgsplong}{{Spiteful Greedy Swap Poisson Process}\xspace}
\newcommand{\sgsp}{{SGS-Poisson}\xspace}

\newcommand{\sm}{{Sub-Mat}\xspace}
\newcommand{\sw}{{Submodular Welfare}\xspace}

\title{If it is Good Then Drop it -- a Spiteful Poisson Process for Submodular Maximization}

\author{Ariel Kulik}
\affil{
  {Ben-Gurion University of the Negev},
  {Beer-Sheva},
  {Israel}.
\email{kulik@bgu.ac.il}
}
\author{Thiago Oliveira
\thanks{Supported in part by NSF awards CCF-2440113, 2504994, 2106444 and the 2026 Spring ARC-ACO Fellowship.}}
\affil{%
  { H. Milton Stewart School of Industrial and Systems Engineering, Georgia Tech},
  {Atlanta},
  {USA}.
\email{toliveira9@gatech.edu}
}
\author{Roy Schwartz}
\affil{%
  {Technion},
  {Haifa},
{Israel}.
\email{schwartz@cs.technion.ac.il}
}
\author{Mohit Singh\thanks{Supported in part by NSF awards CCF-2504994, 2106444.}}
\affil{%
  {H. Milton Stewart School of Industrial and Systems Engineering, Georgia Tech},
  {Atlanta},
  {USA}.
\email{mohit.singh@isye.gatech.edu}
}\date{}

\begin{document}
\maketitle


\abstract{We study the problem of maximizing a general and not necessarily monotone submodular function subject to a matroid independence constraint.
This problem has a rich history, with multiple algorithms using both discrete and continuous methods.
Recently, [Ganz-Rozenman, Kulik, Schwartz and Singh STOC `26] presented a novel hybrid approach based on a Poisson process that aims to combine the strengths of both discrete and continuous methods for the special case of the problem where the submodular function is monotone.

Our main result is a new Poisson process based hybrid algorithm that works for both non-monotone and monotone submodular functions, achieving an approximation of $ \nicefrac{1}{e}$ for the former and $1-\nicefrac{1}{e}$ for the latter.
The algorithm always maintains a feasible set and at random times governed by the Poisson process it performs a single element swap based on a best response set.
The new idea is that our algorithm is spiteful as it can purposefully discard an element that is in both the current set and the best response set.
Surprisingly, this spiteful step does not harm the approximation our algorithm achieves for monotone submodular functions but is necessary for the non-monotone case.
As applications, we obtain fast approximation algorithms for maximizing non-monotone submodular function subject to a general matroid independence constraint as well as faster algorithms for a partition matroid.}

\section{Introduction}\label{sec:introductions}
Submodular maximization is a classic topic in the theory of algorithms and combinatorial optimization that has been studied for several decades starting from the late $70$'s~\cite{NW78,nemhauser1978analysis,fisher1978analysis}.
A function $ f\colon 2^U\rightarrow \RR_+$ is submodular if it has the diminishing returns property:
$ f(A+i)-f(A)\geq f(B+i) - f(B)$ for every $A\subseteq B\subseteq U$ and $ i\in U\setminus B$.\footnote{For simplicity of presentation $A+i$ and $ A-i$ denote $A\cup \{ i\}$ and $ A\setminus \{ i\}$, respectively, for every $A\subseteq U$ and $i\in U$.}
Submodular functions are a rich class of functions that naturally arise in various settings, e.g., graph and hypergraph cut functions in combinatorics, rank functions in linear algebra, and entropy in probability, to name a few.
Thus, it is no surprise that submodular maximization has found many real-world applications, including but not limited to: influence maximization in social networks~\cite{kempe2003maximizing,HMS08,MR10}, sensor placement and data summarization~\cite{KLGVF08,KSG08,LB10,LB11,mirza16}, and feature selection in machine learning~\cite{LWKSB13,KEDNG17,BHZ22} (see also, e.g., a more comprehensive survey~\cite{B13}).
Moreover, submodular maximization has been studied in various computational settings, including parallel and distributed computation~\cite{BS18,BRS19,CQ19,CK19b,DENW16}, online computation~\cite{GNS23,FZ18,BFS19online,BFFG20,BHZ13,KPV13,KMZ18}, and streaming~\cite{AEFNS22,CGQ15,FLNSZ22,FNSZ23,KMZLK19}.

In this work, the problem of maximizing a submodular function $f$ subject to a matroid $\Mcal=(U,\Ical) $ independence constraint is considered and is denoted by \sm: $ \max \{ f(S)\colon S\in \Ical\}$. 
\sm has been the focus of intense research in the last several decades, as it captures classic algorithmic problems, e.g., maximum cut in undirected and directed graphs~\cite{GW95,Karp72,KKMO07,HZ01,LLZ02,,BHPZ26}, facility location~\cite{AS99,CFN77,CFN77b}, as well as generalized assignment~\cite{CCPV11,fleischer2006tight,feige2006approximation} and maximum $k$-coverage~\cite{F98,KMN99}.

A long sequence of works~\cite{LMNS10,V13,OV11,CVZ14,FNS11,EneN16a,BuchcinderF16a,BuchbinderF23a,BFNS14,GNS23} has focused on \sm with a general non-negative submodular objective $f$ that is not necessarily monotone.\footnote{$f$ is monotone if $ f(A)\leq f(B)$ for every $ A\subseteq B\subseteq U$.}
These works are based on a combination of discrete techniques, e.g., local search and greedy, as well as continuous techniques. 
A noteworthy algorithm is the measured continuous greedy algorithm of Feldman, Naor and Schwartz~\cite{FNS11} that is based on the seminal continuous greedy algorithm of Calinescu, Chekuri, Pal and Vondr\'{a}k~\cite{CCPV11} for \sm with a monotone objective $f$.
The measured continuous greedy algorithm achieves an approximation of $ \nicefrac{1}{e}$ for \sm. 
Moreover, all subsequent improvements (including the current best known algorithms) use the measured continuous greedy algorithm as a central building block: Ene and Nguyen~\cite{EneN16a} present an improved guarantee of $ \approx \nicefrac{1}{e}+0.004$, Buchbinder and Feldman~\cite{BuchcinderF16a} further push the approximation to $ \approx \nicefrac{1}{e}+0.017$, while Buchbinder and Feldman~\cite{BuchbinderF23a} provide the current best known approximation of $ \approx \nicefrac{1}{e}+0.033$.  

In the continuous approach, one approximately solves a non-convex relaxation to \sm defined via the multilinear extension $F\colon [0,1]^U\rightarrow \RR_+$ that is defined as:
$F(\mathbf{x})\triangleq \sum _{S\subseteq U}f(S)\cdot \Pi _{i\in S}x_i \cdot \Pi _{i\notin S}(1-x_i) $.
The multilinear extension carries an intuitive probabilistic interpretation of $ F(\mathbf{x})=\mathbb{E}_{R\sim \mathbf{x}}[f(R)]$, where $ R\sim \mathbf{x}$ denotes a random subset in which every element $i\in U$ is independently chosen to $R$ with a probability of $x_i$.

While powerful, it is known that the above continuous approach has several drawbacks (see, e.g.,~\cite{GanzKSS26a}).
First, the implementation of a continuous time process as an algorithm requires discretization.
Therefore, to bound the error this discretization incurs, sufficiently small steps are needed, typically resulting in a large number of iterations.
Second, the continuous approach usually requires numerous evaluations of the gradient of $F$, where each such evaluation is performed by sampling values of $f$, which can be computationally expensive.
Third and last, rounding of the fractional solution found by the continuous algorithm is required.

In addition to the continuous approach, discrete techniques, e.g., local search~\cite{LMNS10} and randomized greedy~\cite{BFNS14,GNS23}, were applied to \sm with a non-monotone objective $f$.
While these approaches fail to obtain the $\nicefrac{1}{e}$ guarantee of the measured continuous greedy algorithm (\cite{LMNS10} and~\cite{GNS23} achieve approximations of $\nicefrac{1}{4}$ and $\approx 0.274$ respectively), they benefit from two facts.
First, no discretization of a continuous time process is required.
Second, these approaches work directly with $f$ and therefore do not require gradient computation of the multilinear extension $F$ or rounding, leading to simpler and faster algorithms.

Recently, Ganz-Rozenman, Kulik, Schwartz and Singh~\cite{GanzKSS26a}, introduced a hybrid approach that aims to combine the strengths of both continuous and discrete approaches, resulting in a Poisson process based algorithm for \sm with a monotone objective $f$.
The main contribution of~\cite{GanzKSS26a} is conceptual: providing a new method for designing good algorithms for \sm with a monotone objective $f$.
Moreover, byproducts of this hybrid approach are fast approximation algorithms for several special cases of \sm with a monotone objective $f$, perhaps most notable is the special case where $\Mcal$ is a partition matroid.

Unfortunately, the Poisson process algorithm of~\cite{GanzKSS26a} heavily relies on the fact that the submodular objective $f$ is monotone.
Thus, extending this hybrid approach to a general submodular objective $f$ that is not necessarily monotone is an interesting challenge and the main focus of this work.

\subsection{Our Results}\label{sec:results}

Our main contribution is conceptual, as we prove that the Poisson process based hybrid approach can also be applied to non-monotone submodular objectives $f$.

Inspired by \cite{GanzKSS26a}, we present a new stochastic process which we denote as \sgsplong (\sgsp).
The internal state of \sgsp consists of a single independent set $ A\in \Ical$ that is initialized to an empty set.
Starting from time $ t=\varepsilon$, a non-homogeneous Poisson process with rate $\lambda(t)=\nicefrac{k}{t}$, where $k$ is the rank of the matroid $\Mcal$, is executed.
If the next event of the Poisson process occurs at time $t$: 
\begin{enumerate}
    \item A base $ Z(t)\in \Bcal$ that maximizes $ \sum _{j\in Z(t)}\left( F(t\indicator_{A}\vee\indicator_{\{ j\}})-F(t\indicator_{A})\right)$ is computed.\footnote{Standard arguments imply that it can be assumed without loss of generality that $Z(t)$ is a base by adding $k$ dummy elements to $U$ that contribute zero to $f$.}
    \item A map $h\colon Z(t)\rightarrow A\cup \{\perp \}$ is constructed such that:
    (a) for every $ j\in Z(t)\cap A$: $h(j)=j$; (b) for every $ j\in Z(t)$: $ A-h(j)+j\in \Ical$; and (c) for every $i\in A$ there is exactly a single $ j\in Z(t)$ such that $ h(j)=i$.
    ($\perp \notin U$ denotes a \emph{non-element} and thus $ A-\perp = A$).
    \item A single uniform random element $ j\in Z(t)$ is selected and:
    \begin{enumerate}
        \item $h(j)$ is swapped with $j$: $ A\leftarrow A-h(j)+j$.
        \item if $ h(j)=j$ then with probability $t$,  $h(j)$ is dropped: $ A\leftarrow A-h(j)$.\label{sgsp-drop}
    \end{enumerate}
\end{enumerate}
The above steps are repeated every time an event occurs, stopping at time $t=1$.

The novelty of \sgsp lies in step \eqref{sgsp-drop}, where an element that is both in the current solution $A$ and the best base $Z(t)$, is dropped with a probability of $\nicefrac{t}{k}$ if an event of the Poisson process occurs at time $t$: the element is selected with a probability of $\nicefrac{1}{k}$ and then dropped with a probability of $t$.
Hence, the random process is \emph{spiteful} since it can purposefully drop good elements, i.e., elements in $Z(t)$, that are already in the solution $A$ at time $t$.
Surprisingly, this spiteful step not only plays a major role in coping with a non-monotone objective $f$, but it also does not harm the approximation one can obtain when $f$ is monotone (see Theorem~\ref{thrm:Poisson} below).

One can note that \sgsp retains the same key features of the hybrid approach of~\cite{GanzKSS26a}.
First, there is no discretization since the Poisson process can be simulated by directly sampling the time of the next event.
Second, no rounding is required since the algorithm operates directly on independent sets.
Third and last, the expected number of single element operations (swapping two single elements, adding a single element, and removing a single element) is very low and at most only $ k\ln{(\nicefrac{1}{\varepsilon})}$.

The following theorem states the approximation \sgsp achieves for \sm with a non-monotone and monotone objective $f$.
\begin{theorem}\label{thrm:Poisson}
Given a submodular function $ f\colon 2^U\rightarrow \RR_+$, a matroid $ \Mcal=(U,\Ical)$,  and a starting time $ 0<\varepsilon \leq 1$, \sgsp finds $ A\in \Ical$ satisfying $ \mathbb{E}[f(A)]\geq (1-\varepsilon)(\nicefrac{1}{e}) f(O)$ in case $f$ is non-monotone and $ \mathbb{E}[f(A)]\geq (1-\varepsilon)(1-\nicefrac{1}{e})f(O)$ in case $f$ is monotone.
Here $O=\argmax \{ f(S):S\in \Ical\}$ is an optimal independent set.
\end{theorem}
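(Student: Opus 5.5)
We track the process through the marginal distribution of the current set. Let $A(t)$ be the set held by \sgsp at time $t$, and let $\mathbf{p}(t)\in[0,1]^U$ be $p_i(t)=\Pr[i\in A(t)]$. We will lower bound $\frac{d}{dt}\mathbb{E}[g(t)]$ for the potential $g(t)=F(t\indicator_{A(t)})$ and integrate from $\varepsilon$ to $1$. At $t=1$ we have $g(1)=f(A(1))$, so this gives the theorem directly. Since $A$ changes only at events of rate $k/t$ and $t$ also moves continuously, the generator has two parts:
\[
\frac{d}{dt}\mathbb{E}[g]=\mathbb{E}\Big[\sum_{i\in A}\partial_iF(t\indicator_A)\Big]+\frac{k}{t}\,\mathbb{E}[\Delta g],
\]
where $\Delta g$ is the expected jump at an event.

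\textbf{Bounding the jump.} Write $y=t\indicator_A$. A swap $A\to A-h(j)+j$ with $h(j)\neq j$ changes $g$ by at least
\[
\big(F(y\vee\indicator_j)-F(y)\big)\cdot t-\big(F(y)-F(y\wedge\indicator_{\overline{h(j)}})\big),
\]
by multilinearity and submodularity. Here the coordinate of $j$ goes from $0$ to $t$, so the gain is $t$ times the marginal $\partial_j F$, and that is $t\,(F(y\vee\indicator_j)-F(y))$ up to exact multilinear identities. If $h(j)=j$, the element is dropped with probability $t$, which changes $g$ by $-t\cdot(\text{coordinate } t)\cdot\partial_jF$ in expectation.

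\textbf{Cancellation against the drift.} Average over the uniform $j\in Z(t)$ and multiply by $k/t$. The removal terms then sum over all $i\in A$, since $h$ is onto $A$ (the $h(j)=j$ drop terms included). Each removal term is $t\,\partial_iF(y)$ per element, so the total is $-\sum_{i\in A}\partial_iF(y)$. This exactly cancels the continuous drift term. That exact cancellation is the purpose of the drop probability $t$: without it the elements of $A\cap Z$ would be uncompensated. What remains is
\[
\frac{d}{dt}\mathbb{E}[g]\ \ge\ \mathbb{E}\Big[\sum_{j\in Z(t)\setminus A}\big(F(y\vee\indicator_j)-F(y)\big)\Big],
\]
possibly with lower-order corrections that I would verify vanish. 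For $j\in A\cap Z$ we have $F(y\vee\indicator_j)-F(y)=(1-t)\partial_jF$, so I would rather keep the full sum over $Z$ and account for this carefully. By the choice of $Z(t)$ as a maximizing base and by submodularity,
\[
\sum_{j\in Z}\big(F(y\vee\indicator_j)-F(y)\big)\ \ge\ \sum_{j\in O}\big(\cdots\big)\ \ge\ F(y\vee\indicator_O)-F(y).
\]

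\textbf{Closing the differential inequality.} The monotone case gives $\frac{d}{dt}\mathbb{E}[g]\ge f(O)-\mathbb{E}[g]$. Starting from $g(\varepsilon)\ge0$, this yields $(1-e^{-(1-\varepsilon)})f(O)\ge(1-\varepsilon)(1-1/e)f(O)$.

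For the non-monotone case we need $\mathbb{E}[F(t\indicator_A\vee\indicator_O)]\ge(1-\max_i\Pr[\text{coordinate}])f(O)$, following the measured continuous greedy argument. The key is to show $\Pr[i\in A(t)]\le 1-\varepsilon/t$, or more precisely that each coordinate $t\cdot\Pr[i\in A(t)]$ is at most $1-\varepsilon/t$. I would derive this from an ODE on $q_i(t)=t\,p_i(t)$. The element $i$ enters at rate at most $1/t$ (it is selected with probability $1/k$ at rate $k/t$). Once $i\in A\cap Z$, it is dropped at rate $t\cdot(1/t)=1$. I would bound
\[
\frac{d}{dt}q_i\le p_i+t\Big(\frac{1-p_i}{t}-\cdots\Big)
\]
and show that $1-q_i\ge\varepsilon/t$. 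This is the main obstacle: verifying that the drop probability $t$ makes the fractional coordinate grow like $1-\varepsilon/t$ uniformly, irrespective of how $Z$ is chosen. The drop step supplies the needed outflow when $i\in Z$, and when $i\notin Z$ the swap can only remove $i$.

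Given that bound, the Lovász-extension/threshold lemma yields $F(y\vee\indicator_O)\ge(\varepsilon/t)f(O)$. We then get $\frac{d}{dt}\mathbb{E}[g]\ge(\varepsilon/t)f(O)-\mathbb{E}[g]$, with concavity of $F$ along positive directions handling the expectation over $A$. Solving this ODE gives $\mathbb{E}[g(1)]\ge\varepsilon\ln(1/\varepsilon)f(O)/e$. That is the wrong form, so the right invariant is presumably $\max$ coordinate $\le 1-e^{-\ln(t/\varepsilon)}$-type growth, i.e. the coordinate satisfies $dq\le(1-q)\,dt/t$. This gives $1-q\ge\varepsilon/t$ and hence $\frac{d}{dt}\mathbb{E}[g]\ge\frac{\varepsilon}{t}f(O)-\mathbb{E}[g]$ again. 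To reach $1/e$ I would switch to the time variable $s=\ln(t/\varepsilon)$, in which the rate is homogeneous. There the analysis mirrors measured continuous greedy and gives $e^{-1}(1-\varepsilon)$ after adjusting the horizon. I expect the exact bookkeeping of this reparametrization, reconciled with the $1-\varepsilon$ loss, to be the delicate step.
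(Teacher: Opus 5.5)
Your generator computation and the monotone case follow the paper (Lemmas~\ref{lem:cond-pot} and~\ref{lem:potential_average}), but the ``exact cancellation'' is misdescribed. After scaling by $k/t$, an element $i\in A\setminus Z$ is swapped out and contributes $-\partial_iF(y)$. An element $i\in A\cap Z$ is only dropped with probability $t$, so it contributes $-t\,\partial_iF(y)$. The residue $(1-t)\partial_iF(y)=F(y\vee\indicator_i)-F(y)$ is not a lower-order correction. It is exactly what turns your sum over $Z\setminus A$ into the full sum over $Z$ that you need to compare with $O$. The drop step is something this computation must survive, not its purpose; its real role is the occupancy bound.

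The genuine gap is the non-monotone case. The bound you target, $1-t\,p_i(t)\ge\varepsilon/t$, is far too weak; as you found, it gives only $\varepsilon\ln(1/\varepsilon)/e$. The ODE $dq\le(1-q)\,dt/t$ you fall back on is not what the process satisfies. The substitution $s=\ln(t/\varepsilon)$ cannot help, because a change of variables leaves the solution of the same differential inequality unchanged. The missing observation is that every $i\in A$ leaves at rate at least $1$, whether or not $i\in Z(t)$. Since $i=h(j)$ for exactly one $j\in Z(t)$, it is selected at rate $\frac{k}{t}\cdot\frac1k=\frac1t$. It then leaves with probability $1$ if $j\neq i$, and with probability $t$ if $j=i$. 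Combined with an entry rate of at most $1/t$, this gives $p_i'\le\frac{1-p_i}{t}-p_i$, i.e.\ $(t\,p_i)'\le 1-t\,p_i$. Hence $t\,p_i(t)\le 1-e^{-(t-\varepsilon)}$ (Lemma~\ref{lem:occupancy}). Now apply Lemma~\ref{lem:lovasz-estimate} to the random set $O\cup R$, where $R\sim t\indicator_{A(t)}$ and $A(t)$ is itself random; no concavity argument is needed, since the lemma allows arbitrary correlations. This gives $\mathbb{E}[F(\indicator_O\vee t\indicator_{A(t)})]\ge e^{-(t-\varepsilon)}f(O)$, so $\frac{d}{dt}(e^tQ)\ge e^{\varepsilon}f(O)$. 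Integrating over $[\varepsilon,1]$ yields $Q(1)\ge(1-\varepsilon)e^{\varepsilon-1}f(O)\ge(1-\varepsilon)f(O)/e$, entirely in the original time variable.
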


There are two things to note regarding Theorem~\ref{thrm:Poisson}.
First, \sgsp works simultaneously for both non-monotone and monotone objectives $f$, achieving for the latter a tight approximation of $(1-\nicefrac{1}{e})$ (thus matching the tight known algorithms of~\cite{CCPV11,filmus2014monotone,BF24deter} as well as the recent Poisson process based algorithm of~\cite{GanzKSS26a}).
Second, \sgsp does not improve upon the best known approximation of $\approx \nicefrac{1}{e}+0.033$ for \sm with a non-monotone objective $f$ given by Buchbinder and Feldman~\cite{BuchbinderF23a} that is based on a continuous approach in which the measured continuous greedy algorithm is a key ingredient.
However, \sgsp improves upon all known discrete algorithms for \sm, i.e., local search~\cite{LMNS10} which achieves an approximation of $ \nicefrac{1}{4}$ and random greedy~\cite{BFNS14,GNS23} which achieves an approximation of $ \approx 0.274$, while matching the $\nicefrac{1}{e}$ guarantee of the measured continuous greedy itself~\cite{FNS11}.

\paragraph{Applications.}
Our applications include fast approximation algorithms for \sm for general and partition matroids.
Recall that a partition matroid $ \Mcal = (U,\Ical)$ is associated with a partition $(U_1,\ldots,U_k)$ of $U$ and a set $ S\in \Ical$ if and only if $ |S\cap U_j|\leq 1$ for every $1\leq j\leq k $.
The special case of \sm with a partition matroid is of special interest since it captures, e.g., the \sw problem with non-monotone utilities (see, e.g.,~\cite{FNS11,GNS23} and the references therein).

In what follows, the running time of an algorithm is measured by the number of function evaluations, i.e., value oracle queries, it performs (as this is the standard measure for these types of problems).
We assume value oracle access to either $f$ or its multilinear extension $F$ and present results for both.
Moreover, to abbreviate presentation the size of the universe $U$ is denoted by $n$ and $O=\argmax \{ f(S):S\in \Ical\}$ is an optimal independent set.

The following two theorems summarize the fast approximation algorithms.
\begin{theorem}\label{thrm:fastF}
Given a submodular function $ f\colon 2^U\rightarrow \RR_+$, a matroid $ \Mcal=(U,\Ical)$, and $ 0<\varepsilon \leq 1$, there exists an algorithm that performs in expectation $ O(nk\ln(\nicefrac{1}{\varepsilon}))$ evaluations of $F$ and finds $ A\in \Ical$ satisfying: $ \mathbb{E}[f(A)]\geq (1-\varepsilon)(\nicefrac{1}{e})f(O)$.
Moreover, if $\Mcal$ is a partition matroid, then there exists an algorithm achieving the same approximation that performs in expectation $ O(n\ln{(\nicefrac{1}{\varepsilon})})$ evaluations of $F$.
\end{theorem}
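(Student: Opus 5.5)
The plan is to implement \sgsp directly and read off the approximation guarantee from Theorem~\ref{thrm:Poisson}, so that the only remaining work is to bound the oracle cost of one event of the Poisson process. First I would simulate the non-homogeneous process with rate $\lambda(t)=\nicefrac{k}{t}$ on $[\varepsilon,1]$ exactly, by sampling inter-event times. The expected number of events is $\int_\varepsilon^1 \nicefrac{k}{t}\,dt = k\ln(\nicefrac{1}{\varepsilon})$, and sampling these times needs no oracle queries.

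\emph{General matroid.} At an event at time $t$, I would compute the weights $w_j = F(t\indicator_A\vee \indicator_{\{j\}}) - F(t\indicator_A)$ for all $j\in U$. This costs $n+1$ evaluations of $F$. The base $Z(t)$ maximizing $\sum_{j\in Z(t)} w_j$ is then found by the matroid greedy algorithm, using independence queries only and no value queries; dummy elements of weight $0$ ensure it is a base. The map $h$ comes from the standard exchange property between the independent set $A$ and the base $Z(t)$, namely a perfect matching in the exchange graph, extended by $\perp$ on the unmatched elements of $Z(t)$. Finding it costs no evaluations of $F$. Finally, the uniform choice of $j$ and the coin flip with probability $t$ are free.

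This alone gives $O(n)$ evaluations per event, hence $O(nk\ln(\nicefrac{1}{\varepsilon}))$ in expectation overall. Because the number of events is independent of the computations done at each event, multiplying the expected number of events by the per-event cost is justified (Wald's identity). The guarantee $(1-\varepsilon)(\nicefrac{1}{e})f(O)$ is exactly Theorem~\ref{thrm:Poisson}.

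\emph{Partition matroid.} Here the main obstacle is removing the factor $k$. The structure to exploit is that $Z(t)$ decomposes into one element per part, $Z(t)\cap U_\ell = \argmax_{j\in U_\ell} w_j$, and $h$ maps that element to $A\cap U_\ell$, or to $\perp$ if that part is empty. Choosing $j$ uniformly from $Z(t)$ is therefore the same as choosing a part $\ell$ uniformly at random and then taking the maximizer of $w$ inside $U_\ell$.

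So I would reorder the sampling: pick $\ell$ uniformly first, then evaluate $w_j$ only for $j\in U_\ell$. This costs $|U_\ell|+1$ evaluations of $F$. The resulting distribution over updates is identical to that of \sgsp, so Theorem~\ref{thrm:Poisson} still applies. The expected cost per event is $1+\nicefrac{n}{k}$, and multiplying by $k\ln(\nicefrac{1}{\varepsilon})$ expected events gives $O((n+k)\ln(\nicefrac{1}{\varepsilon})) = O(n\ln(\nicefrac{1}{\varepsilon}))$, using $k\le n$.

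The one subtlety I would check carefully is ties in the argmax. I would fix a deterministic tie-breaking rule so that the per-part choice coincides exactly with some valid $Z(t)$ selection of \sgsp.
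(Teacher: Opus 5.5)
Your proposal is correct and follows the paper's proof essentially step for step. For general matroids the paper likewise computes all $n+1$ weights, a max-weight base and an exchange map (Lemma~\ref{lem:genF}); for partition matroids it samples a part uniformly and takes the argmax inside it (Lemma~\ref{lem:partF}). The only difference is cosmetic: the paper checks the valid and above-average swap properties directly and applies the analysis of Section~\ref{sec:poisson}, whereas you argue that the update distribution coincides with that of \sgsp and invoke Theorem~\ref{thrm:Poisson}.

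One detail should be stated explicitly; the paper also leaves it implicit. In the partition case, add the zero-weight dummy elements one per part. Then the matroid stays a partition matroid, and when $f$ is non-monotone and $O$ misses a part, the per-part argmax is never forced onto an element of negative weight.
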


\begin{theorem}\label{thrm:fastf}
Given a submodular function $ f\colon 2^U\rightarrow \RR_+$, a matroid $ \Mcal=(U,\Ical)$, and $ 0<\varepsilon \leq 1$, there exists an algorithm that performs in expectation $O\left(\frac{nk^2}{\varepsilon^2}\ln^2{(\nicefrac{1}{\varepsilon})}\ln n\right)$
evaluations of $f$ and finds $ A\in \Ical$ satisfying: $ \mathbb{E}[f(A)]\geq (1-\varepsilon)(\nicefrac{1}{e})f(O)$.
Moreover, if $\Mcal$ is a partition matroid, then there exists an algorithm achieving the same approximation that performs in expectation $O\left(\frac{nk}{\varepsilon^2}\ln^2{(\nicefrac{n}{\varepsilon})}\right)$ evaluations of $f$.  
\end{theorem}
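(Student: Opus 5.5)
The plan is to implement \sgsp with value oracle access to $f$ only, replacing each exact evaluation of the weights
\[
w_j \triangleq F(t\indicator_A\vee\indicator_{\{j\}})-F(t\indicator_A)
\]
by a Monte Carlo estimate. I would then show that the analysis behind Theorem~\ref{thrm:Poisson} tolerates the resulting errors. The first step is a \emph{robust} version of Theorem~\ref{thrm:Poisson}. Suppose that at every event the base $Z(t)$ is only approximately optimal, in the sense that
\[
\sum_{j\in Z(t)} w_j \;\ge\; \max_{B\in\Bcal}\sum_{j\in B} w_j-\eta\, f(O).
\]
I would re-run the differential-inequality argument for $\E[F(t\indicator_A)]$ (equivalently $\E[f(A)]$) with this slack. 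In that argument each event at rate $\nicefrac{k}{t}$ contributes $\nicefrac1k$ of the base's total weight. So the loss is a drift term of order $\eta f(O)/t$, which integrates over $[\varepsilon,1]$ to $O(\eta\ln(\nicefrac1\varepsilon))f(O)$. Choosing $\eta=\Theta(\varepsilon/\ln(\nicefrac1\varepsilon))$ keeps the guarantee at $(1-O(\varepsilon))\nicefrac1e\, f(O)$, and rescaling $\varepsilon$ finishes this step.

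The second step is the estimation. One sample $R\sim t\indicator_A$ costs $n+1$ evaluations of $f$ (namely $f(R)$ and $f(R+j)$ for all $j$), and it yields an unbiased estimate of \emph{every} $w_j$ simultaneously. Each summand $f(R+j)-f(R)$ lies in $[-f(O),f(O)]$, because:
\begin{itemize}
\item the upper bound follows from $f(R+j)-f(R)\le f(\{j\})\le f(O)$, after discarding loops;
\item the lower bound follows from $f(R+j)-f(R)\ge -f(R)$ together with $f(R)\le f(O)$, which holds since $R\subseteq A\in\Ical$.
\end{itemize}
Hoeffding's inequality plus a union bound over the $n$ elements and over all events (their number is concentrated around $k\ln(\nicefrac1\varepsilon)$) then gives uniform additive accuracy for all $w_j$. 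We then run an exact max-weight base computation on the estimated weights, which needs no oracle calls. A failure event of small probability costs at most $O(\varepsilon) f(O)$ in expectation by nonnegativity of $f$.

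The main obstacle is the accounting of the accuracy needed per weight, and it is the step I am least sure of. Naively, an error $\delta f(O)$ per weight gives an error $k\delta f(O)$ on the base, so $\delta=\eta/k$. That requires $\tilde O(k^2/\varepsilon^2)$ samples per event, and hence $\tilde O(nk^3/\varepsilon^2)$ in total, a factor $k$ worse than the $O\big(\tfrac{nk^2}{\varepsilon^2}\ln^2(\nicefrac1\varepsilon)\ln n\big)$ claimed in Theorem~\ref{thrm:fastf}. I would first try to exploit that the analysis only uses the \emph{single uniformly chosen} $j\in Z(t)$ in expectation. The relevant error is then the average error over $Z(t)$, i.e.\ an average of $k$ estimates, each built from the same $m$ i.i.d.\ samples. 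Bounding the deviation of $\frac1k\sum_{j\in B}(\hat w_j-w_j)$ uniformly over bases $B$ is the crux. If the fixed-base concentration at scale $1/\sqrt{mk}$ cannot be made uniform over the adaptively chosen $Z(t)$, I would instead compare $Z(t)$ only against the fixed comparison base derived from $O$ used in the proof of Theorem~\ref{thrm:Poisson}. That reduces the needed accuracy to a single fixed linear functional plus the estimated value of $Z(t)$ itself, with only the $\ln n$ union bound remaining.

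For the partition matroid, $Z(t)$ decomposes into independent choices, one best element per part $U_i$. Since only one uniformly random $j\in Z(t)$ is used, I would first sample the part $i$ uniformly and then estimate weights only inside $U_i$ to find its best element. This costs $\tilde O(|U_i|/\varepsilon^2)$ evaluations, i.e.\ $\tilde O(n/(k\varepsilon^2))$ per event in expectation, and accuracy $\eta$ is needed only for this single per-part maximum. Multiplying by $k\ln(\nicefrac1\varepsilon)$ events and union bounding over $n/\varepsilon$ estimates gives the stated $O\big(\tfrac{nk}{\varepsilon^2}\ln^2(\nicefrac n\varepsilon)\big)$ bound.
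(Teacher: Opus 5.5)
Your proposal has a genuine gap. It sits exactly at the step you flag, and the fallbacks you suggest do not close it.

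\textbf{Correlated estimates.} All estimates $\hat w_j$ are built from the same samples $R_1,\dots,R_m$, so their errors are correlated and do not average out over a base. The relevant variable is $X_\ell=\sum_{j\in B}\bigl(f(R_\ell+j)-f(R_\ell)\bigr)$, whose range is $[-f(O),\mar(f,\Ical)]$, and $\mar(f,\Ical)$ can be $\Theta(k f(O))$. For example, with $f(S)=\min\{|S|,1\}$ under a rank-$k$ uniform matroid, all $\hat w_j$ with $j\notin A$ coincide. So there is no $1/\sqrt{mk}$ averaging effect even for a fixed base: Hoeffding only gives error of order $kf(O)/\sqrt m$ on a base sum, which forces $m=\Omega(k^2/\varepsilon^2)$.

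\textbf{Comparing only with $O$.} This does not help either. For instance, with $f(S)=k\cdot\1[S\neq\varnothing]+|S|$, every optimum is a base with $\sum_{j\in O}(f(\{j\})-f(\varnothing))=\Theta(kf(O))$. Moreover, $Z(t)$ is the empirical argmax, so bounding its overestimate $\sum_{j\in Z(t)}(\hat w_j-w_j)$ needs control uniform over candidate bases. You would get that either from per-element accuracy $\varepsilon f(O)/k$ or from a union bound over $\Ical$ on base sums of range $\Theta(kf(O))$. Either way you end up no better than the $\tilde O(nk^3/\varepsilon^2)$ you computed.

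\textbf{Missing idea: preprocessing.} The paper uses the preprocessing of \Cref{lema:imppre2}:
\begin{enumerate}
\item Build a random $\bar S\in\Ical$ by randomized greedy steps while the maximum sum of marginals exceeds $20V$.
\item Run \sgsp on $g(T)=f(T\cup\bar S)$ over $\Ical/\bar S$.
\end{enumerate}
Then $\mar(g,\Ical/\bar S)\le 80f(O)$, so $X_\ell\in[-f(O),80f(O)]$. Hoeffding for each fixed set plus a union bound over $|\Ical|\le(2n)^k$ sets (\Cref{claim:Zbound}) controls every base sum to within $O(\delta f(O))$. This uses only $m=O((k\ln n+\ln(1/\delta))/\delta^2)$ samples, i.e.\ $O(nm)$ evaluations per event. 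The cost of contracting is controlled by a submartingale argument giving $\E[g(O_g)+\tfrac12 g(\varnothing)]\ge(1-\delta)f(O)$. The $e^{-1}g(\varnothing)$ term of \Cref{lem:almost_average_poisson} absorbs the $\tfrac12 f(\bar S)$, so your robust lemma must keep the $f(\varnothing)$ term.

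\textbf{Partition matroid.} Your accounting here has the same factor-$k$ slip. Since $J$ comes from a uniformly random part, $\E[w_J]=\frac1k\sum_i w_{z_i}$. A total slack of $\eta f(O)$ on the base therefore requires the per-part loss to be $\eta f(O)/k$ on average, not $\eta f(O)$. Without preprocessing, your per-element approach then needs $m=\Theta(k^2\ln(n/\varepsilon)/\varepsilon^2)$ and gives $\tilde O(nk^2/\varepsilon^2)$. Your own figures, $\tilde O(n/(k\varepsilon^2))$ per event times $k\ln(1/\varepsilon)$ events, multiply to $\tilde O(n/\varepsilon^2)$ rather than the $nk$ you state, which signals the error. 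The paper instead runs the general-matroid swap restricted to one random part, with the same $m=O(k\ln n/\varepsilon^2)$ that preprocessing makes available, at cost $O(\frac nk m)$ per event.

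\textbf{The drift term.} The slack does not enter as $\eta f(O)/t$. The rate $\nicefrac kt$ multiplies $F(t\1_{S+j})-F(t\1_S)=t\,w_j$ (\Cref{lem:multi-lin}), so the loss rate is $\eta f(O)$ (\Cref{lem:almost_potential_average}). Thus $\eta=\Theta(\varepsilon)$ suffices, and your choice $\eta=\varepsilon/\ln(1/\varepsilon)$ would add an unnecessary $\ln^2(1/\varepsilon)$ factor to the sample count.
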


Let us compare our fast approximation algorithm for \sm when $\Mcal$ is a general matroid and the algorithm has value oracle access to $f$ with known results.
As previously noted, algorithms achieving an approximation strictly better than $\nicefrac{1}{e}$ are known~\cite{EneN16a, BuchcinderF16a,CGLXZ26, BuchbinderF23a}.
However, all these algorithms perform a large polynomial in $n$ number of value queries to $f$, estimated in the range from $O(n^5)$ to $O(n^{11})$.
Fast approximation algorithms, achieving the same approximation of $ (1-\varepsilon)(\nicefrac{1}{e})$ as in Theorem~\ref{thrm:fastf} are known, where the algorithm of Buchbinder and Feldman~\cite{BuchbinderF24a} performs $O(2^{O(1/\varepsilon^4)}n\ln{k})$ evaluations and the algorithm of Segui-Gasco and Shin~\cite{Segui-GascoS17a} performs $O(n^3k^2\ln^2{(\nicefrac{n}{\varepsilon})}/\varepsilon^4)$ evaluations in the worst case.
It is important to note, when considering the number of value oracle queries to $f$, that the former has a large exponential dependence on $ \nicefrac{1}{\varepsilon}$ and the latter is strictly slower than the guarantee of Theorem~\ref{thrm:fastf}.

Faster algorithms are known, however those return weaker approximation guarantees, e.g., Ganz, Nuti and Schwartz~\cite{GNS23} provide a weaker approximation of $ \approx 0.274$ using $ O(nk)$ evaluations and Han, Cao, Cui and Wu~\cite{HanCC20a} provide a weaker approximation of $ (\nicefrac{1}{4}-\varepsilon)$ using $ O(n\ln{(\nicefrac{k}{\varepsilon})}/\varepsilon)$ evaluations.

\subsection{Our Techniques}\label{sec:techniques}
\paragraph{\sgsplong.}
To present the intuition behind \sgsp, and specifically its spiteful step~\eqref{sgsp-drop}, we first briefly sketch the measured continuous greedy algorithm~\cite{FNS11}.
The reason is that \sgsp can be seen as the limit of rounding on the fly of the measured continuous greedy algorithm scaled by time when its step size $\delta$ approaches zero.

The inner state of the measured continuous greedy algorithm at time $t$ is $\mathbf{x}(t)$, where:
$(1)$ $\mathbf{x}(0)\leftarrow \mathbf{0}$; and
$(2)$ for every $ 0< t\leq 1$: let $ Z(t)\triangleq \argmax \{ \sum _{i\in Z}(F(\mathbf{x}(t)\vee \indicator _{\{ i\}})-F(\mathbf{x}(t)))\colon Z\in \Ical\}$ and set $ x_i(t+\delta)\leftarrow x_i(t)+\delta (1-x_i(t))$ for every $ i\in Z(t)$ and $ x_i(t+\delta)\leftarrow x_i(t)$ for every $ i\in U\setminus Z(t)$
(recall that it can be assumed without loss of generality that $Z(t)$ is a base).

Assume we want to round on the fly when moving from time $t$ to time $t+\delta$.
Specifically, $\mathbf{x}(t)$ is already rounded, i.e., $ \mathbf{x}(t)=t\indicator _{A(t)}$ for some random $ A(t)\in \Ical$ that denotes the inner state of \sgsp at time $t$.
The goal is to round $\mathbf{x}(t+\delta)$ into $(t+\delta)\indicator _{A(t+\delta)} $ for some $ A(t+\delta)\in \Ical$.
Note that since $ \mathbf{x}(t)=t\indicator _{A(t)}$ is already rounded, the definition of the measured continuous greedy algorithm gives:
\begin{align}
    \mathbf{x}(t+\delta) = t\indicator _{A(t)} + \delta \indicator _{Z(t)\setminus A(t)} + \delta (1-t)\indicator _{Z(t)\cap A(t)}. \nonumber 
\end{align}
Applying the randomized variant of pipage rounding~\cite{AS04,CCPV11} to $ \mathbf{x}(t+\delta)/(t+\delta)$, i.e., $\mathbf{x}(t+\delta)$ scaled by the time $t+\delta$, boils down to the following basic three operations:
\begin{enumerate}
\item w.p. $\approx 1-\delta |Z(t)\setminus A(t)|/t-\delta|Z(t)\cap A(t)|$ set $ A(t+\delta)$ to be $A(t)$.\label{rounding1}
\item w.p. $\approx \delta |Z(t)\setminus A(t)|/t$ swap a uniform random element $ j\in Z(t)\setminus A(t)$ with $h(j)$.\label{rounding2} 
\item w.p. $\approx \delta |Z(t)\cap A(t)|$ drop a uniform random element $ j\in Z(t)\cap A(t)$.\label{rounding3} 
\end{enumerate}
There are two important things to note.
First, operation~\eqref{rounding3} above directly follows from the $ (1-x_i(t))$ slowdown  in the definition of the measured continuous greedy algorithm.
Second, operation~\eqref{rounding3} above corresponds to the spiteful step~\eqref{sgsp-drop}, where each element of $ Z(t)\cap A(t)$ is dropped with a probability of $t$ conditional on it being chosen from $Z(t)$.

It is trivial to observe that if operation~\eqref{rounding1} is chosen, then from an algorithmic perspective nothing needs to be done (as $ A(t+\delta)$ is set to $A(t)$).
Since the random variant of pipage rounding is oblivious to the objective, all that is needed is to understand the time of the next swap, drop, or addition, in this continuous process as $ \delta$ approaches $0$.
Calculating these limits, one can show that this is exactly given by a Poisson process with rate $ \lambda(t)=\nicefrac{k}{t}$ (as in \sgsp) that we analyze directly.

\paragraph{Applications.}
Fast approximation algorithms with value oracle access to the multilinear extension $F$ follow rather directly from the simplicity of \sgsp: at each event of the Poisson process, the algorithm only needs to identify a suitable swap and perform one single-element update.

The main challenge is to obtain a fast implementation when the algorithm has access only to the value oracle of $f$.
To achieve this, two additional ingredients are required.

First, we introduce a general swap procedure that captures exactly the properties needed by the analysis of \sgsp.
This abstraction is stronger than merely choosing an approximately optimal best-response element (as was previously done by~\cite{GanzKSS26a} for the case the objective $f$ is monotone).
The reason for this is that the analysis for a non-monotone objective $f$ heavily relies on controlling the probability that each individual element appears in the solution.
Therefore, the swap procedure must also guarantee suitable upper bounds on the probability that any fixed element is selected.
This is the main distinction from the Poisson process based algorithm of~\cite{GanzKSS26a} for the case the objective $f$ is monotone, where weaker best-response guarantees are sufficient.

Second, we show how to implement such swap procedures using samples from $f$.
Since estimates of marginal values can have large variance on arbitrary instances, we first apply a preprocessing step, as is common in fast submodular maximization algorithms (see, e.g.,~\cite{BuchbinderFS14a,EN19}). 
We give a simplified analysis of this preprocessing and obtain tighter bounds via martingale arguments. After preprocessing, a small number of samples from $f$ suffices to mimic the required value oracle calls to the multilinear extension $F$, leading to a fast implementation of \sgsp with value oracle queries to $f$.

\subsection{Additional Related Work}\label{sec:relatedwork}
As previously mentioned, submodular maximization with a submodular objective $f$ that is not necessarily monotone, has been studied for more than two decades now via a variety of methods, including local search, randomized greedy and continuous approaches, and for various types of constraints.
Due to space constraints, only some of the most relevant works that were not previously discussed, are elaborated on (the reader is referred to a survey, e.g.,~\cite{Gonzalez2018HandbookV1} Vol. $1$ Chapter $42$, for a more comprehensive discussion).

The best known hardness result for \sm with an objective $f$ that is not necessarily monotone was given by Oveis Gharan and Vondr\'{a}k~\cite{OV11} for a partition matroid, who proved that any algorithm achieving an approximation better than $ \approx 0.478$ must query $f$ an exponential number of times.
This result was later extended by Qi~\cite{Qi23} to a uniform matroid, i.e., a cardinality constraint.

The above hardness stands in contrast to an approximation of $\nicefrac{1}{2}$ for the unconstrained problem that is obtained with only $ O(n)$ evaluations of $f$, that was given by Buchbinder, Feldman, Naor and Schwartz~\cite{BFNS15}.
The latter is known to be tight as a matching hardness of $ \nicefrac{1}{2}$ was given by Feige, Mirrokni and Vondr\'{a}k~\cite{FMV11}.
Another notable special case of \sm is when $\Mcal$ is a uniform matroid, i.e., a cardinality constraint.
In addition to all previously mentioned results for \sm, the following faster algorithms (that achieve worse approximations than the current best known) are given for a uniform matroid:
Tukan, Mualem and Feldman~\cite{TukanMF24a} achieve an approximation of $ \approx \nicefrac{1}{e}+0.017$ with $ O(n+k^2)$ evaluations of $f$;
Buchbinder, Feldman and Schwartz~\cite{BuchbinderFS14a} present an approximation of $ \nicefrac{1}{e}-\varepsilon$ with $ \min \{ n\ln{(\nicefrac{1}{\varepsilon})\varepsilon^2},n\ln{(\nicefrac{k}{\varepsilon})/\varepsilon+k\sqrt{n\ln{(\nicefrac{k}{\varepsilon})/\varepsilon}}}\}$ evaluations of $f$; and Sakaue~\cite{Sakaue20a} presents an approximation of $\nicefrac{1}{4}-\varepsilon$ with $ O(n+\nicefrac{k}{\varepsilon})$ evaluations of $f$.

It is worth noting that the recent Poisson process based algorithm for submodular maximization of Ganz-Rozenman, Kulik, Schwartz and Singh~\cite{GanzKSS26a} has already found applications, see, e.g.,~\cite{BFLS26} for its uses for streaming algorithms. 
Moreover, it is interesting to note that the use of a Poisson process in the design and analysis of algorithms was independently considered prior to~\cite{GanzKSS26a} in settings unrelated to submodular maximization, e.g., Makarychev, Schudy and Sviridenko~\cite{MSS12} used such an approach to rounding algorithms for the intersection of matroids.

\paragraph{Paper Organization.}
Section~\ref{sec:prelim} contains some required preliminaries.
\sgsp and its analysis appear in Section~\ref{sec:poisson}, where the fast approximation algorithms appear in Section~\ref{sec:applications}.


\section{Preliminaries}\label{sec:prelim}
In this section we briefly provide preliminaries regarding Poisson processes, matroids, and some properties of submodular functions. 

\paragraph{Poisson Processes.}
Let $ \{ N(t)\}_{t\geq \varepsilon}$ be a non-homogeneous Poisson process with rate function $ \lambda(t)$.
For convenience of presentation, we assume that the process starts at time $\varepsilon >0$ (for some given $\varepsilon$) and not $0$.
Thus, $ N(\varepsilon)=0$ and $N(t)$ denotes the number of events that occurred in the time interval $ [\varepsilon,t]$.
More generally, for every time interval $I$ we denote by $ N_I$ the number of events that occurred in interval $I$, e.g., if $ I=(t,t+\delta]$ then $ N_{(t,t+\delta]}=N(t+\delta)-N(t)$.

Our analysis requires the following properties of a Poisson process (see, e.g., \cite{ross2014introduction}, Chapter 5):
\begin{enumerate}
\item \label{Poisson-independ} If $ I$ and $J$ are disjoint intervals, then $ N_I$ and $ N_J$ are independent.
\item \label{Poisson-events} For every $ t\geq \varepsilon$: $\lim _{\delta \rightarrow 0^+}\frac{1}{\delta}\Pr[N_{(t,t+\delta]}=1]=\lambda(t)$ and $\lim _{\delta \rightarrow 0^+}\frac{1}{\delta}\Pr[N_{(t,t+\delta]}\geq 2]=0$.
\end{enumerate}

\paragraph{Matroids.} The following definition regarding matroid exchange maps is required that always exist (see Corollary 39.12a, \cite{schrijver2003combinatorial}).
\begin{definition}
\label{def:exchange}
    Given a matroid $\Mcal=(U,\Ical)$, a base $Z$ of $\Mcal$, and an independent set $A\in \Ical$, a matroid exchange map is a  function $h: Z\to A\cup\{\perp \}$ satisfying:
\begin{enumerate}
    \item For every $ j\in Z$: $A-h(j)+j\in \Ical$.
    \item For every $j\in Z\cap A$: $h(j)=j$.
    \item For every $i\in A$ there exists exactly one $j\in Z$ such that $h(j)=i$.
\end{enumerate}
\end{definition}
Recall that $\perp\notin U$ denotes a non-element, and thus $ A-\perp=A$ and \(\1_\perp = \vc{0}\).

\paragraph{Submodular Functions.}
We require the following properties of submodular functions (whose proofs appear in Appendix~\ref{app:properties}).


\begin{restatable}{lemma}{multilin}
\label{lem:multi-lin}
    Let \(\vc{a} \in [0,1]^n\) and fix \(k \in [n]\) such that \(a_k < 1\).
Let \(u, \ell \in [0,1]\) such that \(u \leq 1 - a_k\) and \(\ell \leq a_k\). Then
\[
    F(\vc{a} + u \mathbf{e}_k) - F(\vc{a} - \ell \mathbf{e}_k)
    =
    \frac{u + \ell}{1 - a_k} (F(\vc{a} \vee \mathbf{e}_k) - F(\vc{a})).
\]
\end{restatable}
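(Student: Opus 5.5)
The plan is to use the fact that the multilinear extension $F$ is affine in each single coordinate when all other coordinates are held fixed. This is immediate from the definition $F(\mathbf{x})=\sum_{S\subseteq U} f(S)\prod_{i\in S}x_i\prod_{i\notin S}(1-x_i)$. Each term contains either the factor $x_k$ (if $k\in S$) or the factor $1-x_k$ (if $k\notin S$), and never both.

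Concretely, I will define $g(s)\triangleq F(\vc{a}+(s-a_k)\mathbf{e}_k)$ for $s\in[0,1]$, which sets the $k$-th coordinate to $s$ and leaves all others as in $\vc{a}$. Grouping terms by whether $k\in S$ gives
\[
g(s)=s\,F(\vc{a}\vee\mathbf{e}_k)+(1-s)\,F(\vc{a}\wedge \mathbf{e}_{-k}),
\]
where the last term denotes $\vc{a}$ with the $k$-th coordinate set to $0$. In words, $g(s)=\E[f(R)]$ with $R$ sampled conditionally on $k\in R$ or $k\notin R$. Hence $g$ is affine with slope $c\triangleq g(1)-g(0)$.

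The identity then follows from three evaluations of this affine function.
\begin{itemize}
\item Since $u\le 1-a_k$ and $\ell\le a_k$, both points $a_k+u$ and $a_k-\ell$ lie in $[0,1]$. Therefore
\[
F(\vc{a}+u\mathbf{e}_k)-F(\vc{a}-\ell\mathbf{e}_k)=g(a_k+u)-g(a_k-\ell)=(u+\ell)\,c .
\]
\item Also, since $\vc{a}\vee\mathbf{e}_k$ has $k$-th coordinate $1$,
\[
F(\vc{a}\vee\mathbf{e}_k)-F(\vc{a})=g(1)-g(a_k)=(1-a_k)\,c .
\]
\item Because $a_k<1$, we may divide by $1-a_k$ to solve for $c$. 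Substituting into the first equation gives the claimed formula.
\end{itemize}

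There is no real obstacle here. The only points needing care are confirming that the perturbed vectors stay inside $[0,1]^n$, which is where the hypotheses $u\le 1-a_k$ and $\ell\le a_k$ are used, and noting that $a_k<1$ makes the division legitimate. The argument does not use submodularity at all; only multilinearity is needed.
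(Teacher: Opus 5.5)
Your proof is correct and takes essentially the same approach as the paper: both use the fact that $F$ is affine in the $k$-th coordinate, so each of the two differences equals the common slope $\nabla_k F(\vc{a})$ times the change in that coordinate. Because you multiply by the slope rather than dividing by $u+\ell$, your version also avoids the paper's separate treatment of the case $u=\ell=0$.
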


\begin{restatable}{lemma}{lovaszestimate}
    \label{lem:lovasz-estimate}
    Let \(f\) be a non-negative submodular function, let \(p, q \in [0,1]\),
    let $A$ be a set and $R$ be a random set that includes  $\Pr[i\in R]\geq p$ for each $i\in A$ and $\Pr[i\in R]\leq q$  for each $i\not \in A$. Then 
    $$\E[f(R)]\geq (p-q) f(A).$$
\end{restatable}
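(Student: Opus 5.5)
The plan is to go through the Lovász extension. Write $x_i \triangleq \Pr[i\in R]$ for every $i\in U$, and let $\hat f(\vc{x}) \triangleq \int_0^1 f(T_\theta)\,d\theta$ with $T_\theta \triangleq \{i\in U : x_i\geq \theta\}$. The proof has two steps: first $\E[f(R)]\geq \hat f(\vc{x})$, then $\hat f(\vc{x})\geq (p-q)f(A)$. If $p\leq q$ the claim is trivial, since the right-hand side is non-positive and $f\geq 0$. So assume $p>q$ throughout.

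For the second step, fix $\theta\in(q,p]$. Every $i\in A$ has $x_i\geq p\geq \theta$, so $A\subseteq T_\theta$. Every $i\notin A$ has $x_i\leq q<\theta$, so $i\notin T_\theta$. Hence $T_\theta=A$ on an interval of length $p-q$. On the rest of $[0,1]$ the integrand $f(T_\theta)$ is non-negative. Therefore $\hat f(\vc{x})\geq (p-q)f(A)$.

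The first step is the standard fact that for submodular $f$ the Lovász extension equals the convex closure: among all distributions over subsets with marginal vector $\vc{x}$, the expectation of $f$ is minimized by the threshold (chain) distribution $T_\theta$ with $\theta\sim U[0,1]$. I would cite this, and also include a short uncrossing sketch. The distribution of $R$ is a finite mixture of sets. Take two sets $S,T$ in its support that are incomparable, each carrying mass $\mu=\min\{\Pr[R=S],\Pr[R=T]\}$. Move that mass $\mu$ from each of $S,T$ onto $S\cap T$ and $S\cup T$. This preserves every marginal $x_i$, because each element appears in $S\cap T$ and $S\cup T$ together exactly as often as in $S$ and $T$ together. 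It does not increase $\E[f]$, because $f(S\cap T)+f(S\cup T)\leq f(S)+f(T)$.

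The step I expect to be the main obstacle is showing the uncrossing terminates in a chain. A potential such as $\sum_S \Pr[S]\,|S|^2$ strictly increases with each uncrossing and is bounded, but to avoid a convergence argument I would simply appeal to the known result. Once the support is a chain $S_1\subsetneq\cdots\subsetneq S_m$ with the marginals $\vc{x}$, that distribution is exactly the law of $T_\theta$. Its expectation is therefore $\hat f(\vc{x})$, which gives $\E[f(R)]\geq\hat f(\vc{x})$ and completes the proof.
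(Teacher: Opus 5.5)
Your proposal is correct and follows the paper's proof: lower-bound $\E[f(R)]$ by the Lov\'asz extension at the marginal vector using the convex-closure property, note that the threshold set equals $A$ for every threshold in $(q,p]$, and use non-negativity of $f$ for all other thresholds (the case $p\leq q$ is trivial). The uncrossing sketch is optional, and you did not finish its termination argument, but that is not a gap because you cite the convex-closure fact just as the paper does.
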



\section{The \sgsplong}\label{sec:poisson}

Our goal in this section, is to prove \Cref{thrm:Poisson}.
Let \(\Mcal = (U, \Ical)\) be a matroid.
Let \(k\) denote the rank of the matroid.
We shall assume that we have at least \(k\) dummy elements that can be added to any independent set of size strictly less than $k$. Thus given any independent set we can extend it to a base by adding dummy elements.

\subsection{The Process}
\label{subsec:process}

We begin by defining a basic subroutine used in \sgsp, which we call a \emph{valid swap procedure}.
We use $\perp$ to denote an empty element, thus \(S - \perp = S\) for any set \(S\). 
\begin{definition}
\label{def:valid-swap-pair}
A \emph{swap procedure} is valid if,
given a set $A\in\Ical$ and a time $t\in(0,1]$, $\swap(t,A)$ returns a random pair 
$
(I,J)\in (A\cup\{~\perp~\})\times U
$
if it satisfies the following properties.

\begin{enumerate}[label=(\roman*)]
    \item\label{it-valid} With probability one, the update prescribed by $(I,J)$ is feasible. Namely, if \(I \neq J\), \(A - I + J \in \Ical\).

    \item\label{it-inter} If $J\in A$, then $I=J$.

    \item\label{it-i} For every $i\in A$,$
    \Pr[I=i]=\frac1k.$

    \item\label{it-j}  For  $j\in U$,
    $
    \Pr[J=j]\leq \frac1k.
    $
\end{enumerate}
\end{definition}
\noindent
Furthermore, we say a valid swap procedure is \emph{above-average} if, for every \(A \in \Ical\) and \(t \in (0,1]\),
\begin{equation}
\label{eq:above-avg}
    \E[ F(t\1_A\vee\1_{J})-F(t\1_A)]
    \geq
    \frac{1}{k}\left( F(t\1_A\vee\1_{\OPT})-F(t\1_A)\right).
\end{equation}
and \( \eta\)-\emph{almost-above-average}
if
\begin{equation}
\label{eq:almost-above-avg}
    \E[ F(t\1_A\vee\1_{J})-F(t\1_A)]
    \geq
    \frac{1}{k}\left( F(t\1_A\vee\1_{\OPT})-F(t\1_A)\right) - \frac{\eta}{k}.
\end{equation}
Let \(p_{ij}(t, A)\) denote the probability that the swap procedure will return the pair \((i,j)\).

\begin{algorithm}[!h]
\caption{\sgsp $(\Mcal,\varepsilon,\swap)$}
\SetKwInOut{Input}{input}
\SetKwInOut{Output}{output}

\SetAlgoNlRelativeSize{0}
\label{alg:Poisson}
 	
 	$ t\leftarrow \varepsilon$ and set $A \leftarrow \varnothing$.

    sample $\tau(t)$ the time of the next event after $t$ and $ t\leftarrow \tau(t)$.

    \While{$t<1$}{
 
    $(I, J)\leftarrow \swap(t,A)$. \label{poisson:item_sample}

    $A\leftarrow A-I+J$. \label{poisson:item_swap}

    \If{$I=J$}
    {with probability $t$, $A\leftarrow A-I$.\label{poisson:drop}} 

    sample $\tau(t)$ the time of the next event after $t$ and $ t\leftarrow \tau(t)$. 
    
    }
	
 	\Return $A$.
 	
\end{algorithm}
Given a valid \(\swap\) procedure we can formally define the \sgsp Algorithm. Our main goal in this section is to prove the following two lemmas that show the guarantee of the \sgsp under the presence of above-average swap procedure and $\eta$-almost above average swap procedure respectively. 

\begin{lemma}
\label{lem:average_poisson}
Given a matroid $ \Mcal=(U,\Ical)$ of rank $k$, a non-negative submodular function $f\colon 2^U \rightarrow \mathbb{R}_+$, a starting time $ 0<\varepsilon \leq 1$, and a right continuous above-average valid swap procedure $\swap$, Algorithm~\ref{alg:Poisson} finds $A\in \Ical$ satisfying $ \mathbb{E}[f(A)]\geq (1-\varepsilon)(\nicefrac{1}{e})\cdot f(\OPT) + e^{-1}f(\varnothing)$ in case \(f\) is non-monotone and $ \mathbb{E}[f(A)]\geq (1-\varepsilon)(1-\nicefrac{1}{e})\cdot f(\OPT) + e^{-1}f(\varnothing)$ in case \(f\) is monotone using in expectation $ k\ln{(1/\varepsilon)}$ swap procedure calls.
Here $\OPT=\argmax\{f(S) \colon S\in \Ical\}$ is an optimal base. 
\end{lemma}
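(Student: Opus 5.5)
The plan is to track the potential $g(t)\triangleq \E[F(t\1_{A(t)})]$, where $A(t)$ is the state of Algorithm~\ref{alg:Poisson} at time $t$. Note that $g(1)=\E[f(A)]$ for the output $A$. The idea is to show that $g$ satisfies the same differential inequality as the measured continuous greedy. The number of swap calls is immediate: the expected number of events is $\int_\varepsilon^1 \frac{k}{t}\,dt=k\ln(1/\varepsilon)$.

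\textbf{Computing the derivative.} I will condition on $A(t)=A$ and set $\vc{x}=t\1_A$. By Poisson property~\ref{Poisson-events}, over $(t,t+\delta]$ the change in $F$ has two parts. The first is the deterministic drift of $t$, which contributes $\delta\sum_{i\in A}\partial_iF(\vc{x})$. The second is a jump with probability $\approx\delta k/t$, which contributes $\frac{k}{t}\E[F(t\1_{A'})-F(t\1_A)]$, where $A'$ is the post-event state. The higher-order terms vanish by property~\ref{Poisson-independ} and right continuity of $\swap$, and this is exactly where right continuity is used. I then bound the jump case by case, using Lemma~\ref{lem:multi-lin} and $\partial_iF(\vc{x})=F(\vc{x}\vee\vc{e}_i)-F(\vc{x}-t\vc{e}_i)$ up to scaling.

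\emph{Case $J\notin A$.} Adding $J$ gains $t\,(F(\vc{x}\vee\1_J)-F(\vc{x}))$. Removing $I$ afterwards loses at most $t\,\partial_IF(\vc{x})$, by submodularity, since partial derivatives decrease as $\vc{x}$ grows.

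\emph{Case $I=J=i\in A$.} The state changes only through the drop, with probability $t$, which costs $t\cdot t\,\partial_iF(\vc{x})$.

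Write $q_i=k\Pr[J=i]$; this is the conditional probability of $J=i$ given $I=i$ by properties~\ref{it-inter} and~\ref{it-i}. Summing with the weights of property~\ref{it-i}, the drift term $\partial_iF$ for each $i\in A$ combines with the removal losses to leave
\[
\partial_iF-(1-q_i)\partial_iF-q_i\,t\,\partial_iF=q_i(1-t)\partial_iF=k\Pr[J=i]\,\bigl(F(\vc{x}\vee\vc{e}_i)-F(\vc{x})\bigr).
\]
So for every realization of $A$ the total rate is at least $k\,\E[F(t\1_A\vee\1_J)-F(t\1_A)]$, and the above-average property~\eqref{eq:above-avg} gives
\[
g'(t)\ \geq\ \E\bigl[F(t\1_{A(t)}\vee\1_{\OPT})\bigr]-g(t).
\]

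\textbf{Bounding the $\OPT$ term.} In the monotone case the right-hand term is at least $f(\OPT)$. In the non-monotone case I need marginal bounds. Let $p_j(t)\triangleq t\Pr[j\in A(t)]$, which is the probability that $j$ lies in the random set behind $F(t\1_{A(t)})$. Element $j$ enters at rate at most $\frac{k}{t}\cdot\frac1k$ by property~\ref{it-j}. When $j\in A$ it is dropped at rate $\frac1t\cdot t\Pr[J=j\mid I=j]$, and it also leaves by swaps. Differentiating should yield $p_j'\leq 1-p_j$ with $p_j(\varepsilon)=0$, hence $p_j(t)\leq 1-e^{-(t-\varepsilon)}$. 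This is the step I expect to be the main obstacle: the drop rate alone does not obviously give $p_j'\le 1-p_j$. The bookkeeping must pair the $\Pr[j\in A]$ contribution from the drift of $t$ with the exit rate $\frac1t$ from $I=j$, which is exactly what the spiteful drop makes work. Given the bound, Lemma~\ref{lem:lovasz-estimate} applied to $R\cup\OPT$ (with $p=1$ on $\OPT$ and $q\le 1-e^{-(t-\varepsilon)}$ elsewhere) gives $\E[F(t\1_{A(t)}\vee\1_{\OPT})]\geq e^{-(t-\varepsilon)}f(\OPT)$.

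\textbf{Integration.} Solving $g'\geq e^{-(t-\varepsilon)}f(\OPT)-g$ from $g(\varepsilon)=f(\varnothing)$ by multiplying by $e^{t}$ gives
\[
g(1)\geq (1-\varepsilon)e^{-1+\varepsilon}f(\OPT)+e^{-(1-\varepsilon)}f(\varnothing),
\]
which implies the claimed bound. In the monotone case, solving $g'\ge f(\OPT)-g$ gives $g(1)\geq (1-e^{-(1-\varepsilon)})f(\OPT)+e^{-(1-\varepsilon)}f(\varnothing)$. Since $1-e^{-(1-\varepsilon)}\ge(1-\varepsilon)(1-1/e)$ by concavity, this also implies the claimed bound.
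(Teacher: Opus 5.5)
Your proposal is correct and follows the paper's proof essentially step for step. It uses the same potential $\E[F(t\1_{A(t)})]$ and the same per-element cancellation giving rate $k\,\E[F(t\1_A\vee\1_J)-F(t\1_A)]$ (Lemma~\ref{lem:cond-pot}). It then uses the same occupancy bound $t\Pr[j\in A(t)]\le 1-e^{-(t-\varepsilon)}$ (Lemma~\ref{lem:occupancy}), followed by Lemma~\ref{lem:lovasz-estimate} and integration.

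The step you flag as the obstacle closes routinely. When $j\in A$, its total exit rate (swap-out plus spiteful drop) is $\frac{1-q_j}{t}+q_j\ge 1$, and when $j\notin A$ its entry rate is at most $\frac1t$. Hence $\frac{d}{dt}\Pr[j\in A(t)]\le\frac{1-\Pr[j\in A(t)]}{t}-\Pr[j\in A(t)]$; multiplying by $t$ and adding $\Pr[j\in A(t)]$ gives exactly $p_j'\le 1-p_j$.
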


\begin{lemma}
\label{lem:almost_average_poisson}
Given a matroid $ \Mcal=(U,\Ical)$ of rank $k$, a non-negative submodular function $f\colon 2^U \rightarrow \mathbb{R}_+$, a starting time $ 0<\varepsilon \leq 1$, and a right continuous  $\eta$-almost-above-average valid swap procedure $\swap$, Algorithm~\ref{alg:Poisson} finds $A\in \Ical$ satisfying $ \mathbb{E}[f(A)]\geq (1-\varepsilon)(\nicefrac{1}{e})\cdot f(\OPT) + e^{-1}\cdot f(\varnothing) - \eta$ in case \(f\) is non-monotone and $ \mathbb{E}[f(A)]\geq (1-\varepsilon)(1-\nicefrac{1}{e})\cdot f(\OPT) + e^{-1}f(\varnothing) - \eta$ in case \(f\) is monotone using in expectation $ k\ln{(1/\varepsilon)}$ swap procedure calls.
Here $\OPT=\argmax\{f(S) \colon S\in \Ical\}$ is an optimal base. 
\end{lemma}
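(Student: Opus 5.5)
We prove the lemma by tracking $g(t)\triangleq \E[F(t\1_{A(t)})]$, where $A(t)$ is the state of Algorithm~\ref{alg:Poisson} at time $t$, and deriving a differential inequality for it. At the end the quantity of interest is $g(1)=\E[f(A(1))]$, while at the start $g(\varepsilon)=F(\varepsilon\1_\varnothing)=f(\varnothing)$. The bound on the number of swap calls is immediate: the expected number of events is $\int_\varepsilon^1 \frac{k}{t}\,dt=k\ln(1/\varepsilon)$. We would first treat the $\eta$-almost-above-average version, since Lemma~\ref{lem:average_poisson} is the special case $\eta=0$, and then carry the $-\eta/k$ term through the argument.

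\textbf{Step 1 (one-step drift).} Condition on $A(t)=A$ and look at $(t,t+\delta]$. By Poisson properties~\ref{Poisson-independ} and~\ref{Poisson-events}, up to $o(\delta)$ there is one event with probability $\delta k/t$ and none otherwise. If there is no event, $F((t+\delta)\1_A)-F(t\1_A)=\delta\sum_{i\in A}\partial_iF(t\1_A)+o(\delta)$. If there is an event returning $(I,J)$, we use Lemma~\ref{lem:multi-lin} coordinate-wise:
\begin{itemize}
\item removing $I\ne J$ costs $\frac{t}{1-t}\bigl(F(t\1_A\vee\1_I)-F(t\1_A)\bigr)$ in first order;
\item adding $J$ gains $F(t\1_A\vee \1_J)-F(t\1_A)$, since coordinate $J$ rises from $0$ to $t$, which gives a factor $t$, while the event rate carries $1/t$;
\item when $I=J\in A$, the drop with probability $t$ costs $t\cdot\frac{t}{1-t}\bigl(F(t\1_A\vee\1_J)-F(t\1_A)\bigr)$.
\end{itemize}
By property~\ref{it-i}, $\Pr[I=i]=1/k$ for every $i\in A$. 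The removal terms, summed over $i\in A$ (including the spiteful drops), should then exactly cancel the growth term $\delta\sum_{i\in A}\partial_iF(t\1_A)$, because $\partial_i F(t\1_A)=\frac{1}{1-t}\bigl(F(t\1_A\vee \1_i)-F(t\1_A)\bigr)$. This is exactly the bookkeeping that the scaling by $t$ and the drop probability $t$ are designed for, and we would verify it carefully. The net drift should be
\[ g'(t)\ \ge\ \frac{1}{t}\Bigl(\E\bigl[F(t\1_A\vee\1_O)\bigr]-g(t)\Bigr)-\frac{\eta}{t}, \]
using~\eqref{eq:almost-above-avg}. Right continuity of $\swap$ and a dominated-convergence argument justify passing from $\delta$-increments to the derivative; we would handle this via the integral form $g(t_2)-g(t_1)\ge\int_{t_1}^{t_2}(\cdots)$.

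\textbf{Step 2 (lower bound on $F(t\1_A\vee\1_O)$).} In the monotone case, $F(t\1_A\vee\1_O)\ge f(O)$. In the non-monotone case we need a bound of the form $(1-t)f(O)$ up to the $\varepsilon$ start. The random set $R\sim t\1_{A(t)}\vee\1_O$ contains every element of $O$, and for $i\notin O$ we have $\Pr[i\in R]=t\Pr[i\in A(t)]$. So by Lemma~\ref{lem:lovasz-estimate} with $p=1$ it suffices to show $\Pr[i\in A(t)]\le 1-\varepsilon/t$, or a similar bound. We derive this from property~\ref{it-j} and the drop step. Let $x_i(t)=\Pr[i\in A(t)]$. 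Element $i$ enters at rate at most $\frac{k}{t}\cdot\frac1k=\frac1t$, but only when $i\notin A$; when $i\in A$ it leaves through the drop at rate at least $\frac{1}{t}\cdot t$ (and possibly also through swaps). This gives $x_i'\le \frac{1}{t}(1-x_i)-x_i$. Then $y=t\,x_i$ satisfies $y'\le 1-y$, so $y\le 1-e^{-(t-\varepsilon)}\cdot(\ldots)$, giving $t\,x_i(t)\le 1-e^{-t}$ up to $\varepsilon$-terms. Hence $F(t\1_A\vee\1_O)\ge e^{-t}f(O)$, up to $\varepsilon$, mirroring measured continuous greedy. This step, and in particular matching $1/e$ exactly with the $(1-\varepsilon)$ loss, is the main obstacle. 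If the bound is too weak, we would first try redoing the ODE for $y$ with exact initial data $y(\varepsilon)=0$.

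\textbf{Step 3 (solve the ODE).} In the non-monotone case, $(t g)'\ge e^{-t}f(O)\,(1-O(\varepsilon))-\eta$ should be integrated in the right variable. Because of the $1/t$ factors, the natural substitution is $s=\ln t$ or a multiplier like $e^{\cdot}$. Integrating from $\varepsilon$ to $1$ with $g(\varepsilon)=f(\varnothing)$ should yield $g(1)\ge(1-\varepsilon)e^{-1}f(O)+e^{-1}f(\varnothing)-\eta$. The monotone case, with $f(O)$ replacing $e^{-t}f(O)$, gives $(1-\varepsilon)(1-1/e)f(O)$. We would adjust the exact form of the drift (it may be $g'\ge \E[F(\cdot\vee\1_O)]-g$ rather than having a $1/t$ prefactor) once Step 1's bookkeeping is confirmed, choosing whichever integrating factor produces $e^{-1}$.
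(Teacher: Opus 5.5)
Your overall route is the paper's: the potential $\E[F(t\1_{A(t)})]$, a one-event generator computation for its right derivative, the occupancy ODE $x_i'\le \frac{1}{t}(1-x_i)-x_i$, then \Cref{lem:lovasz-estimate} and integration.

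Step~2 is exactly \Cref{lem:occupancy}. With $y(\varepsilon)=0$ it gives $t\,x_i(t)\le 1-e^{\varepsilon-t}$ with no further loss. One correction: it is the total exit rate of $i$, not the drop rate, that is at least $1$. Given $I=i$, the element leaves with probability $1$ if $J\neq i$ and with probability $t$ if $J=i$.

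The gap is Step~1, whose conclusion is false as written. Your own bookkeeping (event rate $k/t$, times a factor $t$ from raising coordinate $J$ from $0$ to $t$) yields $k\,\E[F(t\1_A\vee\1_J)-F(t\1_A)]$. By \eqref{eq:almost-above-avg} this is at least $F(t\1_A\vee\1_O)-F(t\1_A)-\eta$, with no $1/t$ prefactor. The $1/t$ version already fails for linear $f$ at $A=\varnothing$ with a best-response swap: the drift there is $f(O)$, not $f(O)/t$. Feeding it into your Step~3 form $(tg)'\ge\cdots$ would give $g(1)\ge(1-\varepsilon)f(O)$ for monotone $f$ and roughly $(1-\nicefrac{1}{e})f(O)$ for non-monotone $f$, contradicting known hardness. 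Leaving the drift to be chosen as ``whichever integrating factor produces $e^{-1}$'' does not complete the proof.

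The mechanism you describe is also not right, and this is exactly where the spiteful step matters. Write $\Delta_i=F(t\1_A\vee\1_i)-F(t\1_A)$. The growth term is $\sum_{i\in A}\Delta_i/(1-t)$. An element $i\in A$ is removed at rate $\frac{k}{t}\bigl[(\frac{1}{k}-p_{ii})+t\,p_{ii}\bigr]$, each time at cost $\frac{t}{1-t}\Delta_i$, for a total cost rate of $\sum_{i\in A}\bigl(\frac{1}{1-t}-k\,p_{ii}\bigr)\Delta_i$.

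So the cancellation is not exact. The residual $k\sum_{i\in A}p_{ii}\Delta_i$ is precisely the $J\in A$ part of $k\,\E[\Delta_J]$. If the cancellation were exact, you would only get $k\,\E[\Delta_J\,\1\{J\notin A\}]$. That is not bounded below by $k\,\E[\Delta_J]$ in general (e.g.\ for monotone $f$), so \eqref{eq:almost-above-avg} could not be applied.

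The swap $A-i+j$ also needs submodularity, since it is not the sum of two single-coordinate moves. You need $F(t\1_{A-i+j})-F(t\1_{A-i})\ge F(t\1_{A+j})-F(t\1_A)$, as in \Cref{lem:cond-pot}.

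With the correct drift $g'\ge\E[F(t\1_{A(t)}\vee\1_O)]-g-\eta$ and Step~2, the integrating factor $e^t$ gives $(e^tg)'\ge e^{\varepsilon}f(O)-e^t\eta$. Integrating from $\varepsilon$ to $1$,
\[
g(1)\ \ge\ (1-\varepsilon)e^{\varepsilon-1}f(O)+e^{\varepsilon-1}f(\varnothing)-\bigl(1-e^{\varepsilon-1}\bigr)\eta ,
\]
which gives the non-monotone claim. The monotone case is identical with $f(O)$ in place of $e^{\varepsilon-t}f(O)$, using $1-e^{\varepsilon-1}\ge(1-\varepsilon)(1-\nicefrac{1}{e})$.
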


\Cref{lem:average_poisson} with a right continuous above-average swap procedure implies \Cref{thrm:Poisson} that we provide in Section~\ref{sec:general}.

\subsection{Analysis of \sgsp}
Following the discussion in \Cref{sec:introductions}, 
our \sgsp process maintains an independent set
\(A(t)\) for every time \(\varepsilon \leq t \leq 1\).
As measured continuous greedy tracks the value of
\(F(x(t))\), we will look at the expected value of the indicator vector of \(A(t)\) scaled by \(t\). More formally,
define the potential
\[
Q(t) \coloneqq \E[F(t \1_{A(t)})].
\]
It will be useful for us to also look at this potential conditioned at the current state of the Poisson process.
Define
\[
V_{S, t}(r) = 
\E[F(r\1_{A(r)}) \; | \; A(t) = S]=
\sum_{T \in \Ical} F(r \1_T) \Pr[A(r) = T \;|\; A(t) = S].
\]

The analysis works by showing a rate of progress that must be satisfied by $V_{S,t}$ and thus by $Q(t)$. Lemma~\ref{lem:potential_average} states the progress when the valid swap procedure is above average. A crucial step in Measured Continuous Greedy is to bound the infinity norm of \(x(t)\).
In Lemma~\ref{lem:occupancy}, we do the analog analysis by bounding the probability an element belongs to the current independent set \(A(t)\). Finally, we solve the differential equation analyzing the progress of $Q(t)$ to obtain the proofs of Lemma~\ref{lem:average_poisson} and Lemma~\ref{lem:almost_average_poisson}. 
We begin by stating the following lemma about transition probabilities. Due to space restrictions, the proofs are deferred to Appendix~\ref{sec:appendix-proofs}. 
\begin{restatable}{lemma}{lemmatransition}
\label{lem:transit-prob}
Let \(t \in [\varepsilon, 1)\) and condition on the event \(A(t) = S\). If we run \Cref{alg:Poisson} with a valid swap procedure, it holds that
    \begin{align*}
    &\frac{\partial_+}{\partial r} \left(\Pr[A(r) = T \; | \; A(t) = S]\right)\big\rvert_{r = t}
    =\\
    &
    \begin{cases}
        \tfrac{k}{t}p_{ij}(t, S), & \exists i \in S \cup \{\perp\}, j \in U \setminus S, i \neq j : T = S - i + j\\
        kp_{ii}(t, S), & \exists i \in S: T = S - i\\
        \frac{-k}{t}
        \left(
        \sum_{i \in S \cup \{\perp\}}
        \sum_{j \in U: j \neq i} p_{ij}(t, S)
        +
        t\sum_{i \in S} p_{ii}(t, S)
        \right), & T = S\\
        0, &\text{otherwise}
    \end{cases}
    \end{align*}
\end{restatable}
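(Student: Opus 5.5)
We compute the one-sided derivative directly from the definition, writing
\[
\Pr[A(t+\delta)=T \mid A(t)=S] - \mathbbm{1}[T=S]
\]
and dividing by $\delta$ as $\delta\to 0^+$. The key is to decompose according to the number of Poisson events in $(t,t+\delta]$. By property~\ref{Poisson-independ}, the events in $(t,t+\delta]$ are independent of the history up to time $t$, so conditioning on $A(t)=S$ does not affect the distribution of $N_{(t,t+\delta]}$. Write $p_0,p_1,p_{\ge 2}$ for the probabilities of zero, one, and at least two events. By property~\ref{Poisson-events}, $p_1=\lambda(t)\delta+o(\delta)=\frac{k}{t}\delta+o(\delta)$ and $p_{\ge2}=o(\delta)$, hence $p_0=1-\frac{k}{t}\delta+o(\delta)$. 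Since all transition probabilities are bounded by $1$, the contribution of the case with at least two events is $o(\delta)$ and vanishes in the limit.

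\textbf{One event.} Condition on exactly one event in $(t,t+\delta]$, occurring at some time $s\in(t,t+\delta]$. The state just before $s$ is still $S$. The algorithm calls $\swap(s,S)$, obtaining $(I,J)$ with probabilities $p_{ij}(s,S)$.
\begin{itemize}
\item If $i\neq j$, it moves to $S-i+j$. By property~\ref{it-inter}, here $j\notin S$.
\item If $i=j$, it moves to $S-i$ with probability $s$, and stays at $S$ with probability $1-s$.
\end{itemize}
So the conditional probability of landing at $T$ is:
\begin{itemize}
\item $p_{ij}(s,S)$ for $T=S-i+j$;
\item $s\,p_{ii}(s,S)$ for $T=S-i$;
\item $\sum_{i\in S\cup\{\perp\}}(1-s)p_{ii}(s,S)$ for $T=S$.
\end{itemize}
The pairs with $i=\perp$, $j\notin S$ fall into the first case, giving $S+j$. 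A pair $(\perp,\perp)$ cannot occur since $J\in U$.

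Right continuity of the swap procedure in $t$ gives $p_{ij}(s,S)\to p_{ij}(t,S)$ as $s\to t^+$, uniformly over $s\in(t,t+\delta]$. Hence these conditional probabilities equal their values at time $t$ plus $o(1)$. One should also check that distinct pairs give distinct $T$, so no double counting occurs:
\begin{itemize}
\item $S-i+j$ with $j\notin S$ determines $(i,j)$ via $i=S\setminus T$ and $j=T\setminus S$.
\item $S-i$ has size $|S|-1$ and so is distinct from every $S-i+j$ with $j\notin S$.
\end{itemize}

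\textbf{Assembling.} For $T\neq S$, the probability is $p_1\cdot(\text{conditional prob})+o(\delta)$. Dividing by $\delta$ gives $\frac{k}{t}p_{ij}(t,S)$ in the swap case and $\frac{k}{t}\,t\,p_{ii}(t,S)=k\,p_{ii}(t,S)$ in the drop case. For $T$ not of either form, the limit is $0$.

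For $T=S$ we have
\[
\Pr = p_0 + p_1\sum_{i\in S\cup\{\perp\}}(1-t)p_{ii}(t,S)+o(\delta).
\]
Subtracting $1$ and using that the $p_{ij}(t,S)$ sum to $1$ gives
\[
-\tfrac{k}{t}\delta\Bigl(1-\sum_i(1-t)p_{ii}\Bigr)
=-\tfrac{k}{t}\delta\Bigl(\sum_{i}\sum_{j\neq i}p_{ij}+t\sum_i p_{ii}\Bigr),
\]
which matches the stated formula. Note that $p_{\perp\perp}=0$, so the sum may be taken over $i\in S$.

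The only delicate point is the passage from $p_{ij}(s,S)$ with $s$ random in $(t,t+\delta]$ to $p_{ij}(t,S)$. This is exactly where the right-continuity assumption on $\swap$ enters, together with the fact that $S$ and $U$ are finite, so there are finitely many pairs. Everything else is bookkeeping with the two Poisson properties.
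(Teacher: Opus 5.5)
Your proposal is correct and follows essentially the same route as the paper, which first isolates the one-event term (its Claim~\ref{cla:partial}, via independent increments and the Poisson rate properties) and then reads off the conditional transition probabilities case by case, exactly as you do. Two small differences favor your version: you state explicitly that passing from $p_{ij}(s,S)$ at the random event time $s\in(t,t+\delta]$ to $p_{ij}(t,S)$ needs right continuity of the swap procedure, which the paper asserts ``by definition'' and which the lemma's hypotheses omit, and for $T=S$ you use the normalization $\sum_{i,j}p_{ij}(t,S)=1$ directly rather than going through property~(iii).
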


We now characterize the rate of progress of $V_{S,t}$ and $Q_t$ for the above-average swap algorithm. Analysis for the $\eta$-almost above average swap algorithm is analogous and appears in the appendix.  The main challenge is to show that the differential equation below still holds despite the algorithm being spiteful and dropping elements. 
\begin{restatable}{lemma}{lempotaverage}
 \label{lem:potential_average}   
If \Cref{alg:Poisson} uses an above-average valid swap procedure, then for any \(t \in [\varepsilon, 1)\) and conditional on the event \(A(t) = S\).
    It holds that
    \[
    \frac{\partial_+}{\partial r} V_{S, t}(r) \big\rvert_{r = t} \geq F(t\1_S \vee \1_\OPT) - F(t\1_S).
    \]
    Therefore, it holds that
    \[
    Q'(t) \geq \E\!
    \left[
    F(\1_{\OPT}\vee t\1_{A(t)})
    \right]
    -
    Q(t).\]
 
\end{restatable}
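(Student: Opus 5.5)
We differentiate $V_{S,t}(r)=\sum_T F(r\1_T)\Pr[A(r)=T\mid A(t)=S]$ from the right at $r=t$ by the product rule. This gives two contributions.

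The first comes from the explicit dependence on $r$ with the state frozen at $S$. It equals $\frac{d}{dr}F(r\1_S)|_{r=t}=\sum_{i\in S}\partial_iF(t\1_S)$. By multilinearity, for $i\in S$ we have $\partial_iF(t\1_S)=\frac{1}{1-t}(F(t\1_S\vee \1_{\{i\}})-F(t\1_S))$, using \Cref{lem:multi-lin} with $u=1-t,\ell=0$. Equivalently, $\partial_iF(t\1_S)=\frac1t(F(t\1_S)-F(t\1_{S-i}))$, using \Cref{lem:multi-lin} with $u=0,\ell=t$.

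The second contribution is $\sum_T F(t\1_T)\,\frac{\partial_+}{\partial r}\Pr[A(r)=T\mid A(t)=S]|_{r=t}$, which we evaluate with \Cref{lem:transit-prob}:
\[
\sum_{i\in S\cup\{\perp\}}\sum_{j\ne i}\frac{k}{t}p_{ij}\bigl(F(t\1_{S-i+j})-F(t\1_S)\bigr)+\sum_{i\in S}k\,p_{ii}\bigl(F(t\1_{S-i})-F(t\1_S)\bigr).
\]

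Next we rewrite each swap term. A swap changes coordinate $i$ from $t$ to $0$ and coordinate $j\notin S$ from $0$ to $t$. Applying \Cref{lem:multi-lin} once in each coordinate, we get
\[
F(t\1_{S-i+j})-F(t\1_S)=t\,\partial_jF(t\1_{S-i})-t\,\partial_iF(t\1_S).
\]
Submodularity gives $\partial_jF(t\1_{S-i})\ge\partial_jF(t\1_S)$, and $\partial_jF(t\1_S)=F(t\1_S\vee\1_{\{j\}})-F(t\1_S)$ since $j\notin S$. Hence each swap term is at least $\frac{k}{t}p_{ij}\,t\,[\Delta_j-\partial_iF(t\1_S)]$, where $\Delta_j:=F(t\1_S\vee\1_{\{j\}})-F(t\1_S)$.

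The drop term equals $k\,p_{ii}\cdot(-t\,\partial_iF(t\1_S))$. For $i\in S$ we also have $\Delta_i=(1-t)\partial_iF(t\1_S)$, so the drop term can be written as $k\,p_{ii}[\Delta_i-\partial_iF(t\1_S)]$. This is exactly the spiteful step: the drop probability $t$ is what makes the coefficient match the swap terms.

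Summing, every pair $(i,j)$ returned by the swap procedure contributes $k\,p_{ij}[\Delta_j-\partial_iF(t\1_S)]$, with $\partial_\perp:=0$. By property (iii) of \Cref{def:valid-swap-pair}, $\sum_j p_{ij}=1/k$ for each $i\in S$. The negative parts therefore total $-\sum_{i\in S}\partial_iF(t\1_S)$, which exactly cancels the first contribution. What remains is $k\,\E[\Delta_J]$, and the above-average property \eqref{eq:above-avg} bounds this below by $F(t\1_S\vee\1_\OPT)-F(t\1_S)$.

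To pass to $Q$, note that $Q(r)=\sum_S\Pr[A(t)=S]V_{S,t}(r)$ by the Markov property, and take expectations over $S$. The right derivative of $Q$ is then justified by finiteness of the state space together with the right-continuity of the swap procedure. Since $Q$ is continuous, a right-derivative bound suffices for the ODE comparison that follows.

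The main obstacle is the bookkeeping of the product rule. Specifically, we must make sure the transition-rate contributions are evaluated at $F(t\1_T)$, and that the cross term $\frac{\partial}{\partial r}F(r\1_T)$ times the change in probability vanishes in the limit. This holds because transitions occur with probability $O(\delta)$ and $F$ is Lipschitz in $r$. The case $J=\perp$ or $J$ a dummy element is handled by the same formula with $\Delta_J=0$.
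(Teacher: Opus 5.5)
Your proposal is correct and follows essentially the same route as the paper: product rule, \Cref{lem:transit-prob}, diminishing returns on the swap terms, property (iii) to cancel the $\sum_{i\in S}\nabla_i F(t\1_S)$ term, the multilinear identity to convert the drop term into $k\,p_{ii}\bigl(F(t\1_S\vee\1_{\{i\}})-F(t\1_S)\bigr)$, and then the above-average property followed by averaging over $S$. The only difference is cosmetic: you do the bookkeeping with $\partial_i F(t\1_S)$, while the paper uses the equivalent finite differences $F(t\1_S)-F(t\1_{S-i}) = t\,\partial_i F(t\1_S)$.
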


Now we bound the probability an element belongs to the current independent set \(A(t)\). The proof crucially relies on the fact that items are dropped in Line~\ref{poisson:drop} of the algorithm.

\begin{restatable}{lemma}{occupancy}
\label{lem:occupancy}
    Let \(\Mcal = (U, \Ical)\) be a rank \(k\) matroid.
    Let \(A(t)\) be the independent set at time \(t\) in a run of \Cref{alg:Poisson} with a right continuous valid swap procedure.
    For every \(i \in U\), let
    \(p_i(t) \coloneqq \Pr(i\in A(t))\).
    Then, it holds that
    \[
    p_i(t) \leq \frac{1}{t} (1 - e^{-t + \varepsilon}).
    \]
\end{restatable}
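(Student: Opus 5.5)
We derive a differential inequality for $p_i(t)$ and integrate it from $t=\varepsilon$, where $p_i(\varepsilon)=0$ because $A(\varepsilon)=\varnothing$. The rates come from \Cref{lem:transit-prob}, or directly from the Poisson rate $\lambda(t)=k/t$ together with properties (iii) and (iv) of \Cref{def:valid-swap-pair}.

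Fix $i\in U$ and consider the interval $(t,t+\delta]$. By property~\ref{Poisson-events} of the Poisson process, an event occurs in this window with probability $\approx \delta k/t$, and two or more events occur with probability $o(\delta)$. We condition on $A(t)=S$ and split into two cases.

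\begin{itemize}
\item \emph{Case $i\notin S$.} The element $i$ can enter only if $J=i$ and $I\neq J$. By property~\ref{it-j}, this has probability at most $1/k$. Hence the entry rate is at most $\frac{k}{t}\cdot\frac1k=\frac1t$.
\item \emph{Case $i\in S$.} By property~\ref{it-i}, $\Pr[I=i]=1/k$. If $I=i\neq J$, then $i$ is swapped out. If $I=J=i$, then $i$ is dropped with probability $t$. So $i$ leaves at rate at least $\frac{k}{t}\cdot \frac1k\cdot t=1$. This is the weakest case, and it occurs exactly when $i$ is in the current set and the swap procedure chooses $J=i$, which property~\ref{it-inter} forces to have $I=i$.
\end{itemize}

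Averaging over $S$, we get
\[
\frac{\partial_+}{\partial t}p_i(t)\le \frac1t\,(1-p_i(t)) - p_i(t).
\]

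To make this derivative rigorous, write $\Pr[i\in A(t+\delta)]-\Pr[i\in A(t)]$ as the inflow minus the outflow. We condition on $A(t)$ and use the independence of Poisson increments over disjoint intervals together with the $o(\delta)$ bound on multiple events. The right continuity of the swap procedure ensures that the probabilities $p_{ij}(r,S)$ for $r\in(t,t+\delta]$ are close to $p_{ij}(t,S)$, so the limit of the difference quotient is governed by the rates at time $t$. Alternatively, sum the rates of \Cref{lem:transit-prob} over all $T\ni i$; this yields the same inequality, with $p_{ij}$ summed over $j=i$ for inflow and over $I=i$ for outflow.

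Now integrate. The solution of the equality $y'=\frac1t-y\left(1+\frac1t\right)$ with $y(\varepsilon)=0$ is what we compare against. Set $g(t)=t\,p_i(t)$. Then
\[
g'(t)=p_i(t)+t\,p_i'(t)\le p_i + 1-p_i-t\,p_i = 1-g(t),
\]
so $(e^{t}g(t))'\le e^{t}$. Integrating from $\varepsilon$, where $g(\varepsilon)=0$, gives $e^{t}g(t)\le e^{t}-e^{\varepsilon}$, that is,
\[
p_i(t)\le \frac{1}{t}\left(1-e^{-t+\varepsilon}\right).
\]
This integration uses only the right derivative. That is enough because $p_i$ is continuous: Poisson jumps in a window have probability $O(\delta)$. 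A continuous function with an upper-bounded right Dini derivative obeys the comparison principle.

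The main obstacle is the rigorous justification of the right-derivative inequality and of the comparison lemma with only one-sided derivatives. We handle it by applying \Cref{lem:transit-prob} conditioned on each $S$ and summing, with dominated convergence over the finitely many states.
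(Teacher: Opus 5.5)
Your proposal is correct and follows essentially the same route as the paper. You bound the entry rate by $1/t$ via property~\ref{it-j}, bound the exit rate below by $1$ via property~\ref{it-i} and the drop step, combine these into $\partial_+ p_i(t) \le \frac{1}{t}(1-p_i(t)) - p_i(t)$, and integrate using $\bigl(e^t\, t\, p_i(t)\bigr)' \le e^t$, exactly as the paper does with $r e^r p_i(r)$; your remarks on the continuity of $p_i$ and the one-sided (Dini) comparison argument make rigorous a step the paper treats informally.
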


Finally, we combine the results so far to solve a differential equation for the potential \(Q(t)\).

\begin{proof}[Proof of \Cref{lem:average_poisson}]

We start with the case when \(f\) is a monotone submodular function.
By \Cref{lem:potential_average}, we have
\[
Q'(r) \geq
\E\!\left[ F(\1_{\OPT}\vee r\1_{A(r)}) \right]
-
Q(r)
\geq
f(\OPT) - Q(r),
\]
where in the second inequality we used monotonicity. We now have
\[
\frac{\partial e^r Q(r)}{\partial r}
=
e^r Q(r) + e^rQ'(r)
\geq
e^r f(\OPT).
\]
By integrating from \(\varepsilon\) to \(t\), we obtain \(e^t Q(t) - e^\varepsilon Q(\varepsilon)
\geq (e^t - e^\varepsilon)f(\OPT)\).
Using the fact that \(Q(\varepsilon) = f(\varnothing)\), and evaluating the expression 
at \(t = 1\), we get
\[
Q(1) \geq \left(1 - \frac{e^\varepsilon}{e}\right)f(\OPT) + \frac{e^\varepsilon}{e} f(\varnothing)
\geq
(1 - \varepsilon)(1 - \nicefrac{1}{e}) f(\OPT)
+ e^{-1} f(\varnothing).
\]

Since \ref{it-valid} guarantees that \(A(t)\) is feasible with probability \(1\) for every \(t\), the set \(A(1)\) is a feasible solution with expected value \(Q(1)\).

We now look at the non-monotone case.
We again start with \Cref{lem:potential_average}:
\[
Q'(r) \geq \E\!\left[ F(\1_{\OPT}\vee r\1_{A(r)}) \right]
- Q(r).
\]
Instead of monotonicity,
we now apply \Cref{lem:lovasz-estimate}. Each element from \(\OPT\) appears with probability \(1\) and for each element not in \(\OPT\):
\[
\Pr[i \in R] = r \Pr[i \in A(r)] \leq 1 - e^{-r+\varepsilon}.
\]
Consequently,
\[
Q'(r) \geq e^{-r+\varepsilon} f(\OPT) -Q(r).
\]
We now have
\[
\frac{\partial e^r Q(r)}{\partial r}
=
e^r Q(r) + e^rQ'(r)
\geq
e^\varepsilon f(\OPT).
\]
By integrating from \(\varepsilon\) to \(t\), we obtain \(e^t Q(t) - e^\varepsilon Q(\varepsilon)
\geq (t-\varepsilon)e^\varepsilon f(\OPT)\).
Using the fact that \(Q(\varepsilon) = f(\varnothing)\), and evaluating the expression 
at \(t = 1\), we get
\[
Q(1) \geq (1-\varepsilon)\left(\frac{e^\varepsilon}{e}\right)f(\OPT) + \frac{e^\varepsilon}{e} f(\varnothing)
\geq
(1 - \varepsilon)(\nicefrac{1}{e}) f(\OPT)
+ e^{-1} f(\varnothing).
\]
Since \ref{it-valid} guarantees that \(A(t)\) is feasible with probability \(1\) for every \(t\), the set \(A(1)\) is a feasible solution with expected value \(Q(1)\).

The expected number of swap procedure calls is the expected number of events in the Poisson process:
\[
\int_\varepsilon^1 \lambda(t) dt
=
k
\int_\varepsilon^1 \frac{1}{t} dt
=
k \ln(1/\varepsilon). \qedhere
\]
\end{proof}
Proof of \Cref{lem:almost_average_poisson} follows analogously and appears in Appendix~\ref{sec:appendix-proofs}.

\section{Applications}\label{sec:applications}

\subsection{\texorpdfstring{Oracle to $F$}{}}\label{sec:general}
Now we prove Theorem~\ref{thrm:fastF} by giving appropriate swap procedures for a general matroid as well as for the special case of partition matroid. 
 \paragraph{General Matroid.}

In this section, we prove the following lemma.
\begin{lemma}
\label{lem:genF}
    Given a matroid \(\Mcal = (U, \Ical)\) of rank \(k\) and a non-negative submodular function, there is a polynomial time algorithm which is a right continuous above-average valid swap procedure and it makes
    \(O(n)\) calls to a multilinear oracle.
\end{lemma}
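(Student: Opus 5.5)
The plan is to realize exactly the swap rule sketched in the introduction. Given $A\in\Ical$ and $t$, I would compute the weights $w_j \coloneqq F(t\1_A\vee\1_{\{j\}})-F(t\1_A)$ for every $j\in U$. This takes $n+1$ calls to the multilinear oracle.

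Next, I would compute a maximum-weight base $Z$ of $\Mcal$ with respect to $w$ by the greedy algorithm. Greedy needs only independence-oracle calls, so the number of $F$ evaluations stays $O(n)$. Dummy elements have weight $0$, so a base always exists. I would then fix a matroid exchange map $h\colon Z\to A\cup\{\perp\}$ as in Definition~\ref{def:exchange}; it exists by Corollary 39.12a of~\cite{schrijver2003combinatorial} and is computable in polynomial time, for instance via a perfect matching in the exchange graph. Finally, I would sample $J$ uniformly from $Z$ and return $(I,J)=(h(J),J)$. To make the procedure deterministic in its choices, ties in greedy and in the choice of $h$ would be broken by a fixed ordering of $U$.

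Validity is then read off from the definitions. Property~\ref{it-valid} is the first property of $h$. Property~\ref{it-inter} holds because $h(j)=j$ for $j\in Z\cap A$. Property~\ref{it-i} follows since each $i\in A$ has exactly one preimage under $h$, which is chosen with probability $1/|Z|=1/k$. Property~\ref{it-j} holds since $\Pr[J=j]$ equals $1/k$ if $j\in Z$ and $0$ otherwise.

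For the above-average property, I would compute $\E[w_J]=\frac1k\sum_{j\in Z}w_j\ge\frac1k\sum_{j\in O}w_j$, using that $Z$ is a maximum-weight base and that $O$ may be taken to be a base (pad it with dummies). Submodularity of $F$ along nonnegative directions then gives $\sum_{j\in O}w_j\ge F(t\1_A\vee\1_O)-F(t\1_A)$. Concretely, I would add the coordinates of $O$ one by one, raising each to $1$, and note that each marginal is at most the marginal taken from the base point $t\1_A$, because $F$ has diminishing returns in the sense that $\partial F/\partial x_j$ is nonincreasing in the other coordinates. This yields~\eqref{eq:above-avg}.

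The step I expect to be the main obstacle is right continuity in $t$, whose precise meaning the paper presumably fixes where it is used. I would show that for fixed $A$ the output distribution $p_{ij}(t,A)$ is piecewise constant and right continuous in $t$. Each $w_j(t)$ is a polynomial in $t$, so the greedy sorting order can change only at finitely many $t\in(0,1]$, namely the roots of the pairwise differences $w_j-w_{j'}$ that are not identically zero. Between consecutive roots the order, hence $Z$ and $h$, is constant. At a root, the fixed tie-breaking makes the order at $t$ agree with the order just to the right, because on a small interval to the right each nontrivial difference has constant sign. To secure this I would use a slightly refined rule: sort by the value at $t$ and break ties by the right-limit order, that is, by the sign of the first nonzero derivative. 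This gives a valid, above-average, right continuous procedure using $O(n)$ oracle calls.
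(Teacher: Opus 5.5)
Your proposal is correct and follows the paper's proof. It uses the same procedure ($n+1$ evaluations of $F$ for the weights, a maximum-weight base $Z$, an exchange map $h$, a uniform $J\in Z$), the same checks of properties (i)--(iv), and the same chain $\frac1k\sum_{j\in Z}w_j\ge\frac1k\sum_{j\in O}w_j\ge\frac1k\bigl(F(t\1_A\vee\1_O)-F(t\1_A)\bigr)$.

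Your piecewise-constant argument for right continuity is something the paper's proof does not address at all. One caveat: the derivative-based tie-break cannot be read off the $n+1$ evaluations, since the $w_j$ are degree-$\le k$ polynomials in $t$ and recovering their derivatives costs extra calls. However, nontrivial ties occur at only finitely many $t$ for each $A$, so the Poisson event times avoid them almost surely. Hence plain index tie-breaking with $n+1$ calls yields the same process almost surely.
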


Note that \Cref{lem:genF} combined with \Cref{lem:almost_average_poisson} implies one part of \Cref{thrm:fastF}, since the expected number of calls to \(\swap\) is \(O(k \ln(1/\varepsilon))\) and each call makes \(O(n)\)
calls to a \(F(\cdot)\) oracle, resulting in a total of \(O(nk \ln(1/\varepsilon))\).

We start by giving the algorithm:
\begin{proof}
\begin{algorithm}[!h]
\caption{ $\swap(t,A)$ for General Matroids}
\SetKwInOut{Input}{input}
\SetKwInOut{Output}{output}

\SetAlgoNlRelativeSize{0}
\label{alg:swapmatroidF}
 	
 	Compute $ w_i \leftarrow F(t\1_A \vee \1_{i}) - F(t\1_A)$
    for every \(i \in U\)

    Compute \(Z \leftarrow \arg\max \{w(S): S \in \Ical\}\)
    \tcp{we may assume that \(Z\) is a base}

    Let $h\colon Z \rightarrow A\cup \{\perp \}$ be a matroid exchange map as in \Cref{def:exchange}.

    Sample random element \(j\) of \(Z\)
	
 	\Return $(h(j), j)$.
 	
\end{algorithm}
We will prove it is a valid swap procedure, and then that it is above-average.
Property \ref{it-valid} and \ref{it-inter} follow from the definition of \(h\).
Property \ref{it-i} holds since for any element \(i \in A\), for the event \(I = i\) to happen, it must be that
\(h^{-1}(i)\) is selected from \(Z\). By the properties of \(h\), there is only one element in the pre-image of \(i\), which is selected with probability \(1/k\).
Property \ref{it-j} follows because for every \(j \in U\),
for the event \(J = j\) to happen, first \(j \in Z\) and then
it must be the random element sampled, which happens with probability \(1/k\).

To prove it is above average, note that
\begin{align*}
    \E[F(t\1_A \vee \1_J) - F(t\1_A)]
    &=
    \frac1{k} \sum_{j \in Z} F(t\1_A \vee \1_j) - F(t\1_A)\\
    &\geq
     \frac1{k} \sum_{j \in \OPT} F(t\1_A \vee \1_j) - F(t\1_A) &&\text{by optimality of \(Z\)}\\   
     &\geq
     \frac{1}{k}(F(t\1_A \vee \OPT) - F(t\1_A))
     &&\text{by submodularity}
\end{align*}
where the last inequality is a standard application of submodularity~\cite{CCPV11}. The only calls to the oracle were to compute the weights in the first line, this can be done with \(n + 1\) calls.
\end{proof}

\begin{proof}[Proof of \Cref{thrm:Poisson}]
    \Cref{lem:genF} together with \Cref{lem:average_poisson} implies \Cref{thrm:Poisson}.
\end{proof}

 \paragraph{Partition Matroid.}

In this section, we prove the following lemma.
\begin{lemma}
\label{lem:partF}
    Given a partition matroid \(\Mcal = (U, \Ical)\) of rank \(k\) and a non-negative submodular function, there is a randomized polynomial time algorithm which is a right continuous above-average valid swap procedure and it makes, in expectation,
    \(O(n/k)\) calls to a multilinear oracle.
\end{lemma}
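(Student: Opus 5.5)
We design a swap procedure for a partition matroid with parts $U_1,\dots,U_k$. As in \Cref{alg:swapmatroidF}, we may assume each part contains a dummy element of weight zero, so bases pick exactly one element per part.

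The weights are $w_i = F(t\1_A\vee\1_i)-F(t\1_A)$. For a partition matroid, a maximum-weight base $Z$ simply takes an element of maximum weight in each part $U_\ell$. The natural exchange map sends $j\in Z\cap U_\ell$ to the unique element of $A\cap U_\ell$, or to $\perp$ if there is none. This map satisfies \Cref{def:exchange}, since $A-h(j)+j$ has at most one element per part. If $j\in A$, then $h(j)=j$.

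The key observation is that we never need all of $Z$. We sample a uniformly random part index $\ell\in[k]$ and compute $w_i$ only for $i\in U_\ell$, which costs $|U_\ell|+1$ oracle calls. We then take $J=\argmax_{i\in U_\ell} w_i$, with ties broken by a fixed order, and set $I$ to be the element of $A\cap U_\ell$, or $\perp$ if none exists. The output $(I,J)$ has exactly the distribution that \Cref{alg:swapmatroidF} would produce with this particular $Z$ and $h$. The expected number of calls is $\frac1k\sum_\ell(|U_\ell|+1)=O(n/k+1)$. Under the natural assumption $n\ge k$, which we get after adding dummies, this is $O(n/k)$.

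The properties then transfer from \Cref{lem:genF}.
\begin{itemize}
\item[(i)] $A-I+J\in\Ical$, because $I$ is the only element of $A$ in the part of $J$.
\item[(ii)] If $J\in A$, then $J$ is the element of $A$ in $U_\ell$, so $I=J$.
\item[(iii)] Each $i\in A$ lies in exactly one part, which is chosen with probability $1/k$.
\item[(iv)] $\Pr[J=j]\le\Pr[\ell=\text{part of }j]=1/k$.
\end{itemize}
For the above-average property, we compute $\E[w_J]=\frac1k\sum_\ell\max_{i\in U_\ell}w_i\ge\frac1k\sum_{j\in\OPT}w_j$. Here we use that $\OPT$ has at most one element per part and that the weights are nonnegative after including dummies. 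This is then at least $\frac1k\bigl(F(t\1_A\vee\1_\OPT)-F(t\1_A)\bigr)$, by the same submodularity step as before.

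The main point of care is right continuity in $t$, which \Cref{lem:genF} also asserts. The weights are continuous in $t$ because $F$ is multilinear. The argmax with fixed tie-breaking, however, can jump at ties. We handle this with a fixed lexicographic tie-breaking rule: we first compare $w_i$, and among equal values we compare the right derivatives in $t$ (and higher-order ones, since $w_i$ is a polynomial in $t$), and only then a fixed index order. This makes the choice locally constant on $[t,t+\delta)$ for each fixed $A$, so the probabilities $p_{ij}(t,A)$ are right continuous. We treat this the same way it is implicitly treated in \Cref{lem:genF}.
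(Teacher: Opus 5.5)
Your argument is essentially the paper's proof. You sample a uniformly random part, return its maximum-weight element as $J$ and the current occupant (or $\perp$) as $I$, and check (i)--(iv) part by part. You obtain the above-average property by comparing the per-part maxima with $\OPT$ and applying the standard submodularity bound, at $n/k+1$ expected queries. Your zero-weight dummy in each part, used to justify $\sum_{\ell}\max_{i\in U_\ell} w_i \geq \sum_{j\in\OPT} w_j$ when $\OPT$ misses a part, makes explicit something the paper leaves implicit.

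The one place you go beyond the paper needs correcting. The paper never argues right continuity. Your lexicographic tie-breaking by right derivatives is not free, because the oracle only returns values of $F$.
\begin{itemize}
\item Getting $\frac{d^m}{dt^m} w_i(t)$ for tied elements costs on the order of $|A|$ extra evaluations per element. This is true whether you sum coordinate partials or interpolate the polynomial through up to $|A|+1$ points.
\item When many elements of a part tie at the maximum, this raises the expected cost of a call to order $n$ rather than $n/k$. A natural example is several elements tying with your dummy at weight $0$.
\end{itemize}
So, as written, your procedure does not meet the $O(n/k)$ bound the lemma asserts.

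You do not need this device. Use a fixed index tie-break. For each fixed $A$, the selection is then piecewise constant in $t$ with finitely many breakpoints, since distinct polynomials cross only finitely often. At a breakpoint, the right-limit selection is still a per-part maximizer at $t$ by continuity, so it is valid and above-average at $t$. Hence the computations in Claim~\ref{cla:partial} and Lemmas~\ref{lem:transit-prob} and~\ref{lem:cond-pot} go through with $p_{ij}(t^+,S)$ in place of $p_{ij}(t,S)$, at no extra query cost.
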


Note that \Cref{lem:partF} combined with \Cref{lem:almost_average_poisson} finalizes the proof of \Cref{thrm:fastF}, since the expected number of calls to \(\swap\) is \(O(k \ln(1/\varepsilon))\), each call makes in expectation \(O(n/k)\)
calls to a \(F(\cdot)\) oracle, and \(\swap\) is independent of the Poisson process, it results in a total of \(O(n \ln(1/\varepsilon))\).

\begin{proof}
Let \(P_1, \dots, P_k\) be the parts of \(\Mcal\) and thus
\(|A \cap P_j| \leq 1\) for any feasible set \(A\) and \(j \in [k]\).
Consider \Cref{alg:swappartF} .
We will prove it is a valid swap procedure, and then that it is above-average.

\begin{algorithm}[!h]
\caption{ $\swap(t,A)$ for Partition Matroid}
\SetKwInOut{Input}{input}
\SetKwInOut{Output}{output}

\SetAlgoNlRelativeSize{0}
\label{alg:swappartF}
 	
 	Sample random part \(j \in [k]\)
    
    Compute $ w_i \leftarrow F(t\1_A \vee \1_{i}) - F(t\1_A)$
    for every \(i \in P_j\)

    Set \(J\leftarrow \arg\max \{w_\ell: \ell \in P_j\}\)

    Set \(I \leftarrow \perp\) if \(A \cap P_j = \varnothing\), otherwise, let \(I\) be the unique element in \(A \cap P_j\)
	
 	\Return $(I, J)$.
 	
\end{algorithm}

For \ref{it-valid}, every part that is not \(j\) remains unchanged.
For part \(j\),
if \(A \cap P_j = \varnothing\), even after adding \(J\), the set \(A\) intersects \(P_j\) in at most one element and it is still feasible.
Otherwise,
if \(A \cap P_j\) is not empty, we will remove an element \(I\) from \(P_j\), which could be equal to \(J\), and so \(A - I + J\) will still intersect \(P_j\) in at most one element.
For \ref{it-inter}, if \(J \in A\), then \(J\) is the only element in \(A \cap P_j\) and thus \(I = J\).
For \ref{it-i}, for an element \(i \in A\) to be selected as \(I\), it is necessary and sufficient that the part it belongs to was selected, which happens with probability \(1/k\).
Similarly, for an element \(\ell \in U\) to be selected, it must happen at least that the part it belongs to was selected, thus \(\Pr[J = \ell] \leq 1/k\).

To prove it is above average,
define the weights for every part and
let \(Z\) be the set with the max-weight element from each part.
Note that
\begin{align*}
    \E[F(t\1_A \vee \1_J) - F(t\1_A)]
    &=
    \frac1{k} \sum_{j \in Z} F(t\1_A \vee \1_j) - F(t\1_A)\\
    &\geq
     \frac1{k} \sum_{j \in \OPT} F(t\1_A \vee \1_j) - F(t\1_A) &&\text{by the definition of \(Z\)}\\   
     &\geq
     \frac{1}{k}(F(t\1_A \vee \OPT) - F(t\1_A))
     &&\text{by submodularity}.
\end{align*}
where the last inequality is a standard application of submodularity~\cite{CCPV11}.
Note that when part \(j\) is sampled, we only need \(|P_j| + 1\) calls to \(F(\cdot)\). Thus, in expectation, the number of calls is \(\E[|P_j| + 1] = \tfrac{n}{k} + 1\).
\end{proof}
 \subsection{\texorpdfstring{Oracle to $f$}{}}

In this section, we prove Theorem~\ref{thrm:fastf} when we only have an oracle access to the submodular function $f$. The basic idea is to estimate $F$ using sampling. 
 On general instances, the number of samples required to attain good estimation is relatively high, and would penalize the running time. 
 To reduce the number of required samples, we provide $\eta$-almost-above-average swap algorithms, in which the values of $\eta$ depends on a  property of the instance named  {\em maximum sum of marginals} as well as on the value of the optimum. 
 We use the \sgsplong, together with these swaps algorithms, on instances which were first preprocessed, and show that the combination of the two leads to a fast algorithm with negligible loss to the approximation ratio.

 The {\em  Maximum Sum of Marginals} of the instance, defined as
 $$
 \mar(f,\Ical)= \max_{S\in \Ical} \sum_{i\in S} \left(f(\{i\})- f(\emptyset)\right).
 $$
 Our preprocessing procedure, described in the next lemma, is used to bound the value of the maximum sum of marginals. 
\begin{lemma}[Preprocessing]
\label{lema:imppre2}
There exists a polynomial time algorithm that 
given a non-negative submodular function $f:2^{U}\rightarrow \mathbb{R}_+$ and a matroid $\M=(U,\I)$ and $0<\delta \leq \frac12$, returns a (random) set $\bar{S}\in \I$ such that 
\begin{enumerate}
    \item $\max_{T\subseteq U: T\cup \bar{S}\in \I} \sum_{i\in T} f(\bar{S}\cup \{i\})-f(\bar{S})\leq \constwo \cdot \cons\max_{S\in \I} f(S).$ 
   \item $\E[\max_{T\subseteq U: T\cup \bar{S}\in \I}  f(\bar{S}\cup T)+  \frac{1}{2}\cdot f(\bar{S}) ]\geq (1-\delta) \max_{S\in \I} f(S).$  
\end{enumerate}
For a general matroid, the number of oracle calls made by the algorithm is $O(nk)$ and for a partition matroid, the  expected number of oracle  calls is $O\left(\frac{n}{\delta}\right)$.
\end{lemma}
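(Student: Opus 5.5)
The preprocessing is a randomized greedy-type procedure that repeatedly adds a high-marginal element to a current independent set $\bar S$. It stops as soon as the residual sum of marginals is small compared to $\max_{S\in\I} f(S)$. The constant $\constwo\cdot\cons$ in item~1 is the stopping threshold (up to the factor $\cons$ coming from an estimate of $\max f$), and item~2 says the greedy additions lose little in expectation.

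First I would obtain a crude estimate $M$ of $\mathrm{OPT}=\max_{S\in\I}f(S)$ within a factor $\cons$. For a general matroid I would run plain local search or random greedy with $O(nk)$ queries, which gives a constant-factor approximation for non-monotone $f$. Alternatively I would use $M=\max_i f(\{i\})$ combined with $f(\emptyset)$, if a factor $k$ were tolerable, but it is not, so the approximation route it is.

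Then I would iterate. Let $\bar S_0=\varnothing$. While the maximum-weight set $T$ with $T\cup\bar S\in\I$ under weights $w_i=f(\bar S+i)-f(\bar S)$ has $w(T)>\constwo M$, pick a uniformly random element of that max-weight base and add it to $\bar S$ with probability $1/2$. I would also stop after $O(\log)$ rounds.

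I would analyze this like the random greedy of~\cite{BFNS14}. Each added element has marginal at least $w(T)/k>\constwo M/k$ on average. By a martingale argument, $f(\bar S)$ increases in expectation by $\Omega(\constwo M/k)$ per addition. Since $f(\bar S)\le\mathrm{OPT}\le\cons M$, the expected number of additions is $O(k/\constwo)$. This bounds the total cost by $O(nk)$ for general matroids. For a partition matroid, one only needs to recompute marginals in one part per step, giving $O(n/\delta)$ in expectation.

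For item~2, I would compare $\bar S$ against $O$. Each step includes elements with probability at most $1/2$ per element, so Lemma~\ref{lem:lovasz-estimate} gives the following. The feasible extension $T=O'$, meaning $O$ minus the elements conflicting with $\bar S$ under an exchange bijection, satisfies $\E[f(\bar S\cup O')]\ge$ roughly $\mathrm{OPT}$ minus the losses from removed elements of $O$. Each removed element of $O$ costs at most its marginal. That marginal is charged against the gain $f(\bar S)$, which is where the $\tfrac12 f(\bar S)$ term enters.

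The main obstacle is this charging step. I need $\E[\mathrm{OPT}-f(\bar S\cup O')]\le \tfrac12\E[f(\bar S)]+\delta\,\mathrm{OPT}$, and non-monotonicity makes the removal of $O$-elements delicate. I would handle it with a supermartingale $\Phi=f(\bar S\cup O')+\tfrac12 f(\bar S)$, showing it is nondecreasing in expectation up to a $\delta/(\#\text{steps})$ slack. The slack would come from taking the step-count cap at $O(1/\delta)$ steps.
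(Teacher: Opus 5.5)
\textbf{There is a genuine gap.} Your potential $\Phi=f(\bar S\cup O')+\tfrac12 f(\bar S)$ is the right one; it is the paper's $M_t=f(Q_t\cup O_t)+\tfrac12 f(Q_t)$. However, several steps as written would fail.

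First, any cap on the number of additions (the ``$O(\log)$ rounds'' or ``$O(1/\delta)$ steps'') destroys item~1. That bound holds only because the returned set fails the while-test. The uncapped process can legitimately need $\Omega(k)$ additions, e.g.\ $f(S)=g(|S|)$ on a uniform matroid with increments $g(r)-g(r-1)=r^{-1/2}$. No cap is needed: at most $k$ elements can ever be added, which already gives $O(nk)$ queries.

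Second, the drift step is not supplied, and the ingredient you name is the wrong one. The relevant probabilities are not $\tfrac12$ but $\tfrac1{k-t}$; with $q=\tfrac12$, Lemma~\ref{lem:lovasz-estimate} loses a constant factor per step. Charging each removed element of $O$ to its marginal is also not valid for non-monotone $f$, since adding $j_t$ can itself lower the value. What works is the following. Let $Z_t$ be the max-weight extension of $Q_t$, with $|Z_t|=k-t$, $j_t$ uniform in $Z_t$, and $O_{t+1}=O_t-h_t(j_t)$ for an exchange map $h_t$. Then every element of $Q_t\cup O_t$ is removed with probability at most $\frac1{k-t}$, and every other element enters with probability at most $\frac1{k-t}$. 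So Lemma~\ref{lem:lovasz-estimate} gives $\mathbb{E}[f(Q_{t+1}\cup O_{t+1})\mid\mathcal{F}_t]\ge(1-\frac{2}{k-t})f(Q_t\cup O_t)$. Meanwhile $\mathbb{E}[f(Q_{t+1})-f(Q_t)\mid\mathcal{F}_t]\ge\frac{20V}{k-t}$. The gain must be measured as $w(Z_t)/(k-t)$, not $w(T)/k$ as you wrote, to match the $\frac1{k-t}$ loss near the end. Hence $\Phi$ has drift at least $\frac{10V-2\,\mathrm{OPT}}{k-t}\ge\frac{8\,\mathrm{OPT}}{k-t}$; it is a submartingale, not a supermartingale. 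Optional stopping then gives item~2 for general matroids with no $\delta$ loss at all. Your extra coin flip with probability $\tfrac12$ is unnecessary: the $\tfrac12$ in item~2 is only the weight of $f(\bar S)$ in the potential.

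The real role of $\delta$ is in the partition-matroid running time, and that mechanism is missing from your plan. Sampling one part per step makes the additions cost $O(n)$ in total. But evaluating the while-condition needs all $n$ marginals, so testing it every round costs $\Theta(nk)$. The paper tests only every $\delta k/20$ rounds, i.e.\ $O(1/\delta)$ tests of $O(n)$ queries each, and returns $Q_\sigma$ at the first failed test $\sigma\ge\tau$, so item~1 still holds. The price is the overshoot $\sigma-\tau<\delta k/20$, during which $\Phi$ may lose up to $\frac{2\,\mathrm{OPT}}{k-t}$ per round (using $f(Q_{t+1})\ge f(Q_t)$, guaranteed by dummy elements). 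This is paid for by the surplus drift before $\tau$. With $\phi(s)=\sum_{t<s}\frac1{k-t}$ one has $\mathbb{E}[M_\tau]\ge(1+8\,\mathbb{E}[\phi(\tau)])\,\mathrm{OPT}$, and one splits into two cases:
\begin{itemize}
\item If $\phi(\tau)\ge\frac18$, the surplus of at least $\mathrm{OPT}$ absorbs the worst case $M_\sigma-M_\tau\ge-\mathrm{OPT}$.
\item If $\phi(\tau)<\frac18$, then $\tau<k/8$, so $\phi(\sigma)-\phi(\tau)\le\frac{\delta}{10}$ and the expected overshoot loss is at most $\frac{\delta}{5}\mathrm{OPT}$.
\end{itemize}
Finally, the estimate $V$ must itself cost only $O(n)$ queries for a partition matroid. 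Your $O(nk)$ local-search or random-greedy estimate would already exceed the $O(n/\delta)$ budget.
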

 We give the proof of \Cref{lema:imppre2} in \Cref{sec:preprocessing}.
The properties of the swap algorithms we use are described in the following lemma. 
 \begin{lemma}
 \label{lem:f_swaps}
 For every $\delta>0$, 
there is an $\eta$-almost-above-average swap with $\eta= \delta \cdot \left( \mar(f,\Ical)+f(O)\right)$, where $O=\max_{S\in \Ical} f(S)$.  For general matroid the algorithm uses $O\left(\frac{n\cdot (k\cdot \ln n+\ln(1/\delta))}{\delta^2}\right)$  oracle calls to $f$. For partition matroid the swap algorithm uses  $O\left(\frac{n\cdot (k\cdot \ln n+\ln(1/\delta))}{k\cdot \delta^2}\right)$   oracle calls in expectation.
 \end{lemma}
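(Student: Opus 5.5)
The plan is to mimic Algorithms~\ref{alg:swapmatroidF} and~\ref{alg:swappartF}, replacing exact multilinear weights $w_i = F(t\1_A\vee\1_i)-F(t\1_A)$ by empirical estimates $\hat w_i$ from samples of $f$. Validity (properties \ref{it-valid}--\ref{it-j}) comes from the combinatorial structure (exchange map plus uniform sampling from the base $Z$, or a uniform random part), exactly as in \Cref{lem:genF} and \Cref{lem:partF}. Those arguments never use the values of the weights, so they carry over unchanged. Right continuity is arranged by making the procedure's randomness depend on $t$ only through sampled thresholds: for instance, sample sets $R\sim t\1_A$ by drawing uniform $u_\ell\in[0,1]$ for $\ell\in A$ and including $\ell$ iff $u_\ell\le t$. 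The only real work is therefore the almost-above-average inequality~\eqref{eq:almost-above-avg}, with $\eta=\delta(\mar(f,\Ical)+f(O))$.

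\textbf{Estimation.} For each $i\in U$, draw $m$ independent samples $R_1,\dots,R_m\sim t\1_A$ and set $\hat w_i=\frac1m\sum_s (f(R_s+i)-f(R_s))$. For a general matroid we can reuse the same $m$ samples for all $i$, so the cost is $O(nm)$ oracle calls. Each summand lies in $[\,f(R+i)-f(R)\ \text{lower bound},\ f(\{i\})-f(\varnothing)]$ by submodularity. The upper range is thus controlled by the singleton marginal $b_i=f(\{i\})-f(\varnothing)$, and the lower range by $-f(R)\ge -f(O)$-type quantities; the latter can be bounded since $f(R)$ with $R\in\Ical$ is at most $f(O)$, and $f(R+i)\ge 0$.

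\textbf{Error bound.} Let $Z$ be a max-weight base for $\hat w$ and $O$ the optimum. As before,
\[
\tfrac1k\sum_{j\in Z} w_j \ge \tfrac1k\sum_{j\in O} w_j - \tfrac1k\sum_{j\in Z\cup O}|w_j-\hat w_j|
\]
and $\sum_{j\in O}w_j\ge F(t\1_A\vee\1_O)-F(t\1_A)$. It then suffices to show that, with probability $1-\delta$ simultaneously over all bases $B$, we have $\sum_{j\in B}|w_j-\hat w_j|\le \delta(\mar+f(O))/2$, and that the failure event costs at most an additional $\delta(\mar+f(O))/2$ in expectation. The failure cost is bounded since all weights sum over a base to at most $\mar+f(O)$ in absolute value.

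I would not bound each coordinate separately. Instead, for a fixed independent set $B$, apply a Hoeffding/Bernstein bound to the \emph{summed} estimator $\sum_{j\in B}(f(R_s+j)-f(R_s))$. Each sample of this sum has range at most $\sum_{j\in B} b_j + k f(O)$, which is awkward. A better route is to use the per-sample bound $\sum_{j\in B}(f(R+j)-f(R))\in[-f(O),\ \mar]$, obtained from submodularity and nonnegativity via $\sum_j(f(R+j)-f(R))\ge f(R\cup B)-f(R)\ge -f(O)$-type reasoning in the monotone-free sense. This gives a range of $\mar+f(O)$, so $m=O((k\ln n+\ln(1/\delta))/\delta^2)$ samples give deviation $\le\delta(\mar+f(O))/2$ with failure probability $\le \delta n^{-k}$. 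A union bound over the $\le n^k$ sets $B$ (needed because $Z$ depends on the estimates) then finishes. Getting a two-sided bound on signed sums rather than absolute deviations requires the comparison $\sum_{Z}\hat w\ge\sum_O\hat w$ plus uniform deviation over the two sets $Z,O$, which the union bound provides.

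\textbf{Partition matroid and main obstacle.} For the partition matroid, we only estimate weights in the sampled part $P_j$, and the union bound is over the $\le n^k$ transversals. The expected cost is then $O(\frac nk m)$, matching the claim. The main obstacle is the range bound for the summed marginals: the lower end may need a more careful argument, e.g.\ bounding $\sum_j (f(R+j)-f(R))\ge -\sum_j f(R)$ is too weak. I would first try to exploit $\sum_{j\in B}\min(0,\cdot)$ via $f(R+j)\ge0$ and $f(R)\le f(O)$ only once, and if that fails, estimate positive and negative parts separately, charging negatives to $f(O)$.
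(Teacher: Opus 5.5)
Your final route is essentially the paper's proof of \Cref{lem:mat_swap_f} and its partition-matroid variant. It has the same ingredients: Hoeffding applied to the summed estimator $\sum_{j\in B}(f(R_\ell+j)-f(R_\ell))$ for each fixed independent set $B$, with per-sample range $[-f(O),\mar(f,\Ical)]$; a union bound over the at most $(2n)^k$ independent sets; the comparison $\sum_{Z}\hat w\ge\sum_{O}\hat w$; a crude $-f(O)/k$ bound on the failure event; and, for partition matroids, estimating only inside the sampled part. The lower-range worry in your last paragraph is unnecessary, because the chain $\sum_{j\in B}(f(R+j)-f(R))\ge f(R\cup B)-f(R)\ge -f(R)\ge -f(O)$ you already wrote is exactly the paper's bound. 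For the upper end, discard the terms with $j\in R$ (or with negative marginal) before comparing to singleton marginals, since $f(\{j\})-f(\varnothing)$ may be negative.
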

We provide the swap algorithms for general matroids and partition matroids in \Cref{sec:generalf,sec:partitionf} respectively. We first show \Cref{thrm:fastf} can be derived using \Cref{lema:imppre2,lem:f_swaps}. 
\begin{proof}[Proof of \Cref{thrm:fastf}]
Given a Sub-Mat instance $(f,\Ical)$ we use the following simple algorithm
\begin{itemize}
\item Run the preprocessing algorithm from \Cref{lema:imppre2} on $(f,\Ical)$. Let $\bar{S}$ be the returned set.
\item Run the  \sgsp on $(g,\Ical/\bar{S})$, where $g(T)=f(T\cup \bar{S})$ with $\varepsilon= \frac{\varepsilon}{1000}$ and the swap process from \Cref{lem:f_swaps} with $\delta=\frac{\varepsilon}{1000}$, let $S$ be the returned set.
\item Return $S\cup \bar{S}$.
\end{itemize}

The preprocessing step requires $O(nk)$ queries for $f$. The execution of the \sgsp requires $O(k\ln \nicefrac{1}{\varepsilon})$ swap operations in expectation.
For (general) matroids each swap uses $O\left(\frac{n\cdot\left(k\cdot \ln n +\ln(1/\varepsilon)\right)}{\varepsilon^2}\right)$ function evaluations, which results in a total of   $O\left(k\cdot \ln(1/\varepsilon) \cdot  \frac{n\cdot\left(k\cdot \ln n +\ln(1/\varepsilon)\right)}{\varepsilon^2} \right) $ function evaluations in expectation. Similarly, for partition matroid each swap uses $O\left(\frac{n\cdot (k\cdot \ln n+\ln(1/\delta))}{k\cdot \delta^2}\right)$ oracle calls in expectation, which results in a total number of $O\left(\frac{n\cdot (k\cdot \ln n+\ln(1/\varepsilon))}{\varepsilon^2}\cdot \ln(1/\varepsilon)\right)$ oracle calls throughout the execution of the algorithm.

It can be easily observed that the function $g$ defined above is submodular, and that the returned solution is indeed feasible. 
Let $O=\argmax_{S\in \Ical} f(S)$ be an optimal solution for the original instance. 
By \Cref{lema:imppre2} it holds that $$
\mar(g,\Ical/\bar{S}) = \max_{T\in \Ical/\bar{S}} (g(\{i\})-g(\emptyset) ) = \max_{T\subseteq U : T\cup \bar{S}\in \Ical} \sum_{i\in T} (f(\bar{S}+i) -f(\bar{S}) )\leq 80 \cdot f(O). 
$$
In particular, this means that we run the \sgsp with an $\eta$-almost-above average swap algorithm in which $\eta = \delta \cdot (f(O)+ \mar(g,\Ical/\bar{S}) ) \leq \delta \cdot 90 \cdot f(O).$ 
Let $O_g= \argmax_{S\in \Ical/\bar{S}}g(S)$. By \Cref{lema:imppre2} it holds that $$\E\left[g(O_g)+\frac{1}{2}\cdot g(\emptyset)\right] = \E\left[f(\bar{S}\cup T)+\frac{1}{2} \cdot f(\bar{S})\right] = (1-\delta)\cdot f(O).$$
By \Cref{lem:almost_average_poisson}, this implies that $$
\begin{aligned}\E[g(S)|\bar{S}]&\geq \left(1-\frac{\varepsilon}{1000}\right) \cdot \nicefrac{1}{e}\cdot g(O_g)+e^{-1}\cdot g(\emptyset) -\eta \\
&\geq \left(1-\frac{\varepsilon}{1000}\right) \cdot \nicefrac{1}{e}\cdot \left(g(O_g)+ g(\emptyset) \right) -\delta \cdot 90\cdot f(O).
\end{aligned}$$
Therefore,
$$
\begin{aligned}
\E[f(S\cup \bar{S})] &= \E\left[ \E[g(S)|\bar{S}]\right] \\
&\geq \E\left[\left(1-\frac{\varepsilon}{1000}\right) \cdot \nicefrac{1}{e}\cdot \left(g(O_g)+ g(\emptyset) \right) -\delta \cdot 90\cdot f(O)\right] \\
&\geq \left(1-\frac{\varepsilon}{1000}\right) \cdot \frac{1}{e}\cdot (1-\delta)\cdot f(O) -\delta\cdot 90\cdot f(O) \\
&\geq (1-\varepsilon)\cdot e^{-1}\cdot f(O).
\end{aligned}
$$

\end{proof}

\subsubsection{General Matroids}

\label{sec:generalf}

We first present our swap procedure for general matroids. The procedure is the first half of the proof of \Cref{lem:f_swaps}. 
 \begin{algorithm}[!h]
\caption{Matroid Swap $f$}
\label{algo:mat_swap_f}
\vspace{4pt}\hrule\vspace{4pt}
\KwIn{$t\in[0,1]$, $A\in\mathcal I$}

Define $m=\frac{25}{2\cdot \delta^2} \cdot \left(k\ln (2n) +\ln 2+\ln (5/\delta) \right) $.

Sample $m$ sets $R_1,\ldots R_m \sim t\cdot \indicator_{A}$.

For every $i\in U$ compute $\tilde{w}_i = \frac{1}{m} \cdot \sum_{\ell=1}^{m}( f(R_{\ell}\cup \{i\}) -f(R_{\ell}))$.

Find a base $Z$ which maximizes $\sum_{i\in Z} \tilde{w}_i$.

Find a matroid exchange map $h:Z\to A\cup\{\perp\}$.

Sample $z\in Z$ uniformly at random. 

\Return{$I=h(z), J=z$}
\end{algorithm}

\begin{lemma}
\label{lem:mat_swap_f}
\Cref{algo:mat_swap_f} is an $\eta$-almost-above-average swap algorithm which uses $O\left(\frac{n\cdot (k\cdot \ln n+\ln(1/\delta))}{\delta^2}\right)$ oracle queries to $f$, where $\eta= \delta\cdot (f(O)+\mar(f,\Ical) )$ and $O=\argmax_{S\in\Ical} f(S)$.  
\end{lemma}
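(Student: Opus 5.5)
Validity of the swap procedure (properties \ref{it-valid}--\ref{it-j}) and the query count will be handled exactly as in the proof of \Cref{lem:genF}. The outputs $Z$ and $h$ are a base and a matroid exchange map, possibly built from noisy weights, so the same arguments apply: $h$ gives feasibility and $h(j)=j$ on $Z\cap A$. Each $i\in A$ has a unique preimage, chosen with probability $1/k$. Each $j\in U$ is returned with probability at most $\Pr[j\in Z]/k\le 1/k$. For the count, computing all $\tilde w_i$ takes $m(n+1)$ queries. With $m=O\big((k\ln n+\ln(1/\delta))/\delta^2\big)$, this gives the claimed $O\big(n(k\ln n+\ln(1/\delta))/\delta^2\big)$. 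Right continuity I would treat as in \Cref{lem:genF}, for example via a fixed tie-breaking rule. The real work is the $\eta$-almost-above-average inequality.

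Let $w_i=F(t\1_A\vee\1_i)-F(t\1_A)=\E_{R\sim t\1_A}[f(R+i)-f(R)]$, so $\tilde w_i$ is an unbiased empirical mean of $w_i$. The naive approach would require each individual $|\tilde w_i-w_i|$ to be small. That fails: the individual terms can be as large as $f(\{i\})-f(\emptyset)$, and summing per-element errors over $k$ elements costs a factor of $k$. Instead I will control the sum over every independent set $T$ at once, and this is why $m$ contains a $k\ln(2n)$ term.

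Fix $T\in\Ical$ and consider the random variable $X_\ell(T)=\sum_{i\in T}\big(f(R_\ell+i)-f(R_\ell)\big)$. By submodularity, each term is at most $f(\{i\})-f(\emptyset)$, so $X_\ell(T)\le \mar(f,\Ical)$. For the lower bound, $f(R_\ell+i)-f(R_\ell)\ge -f(R_\ell)\ge -f(O)$ when $R_\ell+i$ is feasible. Summing this per element is too lossy, so I would bound the negative part of the whole sum instead. Order $T=\{i_1,\dots,i_r\}$ and telescope: $\sum_{i\in T}(f(R+i)-f(R))\ge f(R\cup T)-f(R)\ge -f(R)\ge -f(O)$. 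The inequality $\sum_{i\in T}(f(R+i)-f(R))\ge f(R\cup T)-f(R)$ holds by submodularity, and $R\subseteq A$ is independent, so $f(R)\le f(O)$. Hence $X_\ell(T)$ lies in an interval of length $\mar+f(O)$.

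Hoeffding's inequality then gives $\big|\sum_{i\in T}\tilde w_i-\sum_{i\in T}w_i\big|\le \tfrac{\delta}{5}(\mar+f(O))$ except with probability $2\exp(-2m\delta^2/25)$. A union bound over at most $(2n)^k$ sets $T$ (subsets of size at most $k$ of $U$ together with the dummy elements) uses exactly the chosen $m$, so all these bounds hold simultaneously with probability at least $1-\delta/5$. Call this event $\mathcal G$.

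On $\mathcal G$, I chain three inequalities. First, $\sum_{Z}w\ge\sum_Z\tilde w-\tfrac\delta5(\cdot)$. Second, $\sum_Z\tilde w\ge\sum_O\tilde w$ by optimality of $Z$. Third, $\sum_O\tilde w\ge\sum_O w-\tfrac\delta5(\cdot)$. Finally $\sum_O w\ge F(t\1_A\vee\1_O)-F(t\1_A)$ by submodularity, as in \Cref{lem:genF}. Dividing by $k$ handles the uniform choice of $J$ from $Z$.

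Off $\mathcal G$ I need a crude worst-case lower bound on $\sum_Z w$ and an upper bound on the right-hand side. The same telescoping argument, applied inside the expectation, gives $\sum_Z w\ge -f(O)$, and $F(t\1_A\vee\1_O)-F(t\1_A)\le \mar+f(O)$. So the loss off $\mathcal G$ is at most $\Pr[\bar{\mathcal G}]\cdot 2(\mar+f(O))\le \tfrac{2\delta}{5}(\mar+f(O))$. Combined with the $\tfrac{2\delta}{5}$ loss on $\mathcal G$, the total is within $\eta/k$.

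The main obstacle is the range bound on $X_\ell(T)$, specifically the negative side. It needs the telescoping and submodularity argument together with $f(R_\ell)\le f(O)$. Dummy elements must also be handled carefully, since they have zero marginal.
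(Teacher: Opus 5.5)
Your proposal is correct and follows essentially the same route as the paper. Both apply Hoeffding to the set sums $X_\ell$ with range $[-f(O),\mar(f,\Ical)]$, take a union bound over at most $(2n)^k$ independent sets, use the three-step chain plus submodularity on the good event, and use the crude bound $\sum_{i\in Z}w_i\ge -f(O)$ off it.

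One small repair is needed. The termwise bound $f(R_\ell+i)-f(R_\ell)\le f(\{i\})-f(\emptyset)$ can fail for $i\in R_\ell$: there the marginal is $0$, while $f(\{i\})-f(\emptyset)$ may be negative. So, as the paper does with its set $Z'$, first restrict the sum to the elements with positive marginal (these lie outside $R_\ell$). Only then invoke submodularity and the definition of $\mar(f,\Ical)$.
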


In the proof of \Cref{lem:mat_swap_f} we use the standard Hoeffding's inequality.
\begin{theorem}[Hoeffding's Inequality (e.g, Theorem 2.5 in \cite{Mc98})]
\label{thm:hoef}
Let $X_1,\ldots,X_n$ be independent random variables such that $a\leq X_k\leq b$ for all $k\in \{1,\ldots, n\}$. 
Define
$S=\sum_{k=1}^n X_k$
and let $\mu=\mathbb{E}[S]$.
Then, for every $t\ge 0$,
\[
\Pr\!\left(\left|S_n-\mu\right|\ge t(b-a)\right)
\le
2\exp\!\left(
-\frac{2t^2}{n}
\right).
\]
\end{theorem}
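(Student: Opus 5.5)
We follow the standard Chernoff (exponential moment) method, combined with Hoeffding's lemma on the moment generating function of a bounded, centered random variable. Write $S_n=S=\sum_{k=1}^n X_k$, and set $Y_k=X_k-\E[X_k]$, so each $Y_k$ has mean zero and takes values in an interval $[a_k',b_k']$ of length $b-a$. By symmetry (apply the same argument to $-Y_k$, which also ranges over an interval of length $b-a$), it suffices to prove the one-sided bound
\[
\Pr\bigl(S-\mu\ge t(b-a)\bigr)\le \exp\!\left(-\frac{2t^2}{n}\right).
\]
A union bound over the two tails then gives the factor $2$.

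The first step is Hoeffding's lemma: if $Y$ has mean zero and $Y\in[\alpha,\beta]$ almost surely, then $\E[e^{sY}]\le \exp\bigl(s^2(\beta-\alpha)^2/8\bigr)$ for every $s\in\RR$. I would prove it as follows.
\begin{itemize}
\item By convexity of $y\mapsto e^{sy}$ on $[\alpha,\beta]$, we have $e^{sy}\le \frac{\beta-y}{\beta-\alpha}e^{s\alpha}+\frac{y-\alpha}{\beta-\alpha}e^{s\beta}$.
\item Taking expectations and using $\E[Y]=0$ gives $\E[e^{sY}]\le e^{L(u)}$, where $u=s(\beta-\alpha)$, $\theta=-\alpha/(\beta-\alpha)\in[0,1]$, and $L(u)=-\theta u+\ln(1-\theta+\theta e^{u})$.
\item One checks that $L(0)=L'(0)=0$.
\item One also checks that $L''(u)=\rho(1-\rho)\le 1/4$, where $\rho=\theta e^u/(1-\theta+\theta e^u)\in[0,1]$.
\item Taylor's theorem with remainder then yields $L(u)\le u^2/8$.
\end{itemize}
This calculus step is the only nontrivial part of the argument and is the main obstacle. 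The key points are identifying the second derivative as a Bernoulli variance and bounding it by $1/4$.

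The second step is the Chernoff bound. Fix $s>0$. By Markov's inequality applied to $e^{s(S-\mu)}$, and by independence of the $Y_k$ (which lets the moment generating function factor into a product),
\[
\Pr\bigl(S-\mu\ge t(b-a)\bigr)\le e^{-st(b-a)}\prod_{k=1}^n\E[e^{sY_k}]\le \exp\!\left(-st(b-a)+\frac{ns^2(b-a)^2}{8}\right).
\]
Optimizing over $s$, the minimum is attained at $s=4t/(n(b-a))$, which gives the exponent $-2t^2/n$ and completes the one-sided bound. The case $t=0$ is trivial, and if $b=a$ every $X_k$ is constant and the bound is immediate. Combining the two tails as described above finishes the proof.
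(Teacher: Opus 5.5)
Your proof is correct and is the standard argument: the Chernoff bound combined with Hoeffding's lemma, with the lemma proved via the convexity bound and the identity $L''(u)=\rho(1-\rho)\le \tfrac14$. The paper gives no proof of its own. It simply cites McDiarmid's Theorem 2.5, whose proof follows this same route, so there is no different approach to compare against.

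One aside in your write-up is wrong, though: the case $b=a$ is not ``immediate.'' If $b=a$, then $t(b-a)=0$, so the event $|S-\mu|\ge 0$ has probability $1$. Meanwhile $2\exp(-2t^2/n)<1$ as soon as $t^2>\tfrac{n}{2}\ln 2$. So the statement as written is false in that degenerate case. It needs the hypothesis $a<b$, or a strict inequality $|S-\mu|>t(b-a)$, and you should flag this rather than dismiss it. For $a<b$ your argument goes through verbatim; this is also the only case in which your $\theta=-\alpha/(\beta-\alpha)$ is defined.
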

\begin{proof}[Proof of \Cref{lem:mat_swap_f}]
We first show the algorithm is indeed a swap algorithm. Since $h$ is a matroid exchange map it follows that $A-z+h(z)\in \Ical$ for all $z\in Z$, therefore $A-I+J\in \Ical$ always holds. By the same argument, for all $z\in Z\cap A$ we have $h(z)=z$, and therefore if $J\in A$ we also have $I=J$. Since $Z$ is a matroid base $|Z|=k$, and the probability of every item $z\in Z$ to be selected is exactly $\frac{1}{k}$. Every item $a\in A$ has exactly one item $z\in Z$ such that $a=h(z)$, hence the probability of $I=a$ is $\frac{1}{k}$. Finally, for every $j\in U$, $$\Pr(J=j)~~=~~ \begin{cases} \frac{1}{k} & j\in Z\\ 0& j\notin Z\end{cases} ~~\leq~~ \frac{1}{k}.$$
That is, we showed that \Cref{algo:mat_swap_f} is a swap algorithm.

The number of oracle calls the algorithm makes is $O(n\cdot m) = O\left(\frac{n\cdot (k\cdot \ln n+\ln(1/\delta))}{\delta^2}\right)$. 

For every $i\in U$ define $w_i = F(t\cdot \indicator_{A}\vee \indicator_{i}) -F(t\cdot \indicator_A)$. 
It can be easily observed that $w_i =\E[\tilde{w}_i] =\E[f(R_{\ell}+i)-f(R)]$ for all $\ell\in\{1,\ldots,m\}$.
In order to show that \Cref{algo:mat_swap_f} is $\eta$-almost-above average we first show that for every $Z\in \Ical$ the sum
$\sum_{i\in Z} \tilde{w}_i$ does not deviate afar from $\sum_{i\in Z} w_i$.  
\begin{claim}
\label{claim:Zbound}
For every $Z\in \Ical$ it holds that 
$$\Pr\left(\left|\sum_{i\in Z} \tilde{w}_i-\sum_{i\in Z} w_i \right|\geq \frac{\delta}{5}\cdot  (f(O)+\mar(f,\Ical)) \right)\,<\, 2\cdot \exp\left(-\frac{2}{25}\cdot\delta^2\cdot m\right).$$ 
\end{claim}
\begin{claimproof}
Define 
$X_{\ell} = \sum_{i\in Z} (f(R_{\ell}+ i)-f(R_{\ell}))$ for all $\ell \in \{1,\ldots, m\}$. It can be easily verified that 
$\E[X_{\ell}] = \sum_{i\in Z} w_i$ and that 
$\sum_{i\in Z} \tilde{w}_i = \frac{1}{m}\cdot \sum_{\ell=1}^m X_{\ell}$. 
Our goal is to apply Hoeffding’s inequality (\Cref{thm:hoef}). To do so, we first derive upper and lower bounds on the possible values of the random variables $X_\ell$. First, submodularity and by the definition of $\mar(f,\Ical)$, for all $\ell \in \{1,\ldots,m\}$ we have,
$$
X_{\ell} = \sum_{i\in Z} (f(R_{\ell}+i) -f(R_{\ell})) \leq \sum_{i\in Z'} (f(R_{\ell}+i)-f(R_{\ell}))\leq \mar(f,\Ical),
$$
where $Z'=\{i\in Z | f(R_{\ell}+i)-f(R_{\ell}) \geq 0 \}$. 
By submodularity we also have,
$$
X_{\ell} = \sum_{i\in Z} (f(R_{\ell} +i)-f(R_{\ell})) \geq f(R_{\ell}\cup Z) -f(R_{\ell}) \geq -f(R_{\ell}) \geq -f(O),
$$
where the last inequality holds as $R_{\ell}\subseteq A$, and therefore $R_{\ell}\in \Ical$. 
Overall, we showed that $-f(O)\leq X_{\ell} \leq \mar(f,\Ical)$ for all $\ell\in \{1,\ldots,m\}$. Therefore by  \Cref{thm:hoef} we have
$$
\begin{aligned}
    &\Pr\left(\left|\sum_{i\in Z} \tilde{w}_i\right|> \frac{\delta}{5}\cdot  (f(O)+\mar(f,\Ical)) \right) \\
    =& 
    \Pr\left( \left|\sum_{\ell=1}^{m} X_{\ell } - \E\left[\sum_{\ell=1}^{m} X_{\ell }\right ]\right| > m\cdot  \frac{\delta}{5} (f(O)+\mar(f,\Ical))   \right)\\
    \leq& 2\exp\left(-2\cdot \frac{\left(\frac{m\cdot \delta}{5}\right)^2}{m }\right)\\
    =&2\cdot\exp\left(-\frac{2}{25}\cdot \delta^2\cdot m \right)
\end{aligned}
$$

\end{claimproof}

Define an event $\Psi = \left\{ \forall Z\in \Ical:~~(\left|\sum_{i\in Z} \tilde{w}_i-\sum_{i\in Z} w_i \right|\geq \delta (f(O)+\mar(f,\Ical))\right\} $. By \Cref{claim:Zbound} and the union bound we have
$$
\Pr(\neg \Psi) \leq |\Ical|\cdot 2\cdot \exp\left(-\frac{2}{25}\cdot \delta^2\cdot m\right) \leq (2n)^k\cdot 2 \cdot \exp\left(-\frac{2}{25}\cdot \delta^2\cdot m\right)  \leq \exp(\ln(\delta /5)) =\frac{\delta}{5},
$$
which implies $\Pr(\Psi)\geq 1-\frac{\delta}{5}$. 

We evaluate the output of the algorithm for the case $\Psi$ has occurred or not separately.
\begin{itemize}
\item 
Subject to the assumption that $\Psi$ occured, then it holds that 
$$
\begin{aligned}
\sum_{i\in Z} w_i &\geq \sum_{i\in Z} \tilde{w}_i -\frac{\delta}{5} \cdot \left(\mar(f,\Ical) +f(O)    \right) \\
&\geq  \sum_{i\in O} \tilde{w}_i -\frac{\delta}{5} \cdot \left(\mar(f,\Ical) +f(O)    \right)\\
&\geq \sum_{i\in O} {w}_i -2
\cdot \frac{\delta}{5} \cdot \left(\mar(f,\Ical) +f(O)    \right),\\
\end{aligned}
$$
where the first and last inequalities hold as we assume $\Psi$ has happened, and the second inequality holds by the selection of $Z$ in the algorithm. 
The algorithm then returns each $z\in Z$ uniformly at random, therefore,
$$
\begin{aligned}
\E&[F(t\cdot \indicator_A\vee \indicator_{J} ) -F(t\cdot \indicator_A)| \Psi]\\ &= \E\left[
\frac{1}{k} \sum_{i\in Z }\left(F(t\cdot \indicator_A\vee \indicator_{\{i\}} ) -F(t\cdot \indicator_A)\right) \,\middle|\,\Psi\right] \\
&= \E\left[\frac{1}{k} \sum_{i\in Z }w_i |\Psi\right] \\
&\geq\frac{1}{k} \sum_{i\in O} {w}_i -\frac{1}{k}\cdot 2
\cdot \frac{\delta}{5} \cdot \left(\mar(f,\Ical) +f(O)    \right)\\
&= \frac{1}{k} \sum_{i\in O} \left( F(t\cdot \indicator_A\vee \indicator_{\{i\}} ) -F(t\cdot \indicator_A)\right) -\frac{2
\cdot \delta}{5\cdot k} \cdot \left(\mar(f,\Ical) +f(O)    \right) \\
&\geq \frac{1}{k}\cdot\left( F(t\cdot \indicator_A \vee \indicator_{O}) -f(t\cdot \indicator_A) \right) -\frac{2
\cdot \delta}{5\cdot k} \cdot \left(\mar(f,\Ical) +f(O)    \right). 
\end{aligned}
$$
\item  In case $\Psi$ does not occur we use a universal lower  bound over $\sum_{i\in S} F(t\cdot \indicator_{A}\vee \indicator_{\{i\}})- F(t\cdot \indicator_A)$ which holds for every independent set $S\in \Ical$. Specifically,  let $R\sim t\cdot \indicator_A$ be a random set. Then for all $S\in \Ical$ we have
$$
\begin{aligned}
\sum_{i\in S }F(t\cdot \indicator_{A}\vee \indicator_{\{i\}})- F(t\cdot \indicator_A)&= \sum_{i\in S }\E\left[f(R+i)-f(R) \right]\\
&= \E\left[\sum_{i\in S } (f(R+i)-f(R)) \right ]\\
&\geq \E\left[ f(R\cup S)-f(R) \right]\\
&\geq -\E[-f(R)] \\
&\geq -f(O),
\end{aligned}
$$
where the first inequality holds by submodularity, the second inequality holds as $f(R\cup S)\geq 0$ and the last inequality holds as $R\subseteq A\in \Ical$ and therefore $f(R)\leq f(O)$. By the above inequality we have 
$$
\begin{aligned}
\E&[F(t\cdot \indicator_A\vee \indicator_{J} ) -F(t\cdot \indicator_A)| \neg \Psi]\\ &= \E\left[
\frac{1}{k} \sum_{i\in Z }\left(F(t\cdot \indicator_A\vee \indicator_{\{i\}} ) -F(t\cdot \indicator_A)\right) \,\middle|\,\neg \Psi\right] \\
&\geq \E\left[-\frac{1}{k} f(O)\,\middle|\,\neg \Psi\right] \\
&\geq -\frac{1}{k}\cdot f(O).
\end{aligned}
$$
\end{itemize}
Using the above two cases together we get
$$
\begin{aligned}
\E&\left[F(t\cdot \indicator_A\vee \indicator_{J} ) -F(t\cdot \indicator_A)\right]\\
&= \Pr(\Psi)\cdot 
\E[F(t\cdot \indicator_A\vee \indicator_{J} ) -F(t\cdot \indicator_A)| \Psi]+\Pr(\neg \Psi)\cdot 
\E[F(t\cdot \indicator_A\vee \indicator_{J} ) -F(t\cdot \indicator_A)| \neg \Psi]\\
&\geq \Pr(\Psi)\cdot \left( \frac{1}{k}\cdot\left( F(t\cdot \indicator_A \vee \indicator_{O}) -F(t\cdot \indicator_A) \right) -\frac{2
\cdot \delta}{5\cdot k} \cdot \left(\mar(f,\Ical) +f(O)    \right) \right) \\
&~~~~~+ \Pr(\neg \Psi)\cdot \left(-\frac{1}{k}\cdot f(O).\right)\\
&\geq \Pr(\Psi)\cdot \frac{1}{k} \cdot F(t\cdot \indicator_{A}\vee \indicator_O) -\frac{1}{k}\cdot F(t\cdot \indicator_A) -\frac{3\cdot \delta }{5\cdot k}\cdot (\mar(f,\Ical) + f(O))\\
&\geq \left(1-\frac{\delta}{5}\right) \cdot \frac{1}{k} \cdot F(t\cdot \indicator_{A}\vee \indicator_O) -\frac{1}{k}\cdot F(t\cdot \indicator_A) -\frac{3\cdot \delta }{k}\cdot (\mar(f,\Ical) + f(O))\\
&\geq \frac{1}{k} \cdot F(t\cdot \indicator_{A}\vee \indicator_O) -\frac{1}{k}\cdot F(t\cdot \indicator_A) -\frac{5\cdot \delta }{5\cdot k}\cdot (\mar(f,\Ical) + f(O))
\end{aligned}
$$
where  the second inequality holds as $\Pr(\neg \Psi)\leq \frac{\delta}{5}$ and $\Pr(\Psi)\leq 1$, and the last inequality holds as $\Pr(\Psi)\geq 1-\delta/5 $ and $F(t\cdot\indicator_A\vee \indicator_O)\leq 2\cdot f(O)$. Overall, we showed that \Cref{algo:mat_swap_f} is $\eta$-almost-above average, as required.

\end{proof}

\subsubsection{Partition Matroid}
\label{sec:partitionf}

We present a faster implementation of \Cref{algo:mat_swap_f} for partition matroids.
Recall that in a partition matroid the ground set $U$ is partitioned into $k$ parts
$P_1,\ldots,P_k$, and a set $S\subseteq U$ is independent if and only if
$|S\cap P_j|\leq 1$ for every $j\in\{1,\ldots, k\}$.

Consider the logic of \Cref{algo:mat_swap_f}. It first computes  estimates $\tilde{w}_i$ of $F(t\cdot \indicator_{A}\vee \indicator_{\{i\}})-F(t\cdot \indicator_{A})$ for all $i\in U$. The  algorithm then uses these estimates to find a  base $Z$  maximizing $\sum_{i\in Z} \tilde{w}_i$.    Finally, the algorithm returns a random item $z\in Z $ together with an item in $A$ that can be swapped with it.

For partition matroids we can reconstruct the same logic while only estimating $\tilde{w}_i$ for items in a single part $P_j$. As in \Cref{algo:mat_swap_f}, the algorithm begins by sampling random sets $R_1,\ldots, R_m\sim t\cdot \indicator_{A}$. 
For every $i\in U$ let  $\tilde{w}_i = \frac{1}{m}\sum_{\ell=1}^{m} (f(R_{\ell}+i) -f(R_{\ell}))$ denote the estimate of $F(t\cdot \indicator_{A} \vee \indicator_{\{i\}}) -F(t\cdot  \indicator_A)$
based on the samples $R_1,\ldots, R_m$. Although these estimates are well defined for every $i\in U$, the algorithm does not compute all of them.
\Cref{algo:mat_swap_f} finds a base $Z$ maximizing $\sum_{i\in Z}\tilde{w}_i$, computes a matroid exchange map $h$ from $Z$ to $A$, picks $z\in Z$ uniformly at random, and returns $J=z$ and $I=h(z)$. The analysis of \Cref{algo:mat_swap_f} (\Cref{lem:mat_swap_f}) implies that 
$$\E[ F(t\cdot \indicator_{A}\vee \indicator_{\{J\}} ) -F(t\cdot \indicator_A)] \geq \frac{1}{k}\cdot\left( F(t\cdot \indicator_A \vee \indicator_{O}) -F(t\cdot \indicator_A) \right) -\frac{
\delta}{ k} \cdot \left(\mar(f,\Ical) +f(O)    \right).$$

In the case of partition matroids, $Z=\{z_1,\ldots, z_k\}$ with $z_j =\argmax_{i\in P_j}  \tilde{w}_i$. Then, sampling a uniformly random element of $Z$ is equivalent to sampling $j\in [k]$ uniformly at random, and selecting $z_j$. Moreover, the exchange map $h$ must satisfy  $h(z_j)=\begin{cases} a_j &A\cap P_j = \{a_j\}\\
\perp & A\cap P_j=\emptyset\end{cases}$ for all $j\in \{1,\ldots,k\}$. Therefore, to simulate the execution of \Cref{algo:mat_swap_f}, we can simply sample $j\in \{1,\ldots, k\}$ uniformly at random, find $z_j = \argmax_{i\in P_j} \tilde{w}_i$, and return $z_j $ together with $a_j$  if $P_j\cap A=\{a_j\}$ or $\perp$ if $P_j\cap A=\emptyset$.

The resulting algorithm is given in
\Cref{algo:part_swap_f}. As the algorithm only computes $\tilde{w}_i$ for $i\in P_j$, the number  of oracle queries it uses is $O(|P_j|\cdot m)$. As  $\E[|P_j|]=\frac{n}{k}$, we get that the number of oracle calls in expectation is
$$
O\left(\frac{n}{k}\cdot m \right) =O\left(\frac{n}{\delta^2}\cdot \left(\ln n +\frac{\ln(1/\delta)}{k}\right)\right).
$$
The following lemma follows from the above discussion.
\begin{lemma}
Algorithm \Cref{algo:part_swap_f} is an $\eta$-almost-above-average swap algorithm for partition matroids  which uses $O\left(\frac{n}{\delta^2}\cdot \left(\ln n +\frac{\ln(1/\delta)}{k}\right)\right)$ oracle queries to $f$ in expectation , where $\eta= \delta\cdot (f(O)+\mar(f,\Ical) )$ and $O=\argmax_{S\in\Ical} f(S)$.  
\end{lemma}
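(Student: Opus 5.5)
The plan is to reduce the statement to \Cref{lem:mat_swap_f} by a coupling argument. I will show that, for every fixed input $(t,A)$, the output pair $(I,J)$ of \Cref{algo:part_swap_f} has exactly the same distribution as the output of \Cref{algo:mat_swap_f} run on the same partition matroid with the same $m$. Validity (properties \ref{it-valid}--\ref{it-j}) and the $\eta$-almost-above-average inequality~\eqref{eq:almost-above-avg} depend only on the distribution of $(I,J)$. Once the distributions agree, both properties transfer verbatim from \Cref{lem:mat_swap_f}, with $\eta=\delta\cdot(f(O)+\mar(f,\Ical))$.

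For the coupling, let both algorithms draw the same samples $R_1,\ldots,R_m\sim t\cdot\indicator_A$, which defines estimates $\tilde w_i$ for all $i\in U$ whether or not they are computed. In a partition matroid, a base maximizing $\sum_{i\in Z}\tilde w_i$ is exactly $Z=\{z_1,\ldots,z_k\}$ with $z_j=\argmax_{i\in P_j}\tilde w_i$, fixing one common tie-breaking rule. Any exchange map $h$ as in \Cref{def:exchange} is forced to satisfy $h(z_j)=a_j$ if $A\cap P_j=\{a_j\}$ and $h(z_j)=\perp$ otherwise. I will verify this in three steps:
\begin{itemize}
\item By feasibility, $A-h(z_j)+z_j\in\Ical$ forces $h(z_j)$ to be the element of $A$ in $P_j$ whenever $A\cap P_j\neq\varnothing$ and $z_j\notin A$.
\item If $z_j\in A$, then $h(z_j)=z_j=a_j$ by the second property of \Cref{def:exchange}.
\item If $A\cap P_j=\varnothing$, then $h(z_j)$ cannot be some $a_{j'}$ with $j'\neq j$. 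That $a_{j'}$ would have to be the unique preimage required by the third property, but $z_{j'}$ must map to $a_{j'}$ by the first two cases, so $h(z_j)=\perp$.
\end{itemize}

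Drawing a uniform $z\in Z$ is the same as drawing a uniform part index $j\in[k]$ and outputting $(h(z_j),z_j)$. This is exactly what \Cref{algo:part_swap_f} does, and computing $z_j$ only needs $\tilde w_i$ for $i\in P_j$. Conditional on the $R_\ell$ the two output laws coincide, hence they coincide unconditionally. Right continuity is inherited in the same way as for \Cref{lem:mat_swap_f}.

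For the query count, fix $j$. Computing $\tilde w_i$ for $i\in P_j$ needs at most $(|P_j|+1)m$ evaluations of $f$: namely $f(R_\ell)$ and $f(R_\ell+i)$ for each $\ell$. Since $j$ is uniform, $\E[|P_j|]=n/k$, so the expected count is $O((n/k+1)m)$. Substituting $m=\frac{25}{2\delta^2}(k\ln(2n)+\ln 2+\ln(5/\delta))$ gives $O\!\left(\frac{n}{\delta^2}\left(\ln n+\frac{\ln(1/\delta)}{k}\right)\right)$, using $n\ge k$ to absorb the $+1$ term.

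The main obstacle is making sure the union bound over all independent sets inside \Cref{lem:mat_swap_f} still applies even though most $\tilde w_i$ are never computed. The coupling handles this: the event $\Psi$ is defined on the full sample vector $(R_1,\ldots,R_m)$, which is identical in both algorithms, so the analysis of \Cref{lem:mat_swap_f} carries over unchanged.
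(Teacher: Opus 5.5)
Your proposal is correct and follows the paper's own argument. You show that \Cref{algo:part_swap_f} simulates \Cref{algo:mat_swap_f} on the partition matroid: the maximizing base takes one $\argmax$ per part, the exchange map is forced, and a uniform element of $Z$ is a uniform part. Hence the guarantees of \Cref{lem:mat_swap_f} transfer, and the query bound follows from $\E[|P_j|]=n/k$.

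Your explicit check that $h$ is forced, the per-sample coupling that keeps the union bound behind $\Psi$ intact, and the $(|P_j|+1)m$ count are more careful versions of the paper's discussion. The remark on right continuity is not needed for this statement, and \Cref{lem:mat_swap_f} does not actually establish it.
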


 \begin{algorithm}[!t]
\caption{Partition Swap $f$}
\label{algo:part_swap_f}
\vspace{4pt}\hrule\vspace{4pt}
\KwIn{$t\in[0,1]$, $A\in\mathcal I$}

Define $m=\frac{25}{2\cdot \delta^2} \cdot \left(k\ln (2n) +\ln 2+\ln (5/\delta) \right) $.

Sample $m$ sets $R_1,\ldots R_m \sim t\cdot \indicator_{A}$.

Sample $j\in \{1,\ldots ,k\}$ uniformly at random. 

For every $i\in P_j$ compute $\tilde{w}_i = \frac{1}{m} \cdot \sum_{\ell=1}^{m}( f(R_{\ell}\cup \{i\}) -f(R_{\ell}))$.

Let $z_j = \argmax_{i\in P_j} \tilde{w}_i$. 

Define $a_j = \begin{cases} i &A\cap P_j = \{i\}\\
\perp & A\cap P_j=\emptyset\end{cases}$

\Return{$I=a_j, J=z_j$}
\end{algorithm}

\newpage

\appendix

\section{Properties of Submodular Functions}\label{app:properties}

\multilin*

\begin{proof}
If \(u = \ell = 0\), both sides are equal to \(0\), otherwise,
    since \(F\) is linear in the \(k\)-th coordinate, it holds that
    \[
    \frac{F(\vc{a} + u \mathbf{e}_k) - F(\vc{a} - \ell \mathbf{e}_k)}{u + \ell}
    =
    \nabla_k F(\vc{a}) = \frac{F(\vc{a} \vee \mathbf{e}_k) - F(\vc{a})}{1 - a_k}.
    \]
    Rearranging the terms leads to the desired equation.
\end{proof}

\lovaszestimate*
\begin{proof}
     Let $\mathbf{\alpha}\in [0,1]^{U}$ denote the probability marginal vector of the random set \(R\), i.e. $Pr[i\in R]=\alpha_i$.
     Note that \(\alpha_i \geq p\) for every \(i \in A\) and \(\alpha_i \leq q\) for every \(i \not \in A\).
     Since the Lovász extension is the convex closure of a submodular set function, we have
     \[
     \E[f(R)] \geq F_L(\alpha).
     \]
 But then for the Lovasz extension $F_L(\alpha)$, we have
\begin{align*}
    F_L(\alpha):=E_{r\in  \mathrm{unif}[0,1]}[f(\{i:\alpha_i\geq r\})]
\end{align*}
But when $q< r< p$, the threshold set $\{i:\alpha_i\geq r\} = A$. Thus the above expectation is at least $(p-q)f(A)$ where we use non-negativity of $f$.
If \(p \leq q\), the inequality holds trivially by non-negativity.
\end{proof}

\section{\texorpdfstring{Proofs from \Cref{sec:poisson}}{}}
\label{sec:appendix-proofs}

\lemmatransition*

Before proving \Cref{lem:transit-prob}, we will prove a useful lemma relating the probability derivative with the Poisson Process.
The proof is the same as in Claim 3.6 in \cite{GanzKSS26a}.

\begin{restatable}{claim}{cltransition}
For every time \(\varepsilon \leq t < 1\) and \(S, T \in \Ical\):
\label{cla:partial}
    \[
    \frac{\partial_+}{\partial r} \left(\Pr[A(r) = T \; | \; A(t) = S]\right)\big\rvert_{r = t}
    =
    \lambda(t)
    \left(
    \lim_{\delta \to 0^+} \Pr[A(t + \delta) = T \; | \; A(t) = S, N = 1] - \1_{T = S}
    \right)
    \]
\end{restatable}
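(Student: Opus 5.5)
We prove Claim~\ref{cla:partial} by decomposing the transition probability over the number of Poisson events in $(t,t+\delta]$, writing $N=N_{(t,t+\delta]}$. By the law of total probability,
\[
\Pr[A(t+\delta)=T \mid A(t)=S] = \sum_{m\in\{0,1\}} \Pr[N=m]\Pr[A(t+\delta)=T\mid A(t)=S, N=m] + \Pr[N\geq 2]\cdot(\cdots),
\]
where the last term is at most $\Pr[N\ge 2]$. We use property~\ref{Poisson-independ} to justify that $N$ is independent of the history up to time $t$, and hence of the event $A(t)=S$. This lets us use the unconditional probabilities $\Pr[N=m]$ in the decomposition. If $N=0$, the set does not change, so $A(t+\delta)=S$ and the conditional probability equals $\1_{T=S}$.

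Next we subtract the value at $r=t$, which is $\1_{T=S}$, and divide by $\delta$. This gives
\[
\frac{1}{\delta}\Big(\Pr[N=0]-1\Big)\1_{T=S} + \frac{\Pr[N=1]}{\delta}\Pr[A(t+\delta)=T\mid A(t)=S,N=1] + O\!\left(\frac{\Pr[N\ge2]}{\delta}\right).
\]
Property~\ref{Poisson-events} shows that $\Pr[N=1]/\delta\to\lambda(t)$ and $\Pr[N\ge 2]/\delta\to 0$. Therefore
\[
(1-\Pr[N=0])/\delta=(\Pr[N=1]+\Pr[N\ge2])/\delta\to\lambda(t).
\]
Combining the two terms yields $\lambda(t)\big(\lim_{\delta\to0^+}\Pr[A(t+\delta)=T\mid A(t)=S,N=1]-\1_{T=S}\big)$, provided this limit exists.

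The main obstacle is showing that the limit of the conditional probability given exactly one event exists. Given $N=1$, the event occurs at some time $s\in(t,t+\delta]$. The outcome is determined by $\swap(s,S)$ followed by a drop with probability $s$. So the conditional probability is an average over $s$ of expressions built from $p_{ij}(s,S)$ and $s$. We use the right continuity of the swap procedure in $t$: the probabilities $p_{ij}(s,S)\to p_{ij}(t,S)$ as $s\to t^+$. Hence the average converges to the corresponding expression at time $t$, and the limit exists. This step is exactly where the hypothesis of a right continuous swap procedure is needed, and it matches the argument of Claim~3.6 in \cite{GanzKSS26a}.
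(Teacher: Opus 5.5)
Your proof is correct and follows the paper's argument. Both condition on $N\in\{0,1,\geq 2\}$, use independent increments to replace $\Pr[N=m\mid A(t)=S]$ by $\Pr[N=m]$, observe that the $N=0$ branch contributes exactly $\1_{T=S}$, and pass to the limit using the second Poisson property; the paper distributes $\1_{T=S}$ across the three branches via $\sum_m \Pr[N=m]=1$, which is the same algebra as your $(1-\Pr[N=0])/\delta\to\lambda(t)$ step. Your extra step shows that $\lim_{\delta\to0^+}\Pr[A(t+\delta)=T\mid A(t)=S,N=1]$ exists by averaging the right-continuous quantities $p_{ij}(s,S)$ and $s$ over the event time $s\in(t,t+\delta]$; this fills in a point the paper leaves implicit and is exactly where the right-continuity hypothesis of the main lemmas is used.
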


\begin{proof}
First, we note that the definition of a right derivative gives that:
\begin{align}
&\frac{\partial_+ \Pr [A(r)=T|A(t)=S]}{\partial r}\bigg|_{r=t}  = \nonumber\\&\lim_{\delta\rightarrow 0^+}\frac{1}{\delta} \Big( \Pr\left[A(t+\delta)=T|A(t)=S\right] - \Pr[A(t)=T|A(t)=S]\Big) =\nonumber \\
&  \lim _{\delta \rightarrow 0^+}\frac{1}{\delta}\Big( \Pr\left[A(t+\delta)=T|A(t)=S\right] - 1_{S=T}\Big)
\label{eq1-dervidef}
\end{align}
where \eqref{eq1-dervidef} follows from the observation that $\Pr[A(t)=T|A(t)=S]$ equals $1$ if $S=T$ and $0$ otherwise.

The law of total probability applied to $ \Pr[A(t+\delta)=T|A(t)=S]$ and conditioned on the number of events in the interval $ (t,t+\delta]$ being either $0$, $1$, or at least $2$, yields that:
\begin{align}
\Pr&[A(t+\delta)=T|A(t)=S]  \nonumber =\\ & \Pr[N=0|A(t)=S]\cdot \Pr[A(t+\delta)=T|A(t)=S,N=0] + \nonumber \\
& \Pr[N=1|A(t)=S]\cdot \Pr[A(t+\delta)=T|A(t)=S,N=1] + \nonumber \\
& \Pr[N\geq 2|A(t)=S]\cdot \Pr[A(t+\delta)=T|A(t)=S,N\geq 2] . \label{eq2-totalprob}
\end{align}
It is important to note that the random variable $A(t)$ depends only on all the random choices in the time interval $ [\varepsilon,t]$, i.e., the randomness of the Poisson process and the randomness of elements selection (step \ref{poisson:item_sample} in Algorithm \ref{alg:Poisson}) in the time interval $ [\varepsilon,t]$.
In contrast, the random variable $ N$ depends only on the randomness of the Poisson process in the time interval $ (t,t+\delta]$.
Since these two time intervals are disjoint, property \ref{Poisson-independ} of a Poisson process implies that $A(t)$ and $ N$ are independent random variables.
Hence, we can conclude from \eqref{eq2-totalprob} above that:
\begin{align}
(\Pr&[A(t+\delta)=T|A(t)=S]-1_{T=S}) =\nonumber\\ &
\Pr[N=0]\cdot \left(\Pr[A(t+\delta)=T|A(t)=S,N=0]-1_{T=S} \right) + \nonumber \\
& \Pr[N=1]\cdot \left(\Pr[A(t+\delta)=T|A(t)=S,N=1]-1_{T=S} \right) + \nonumber \\
& \Pr[N\geq 2]\cdot \left(\Pr[A(t+\delta)=T|A(t)=S,N\geq 2]-1_{T=S} \right)    =\nonumber  \\
& \Pr[N=1]\cdot \left(\Pr[A(t+\delta)=T|A(t)=S,N=1]-1_{T=S} \right) + \nonumber \\
& \Pr[N\geq 2]\cdot \left(\Pr[A(t+\delta)=T|A(t)=S,N\geq 2]-1_{T=S} \right) . \label{eq3new}
\end{align}
Equality \eqref{eq3new} follows from the simple observation that $ \Pr[A(t+\delta)=T|A(t)=S,N=0]$ equals $1$ if $T=S$ and $0$ otherwise. 

Plugging \eqref{eq3new} into \eqref{eq1-dervidef} gives:
\begin{align}
&\frac{\partial_+ \Pr [A(r)=T|A(t)=S]}{\partial r}\bigg|_{r=t}\nonumber =\\ & \lim _{\delta \rightarrow 0^+} \frac{1}{\delta}\Pr [N=1]\left( \Pr[A(t+\delta)=T|A(t)=S,N=1]-1_{T=S}\right) + \nonumber \\
& \lim _{\delta \rightarrow 0^+} \frac{1}{\delta}\Pr[N\geq 2] \left( \Pr[A(t+\delta)=T|A(t)=S,N\geq 2] - 1_{T=S}\right)  =\nonumber \\
 & \lambda(t) \left( \lim_{\delta\to 0^+}\Pr[A(t+\delta)=T|A(t)=S,N=1] - 1_{T=S}\right). \label{eq4new}
\end{align}

Equality \eqref{eq4new} follows from property \ref{Poisson-events} of a Poisson process, i.e., $\lim _{\delta \rightarrow 0^+}\frac{1}{\delta}\Pr[N_{(t,t+\delta]}=1]=\lambda(t)$ and $\lim _{\delta \rightarrow 0^+}\frac{1}{\delta}\Pr[N_{(t,t+\delta]}\geq 2]=0$.
\end{proof}

\begin{proof}[Proof of \Cref{lem:transit-prob}]
    There are two ways a set can change in one event.
    Either \(I \neq J\), and a swap operation is executed, and note that since \(I\) can be \(\perp\), this swap may be simply the addition of element \(J\).
    Or \(I = J\), in which case the element may be dropped.
    The set can also stay unchanged.
    All other sets cannot be reached in a single event.
    Using \Cref{cla:partial}, we will prove the correctness of the statement case by case.

    For the first case, we have by definition,
    \[
    \lim_{\delta \to 0^+} \Pr[A(t + \delta) = S - i + j \; | \; A(t) = S, N = 1] = p_{ij}(t, S),
    \]
    and \(\lambda(t) = k/t\). By \Cref{cla:partial}, we are done.
    
    For the second case,
    we need \(I = J = i\), and \(i\) to be dropped with probability \(t\).
    This is the only way an element can be dropped since if \(J\) already belongs to \(A\), then \(I = J\) by \ref{it-inter} in \Cref{def:valid-swap-pair} .
    Thus,
    \[
    \lim_{\delta \to 0^+} \Pr[A(t + \delta) = S - i \; | \; A(t) = S, N = 1] = t p_{ii}(t, S).
    \]
    Again, \Cref{cla:partial} finishes the proof of this case.
    
    For the third case, it must be that \(I = J\) and the element was not dropped.
    The case \(I = \perp\) and \(J \in A\) cannot happen by \ref{it-inter} in \Cref{def:valid-swap-pair}.
    Thus,
    \[
    \lim_{\delta \to 0^+} \Pr[A(t + \delta) = S \; | \; A(t) = S, N = 1] = (1 - t)\sum_{i \in S} p_{ii}(t, S).
    \]
    By \ref{it-i} in \Cref{def:valid-swap-pair},
    for every \(i \in S\), it holds that 
    \[
    p_{ii}(t, S) = \frac{1}{k} - \sum_{j \in U: j\neq i} p_{ij}(t, S),
    \]
    and
    \[
    \sum_{j \in U} p_{\perp j}(t, S)
    =
    1 - \sum_{i \in S} \sum_{j \in U} p_{ij}(t, S)
    =
    1 - \frac{|S|}{k}.
    \]
    
    Thus, combining \Cref{cla:partial} with the previous observations
    \begin{align*}
    \frac{\partial_+}{\partial r} \left(\Pr[A(r) = S \; | \; A(t) = S]\right)\big\rvert_{r = t} 
    &=
    \frac{k}{t}\left((1 - t) \sum_{i \in S} p_{ii}(t, S) - 1\right)\\
    &=
    \frac{-k}{t}
    \left(
    1 - \sum_{i \in S} p_{ii}(t, S) + t\sum_{i \in S} p_{ii}(t, S)
    \right)\\
    &=
    \frac{-k}{t}
    \left(
    1 - \frac{|S|}{k} + \sum_{i \in S} \sum_{j \in U: j \neq i} p_{ij}(t, S) + t\sum_{i \in S} p_{ii}(t, S)
    \right)\\
    &=
    \frac{-k}{t}
    \left(
    \sum_{i \in S \cup \{\perp\}} \sum_{j \in U: j \neq i} p_{ij}(t, S) + t\sum_{i \in S} p_{ii}(t, S)
    \right).
    \end{align*}
    
    All other sets cannot be reached in a single event. Thus,
    \[
    \lim_{\delta \to 0^+} \Pr[A(t + \delta) = T \; | \; A(t) = S, N = 1] = 0. \qedhere
    \]
\end{proof}

\lempotaverage*

Before proving \Cref{lem:potential_average}, we prove a Lemma bounding the growth of of \(V_{S, t}(r)\). After that the Lemma will follow by applying the above-average property of our swap procedure.

\begin{lemma}
\label{lem:cond-pot}
Let \(t \in [\varepsilon, 1)\) and condition on the event \(A(t) = S\).
 If we run \Cref{alg:Poisson} with a valid swap procedure, it holds that
    \[
    \frac{\partial_+}{\partial r} V_{S, t}(r) \big\rvert_{r = t} \geq k\E[F(t\1_S \vee \1_J) - F(t\1_S)].
    \]
\end{lemma}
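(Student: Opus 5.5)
The plan is to compute the right derivative of
\(V_{S,t}(r)=\sum_{T\in\Ical}F(r\1_T)\Pr[A(r)=T\mid A(t)=S]\) directly with the product rule. The sum over \(T\) is finite, \(r\mapsto F(r\1_T)\) is a polynomial, and each transition probability is right-differentiable at \(r=t\) by \Cref{lem:transit-prob}. This gives two contributions.

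The first is a ``drift'' term coming from differentiating \(F(r\1_T)\) at \(T=S\), where the probability equals \(1\):
\[
\sum_{i\in S}\partial_i F(t\1_S).
\]
The second is a ``jump'' term \(\sum_T F(t\1_T)\,\partial_+\Pr[\cdot]\). Substituting \Cref{lem:transit-prob}, and noting that the derivatives sum to zero, it becomes
\[
\tfrac{k}{t}\sum_{i\neq j}p_{ij}\bigl(F(t\1_{S-i+j})-F(t\1_S)\bigr)+k\sum_{i\in S}p_{ii}\bigl(F(t\1_{S-i})-F(t\1_S)\bigr).
\]
Here \(i\) ranges over \(S\cup\{\perp\}\) and \(j\) over \(U\setminus S\).

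Next I will expand each jump using the multilinearity identity \Cref{lem:multi-lin}. Recall \(\partial_iF(\vc a)=(F(\vc a\vee \mathbf e_i)-F(\vc a))/(1-a_i)\). A swap is split as ``add \(j\)'' followed by ``remove \(i\)'':
\[
F(t\1_{S-i+j})-F(t\1_S)=t\bigl(F(t\1_S\vee\1_j)-F(t\1_S)\bigr)-t\,\partial_iF(t\1_{S+j}).
\]
By submodularity, partial derivatives of \(F\) are non-increasing in each coordinate, so \(\partial_iF(t\1_{S+j})\le\partial_iF(t\1_S)\). This is the only inequality in the argument, and it is where the lower bound comes from. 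For \(i=\perp\) the removal term is absent. A drop is handled exactly by the same identity:
\[
F(t\1_{S-i})-F(t\1_S)=-t\,\partial_iF(t\1_S).
\]

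Collecting terms, the right derivative is at least
\[
k\sum_{i\neq j}p_{ij}\bigl(F(t\1_S\vee\1_j)-F(t\1_S)\bigr)+\sum_{i\in S}\partial_iF(t\1_S)\Bigl(1-k\sum_{j\neq i}p_{ij}-kt\,p_{ii}\Bigr).
\]
By property \ref{it-i}, \(k\sum_j p_{ij}=1\) for each \(i\in S\), so the bracket equals \(k(1-t)p_{ii}\). On the other side, \(k\E[F(t\1_S\vee\1_J)-F(t\1_S)]\) splits according to whether \(J\notin S\) or \(J=I\in S\); property \ref{it-inter} guarantees these are the only possibilities. In the case \(J=i\in S\), the marginal equals \((1-t)\partial_iF(t\1_S)\). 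Hence the target is exactly
\[
k\sum_{i\neq j}p_{ij}\bigl(F(t\1_S\vee\1_j)-F(t\1_S)\bigr)+k\sum_{i\in S}p_{ii}(1-t)\partial_iF(t\1_S),
\]
which matches the lower bound term for term.

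The main obstacle is this bookkeeping. One must check that the spiteful drop rate \(t\) precisely cancels the drift of the coordinates of \(S\), leaving the factor \((1-t)\) that appears in the marginal of an already-present element. The cancellation must be exact, since \(\partial_iF\) can have either sign for non-monotone \(f\), and so no slack is available there. A minor technical point is justifying the product rule for the right derivative, which is routine because all sums are finite.
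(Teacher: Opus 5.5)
Your proposal is correct and follows essentially the same route as the paper: the product rule on $V_{S,t}$, substitution of the rates from \Cref{lem:transit-prob}, and a single diminishing-returns inequality for swaps. Your $\partial_iF(t\1_{S+j})\le\partial_iF(t\1_S)$ is exactly the paper's $F(t\1_{S-i+j})-F(t\1_{S-i})\ge F(t\1_{S+j})-F(t\1_S)$, property \ref{it-i} cancels the drift against the removal and drop terms as in the paper, and the only difference is cosmetic: you write everything via $\partial_iF(t\1_S)$, whereas the paper carries the differences $F(t\1_S)-F(t\1_{S-i})$ and converts them with \Cref{lem:multi-lin} at the end.
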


\begin{proof}
   We have
    \begin{align*}
        \frac{\partial_+}{\partial r} V_{S, t}(r)\big\vert_{r = t}
        &=
        \sum_{T \in \Ical}\frac{\partial_+}{\partial r} (F(r \1_T) \Pr[A(r) = T
        \;|\; A(t) = S])\big\rvert_{r = t}\\
        &=
        \sum_{T \in \Ical} \frac{\partial_+}{\partial r} (F(r \1_T))\big\rvert_{r = t}
        \Pr[A(t) = T \;|\; A(t) = S]\\
        &+
        \sum_{T \in \Ical} F(t \1_T)
        \frac{\partial_+}{\partial r}(\Pr[A(r) = T\;|\; A(t) = S])\big\rvert_{r = t}.
    \end{align*}
By Chain rule and \Cref{lem:transit-prob}, we get
\begin{align*}
    \frac{\partial_+}{\partial r} V_{S, t}(r)\big\vert_{r = t}
    &=
    \sum_{i \in S} \nabla_i F(t \1_S)
    +
    \frac{k}{t}\sum_{i \in S \cup \{\perp\}}\sum_{j \in U: j \neq i} F(t\1_{S - i + j})p_{ij}(t, S)\\
    &+
    k \sum_{i \in S} F(t\1_{S-i})p_{ii}(t, S)
    -
    \frac{k F(t\1_S)}{t}
        \left(
        \sum_{i \in S \cup \{\perp\}}
        \sum_{j \in U: j \neq i} p_{ij}(t, S)
        +
        t\sum_{i \in S} p_{ii}(t, S)
        \right)\\
    &=
    \frac{k}{t}\sum_{i \in S \cup \{\perp\}}\sum_{j \in U: j \neq i}p_{ij}(t, S)
    (F(t\1_{S - i + j}) - F(t\1_S))\\
    &+
    \sum_{i \in S}
    \frac{F(t\1_S) - F(t\1_{S-i})}{t}
    - kp_{ii}(t, S) (F(t\1_S) - F(t\1_{S-i})),
\end{align*}
where in the first equality we used \(p_{ij}(t, S) = 0\)
for every \(j \in S\) such that \(j \neq i\) by \ref{it-inter}
and in the second equality we used the multilinearity of \(F\) to expand the gradient and reorganized the terms.

We start analyzing the first sum.
Note that
\begin{align*}
F(t\1_{S - i + j}) - F(t\1_S)
&=
F(t\1_{S - i + j}) - F(t\1_{S - i}) - (F(t\1_S) -
F(t\1_{S-i}))\\
&\geq
F(t\1_{S + j}) - F(t\1_{S}) - (F(t\1_S) -
F(t\1_{S-i})),
\end{align*}
where in the inequality we used standard diminishing returns property of the multilinear extension \cite{CCPV11}.
Thus, it is lower bounded by
\[
\frac{k}{t}
\left(
\sum_{i \in S \cup \{\perp\}}\sum_{j \in U: j \neq i}p_{ij}(t, S) (F(t\1_{S + j}) - F(t\1_S))
-
\sum_{i \in S \cup \{\perp\}}\sum_{j \in U: j \neq i}p_{ij}(t, S) (F(t\1_{S}) - F(t\1_{S-i}))
\right)
\]
The first double sum can be written as
\begin{equation}
\label{eq:step-one}
    \sum_{i \in S \cup \{\perp\}}\sum_{j \in U: j \neq i}p_{ij}(t, S) (F(t\1_{S + j}) - F(t\1_S))
    =
    t \sum_{i \in S \cup \{\perp\}}\sum_{j \in U: j \neq i}p_{ij}(t, S) (F(t\1_{S} \vee \1_j) - F(t\1_S)).
\end{equation}
We used the fact that if \(j \in S\), then \(p_{ij}(t, S) = 0\) for every \(j \neq i\) by \ref{it-inter} in \Cref{def:valid-swap-pair}, thus we only need to look at terms with \(j \not\in S\).
For those, we applied \Cref{lem:multi-lin} with
\(\mathbf{a} = t\1_S\), \(k = j\), \(u = t\), and \(\ell= 0\).

The second term can be simplified
\begin{align*}
    \sum_{i \in S \cup \{\perp\}}\sum_{j \in U: j \neq i}p_{ij}(t, S) (F(t\1_{S}) - F(t\1_{S-i}))
    &=
    \sum_{i \in S \cup \{\perp\}}
    (F(t\1_{S}) - F(t\1_{S-i}))
    \sum_{j \in U: j \neq i}p_{ij}(t, S)\\
    &=
    \sum_{i \in S}
    (F(t\1_{S}) - F(t\1_{S-i}))
    \left(\frac1k - p_{ii}(t, S)\right),
\end{align*}
where in the second equality we used \ref{it-i} from \Cref{def:valid-swap-pair}.
Combining with \eqref{eq:step-one}, we get:
\begin{align*}
    \frac{k}{t}\sum_{i \in S \cup \{\perp\}}\sum_{j \in U: j \neq i}p_{ij}(t, S)
    (F(t\1_{S - i + j}) - F(t\1_S))
    &\geq
    k \sum_{i \in S \cup \{\perp\}}\sum_{j \in U: j \neq i}p_{ij}(t, S) (F(t\1_{S} \vee \1_j) - F(t\1_S))\\
    &\hspace{-20mm}-
    \sum_{i \in S} \left(\frac{F(t\1_S) - F(t\1_{S-i})}{t} - \frac{k}{t}p_{ii}(t, S) (F(t\1_S) - F(t\1_{S-i}))\right).
\end{align*}

Returning to the original expression, we have that the terms \(\frac{F(t\1_S) - F(t\1_{S-i})}{t}\) will cancel, and we are left with
\begin{align*}
\frac{\partial_+}{\partial r} V_{S, t}(r)\big\vert_{r = t}
&\geq
k \sum_{i \in S \cup \{\perp\}}\sum_{j \in U: j \neq i}p_{ij}(t, S) (F(t\1_{S} \vee \1_j) - F(t\1_S))\\
&+
k\sum_{i \in S}p_{ii}(t, S)\left(\frac1{t} - 1\right) (F(t\1_S) - F(t\1_{S-i}))
\end{align*}
Observe that for every \(i \in S\), it holds that
\[
F(t\1_S) - F(t\1_{S - i}) = \frac{t}{1 - t} (F(t\1_S \vee \1_i) - F(t\1_S))
\]
by applying \Cref{lem:multi-lin} with \(\mathbf{a} = t\1_S\), \(k = i\), \(u = 0\), and \(\ell = t\).
Thus,
\begin{align*}
\frac{\partial_+}{\partial r} V_{S, t}(r)\big\vert_{r = t}
&\geq
k \sum_{i \in S \cup \{\perp\}}\sum_{j \in U: j \neq i}p_{ij}(t, S) (F(t\1_{S} \vee \1_j) - F(t\1_S))\\
&+
k\sum_{i \in S}
p_{ii}(t, S) (F(t\1_S \vee \1_i) - F(t\1_{S}))\\
&=
k \sum_{i \in S \cup \{\perp\}}\sum_{j \in U}p_{ij}(t, S) (F(t\1_{S} \vee \1_j) - F(t\1_S))\\
&=
k\E[F(t\1_{S} \vee \1_J) - F(t\1_S)].
\end{align*}
In the first equality, we used the fact that for
\(i = \perp\) the term \(F(t\1_S \vee \1_i) - F(t\1_{S})\) is \(0\) and completed the missing term \(j = i\) in the first sum.
\end{proof}

\begin{proof}[Proof of \Cref{lem:potential_average}]
    From \Cref{lem:cond-pot} and \eqref{eq:above-avg},
    we immediately obtain
    \begin{equation}
    \label{eq:partial-bound}
    \frac{\partial_+}{\partial r} V_{S, t}(r) \big\rvert_{r = t} \geq F(t\1_S \vee \1_\OPT) - F(t\1_S).
    \end{equation}
    Furthermore,
    By the definition of \(Q(r)\) and \(V_{S, t}(r)\), for every $r \geq t$,
\[
Q(r)
=
\sum_{S\in\Ical}
\Pr(A(t)=S)\cdot V_{S,t}(r).
\]
Hence,
\begin{align*}
Q'(t)
&=
\sum_{S\in\Ical}
\Pr(A(t)=S)
\left.
\frac{\partial V_{S,t}(r)}
{\partial r}
\right|_{r=t}
\\
&\ge
\sum_{S\in\Ical}
\Pr(A(t)=S)
\left(
F(\1_{\OPT}\vee t\1_S)
-
F(t\1_S)
\right)
&&\text{by \eqref{eq:partial-bound}}
\\
&=
\sum_{S\in\Ical}
\Pr(A(t)=S)
F(\1_{\OPT}\vee t\1_S)
-
\sum_{S\in\Ical}
\Pr(A(t)=S)
F(t\1_S)
\\
&=
\E\!\left[
F(\1_{\OPT}\vee t\1_{A(t)})
\right]
-
Q(t).
\end{align*}
    
\end{proof}

\occupancy*

\begin{proof}

We start by analyzing the probability an element not in the current set is added to it.
By definition,
\begin{align*}
&\left.
\frac{\partial}{\partial r}
\Pr(i\in A(r)\mid i\notin A(t))
\right|_{r=t}
=
\lim_{\delta\to0^+}
\frac1\delta
\Pr(i\in A(t+\delta)\mid i\notin A(t)).
\end{align*}
Let $N$ denote the number of Poisson events in $(t,t+\delta]$.
Then
\begin{align*}
\Pr(i\in A(t+\delta)\mid i\notin A(t))
&=
\Pr(N=0)\Pr(i\in A(t+\delta)\mid i\notin A(t),N=0)
\\
&+
\Pr(N=1)\Pr(i\in A(t+\delta)\mid i\notin A(t),N=1)
\\
&+
\Pr(N\ge2)\Pr(i\in A(t+\delta)\mid i\notin A(t),N\ge2).
\end{align*}

If \(N = 0\), i.e., no event happened, then the current solution will not change and thus the first term is \(0\).
Furthermore, the probability of \(N \geq 2\) when
\(\delta \to 0^+\) is negligible.
Thus, only the middle term is left.
By \ref{it-j} from \Cref{def:valid-swap-pair}, we have that any element is added with probability at most \(1/k\). Hence,
\begin{align*}
&\left.
\frac{\partial}{\partial r}
\Pr(i\in A(r)\mid i\notin A(t))
\right|_{r=t}
=
\lambda(t)
\lim_{\delta\to0^+}
\Pr(i\in A(t+\delta)\mid i\notin A(t),N=1)
\le
\lambda(t)\cdot\frac1k.
\end{align*}
Thus,
\begin{equation}
\label{eq:pr-i-joins}
\left.
\frac{\partial}{\partial r}
\Pr(i\in A(r)\mid i\notin A(t))
\right|_{r=t}
\leq
\frac1{t}
\end{equation}

Next we analyze the probability of an element that is in the current set is not dropped, or equivalently, that it is maintained by the algorithm.
By definition,
\begin{align*}
&
\left.
\frac{\partial}{\partial r}
\Pr(i\in A(r)\mid i\in A(t))
\right|_{r=t}
=
\lim_{\delta\to0^+}
\frac1\delta
\Bigl(
\Pr(i\in A(t+\delta)\mid i\in A(t))
-1
\Bigr).
\end{align*}
Again, let $N$ denote the number of Poisson arrivals in $(t,t+\delta]$.
Then
\begin{align*}
\Pr(i\in A(t+\delta)\mid i\in A(t))
&=
\Pr(N=0)
\Pr(i\in A(t+\delta)\mid i\in A(t),N=0)
\\
&+
\Pr(N=1)
\Pr(i\in A(t+\delta)\mid i\in A(t),N=1)
\\
&+
\Pr(N\ge2)
\Pr(i\in A(t+\delta)\mid i\in A(t),N\ge2).
\end{align*}

If $N=0$, i.e., no arrival occurred, we have
\[
\Pr(i\in A(t+\delta)\mid i\in A(t),N=0)=1.
\]
And again, the probability that \(N \geq 2\) is negligible.
Therefore
\begin{align*}
&
\left.
\frac{\partial}{\partial r}
\Pr(i\in A(r)\mid i\in A(t))
\right|_{r=t}
=
\lambda(t)
\lim_{\delta\to0^+}
\Bigl(
\Pr(i\in A(t+\delta)\mid i\in A(t),N=1)-1
\Bigr).
\end{align*}
Condition on the unique arrival when $N=1$.
With probability $1-\frac1k$ we have \(I \neq i\) by \ref{it-i}, in which case $i$ remains in $A$.
With probability at most $\frac1k$ we have \(I = J =i\), in this case the item would remain with probability $1-t$. 
Hence
\begin{align*}
\Pr(i\in A(t+\delta)\mid i\in A(t),N=1)
&=
\Pr[I \neq i] + \Pr[I = J = i]\Pr[i \text{ survives drop}]\\
&\leq
\Pr[I \neq i] + \Pr[I = i]\Pr[i \text{ survives drop}]\\
&=
\left(1-\frac1k\right)
+
\frac1k\left(1-t\right)
=
1-\frac{t}{k}.
\end{align*}
Consequently,
\begin{equation}
\label{eq:pr-i-stays}
\left.
\frac{\partial}{\partial r}
\Pr(i\in A(r)\mid i\in A(t))
\right|_{r=t}
\le
\lambda(t)\left(-\frac{t}{k}\right)
=
\frac{k}{t}\left(-\frac{t}{k}\right)
=
-1.
\end{equation}

We can now analyze \(p_i(t)\).
It holds that
\begin{align*}
\left.
\frac{\partial p_i(r)}{\partial r}
\right|_{r=t}
&=
\left.
\frac{\partial}{\partial r}
\Pr(i\in A(r))
\right|_{r=t}
\\
&=
\left.
\frac{\partial}{\partial r}
\left[
\Pr(i\notin A(t))\Pr(i\in A(r)\mid i\notin A(t))
\right]
\right|_{r=t}\\
&
+
\left.
\frac{\partial}{\partial r}
\left[
\Pr(i\in A(t))\Pr(i\in A(r)\mid i\in A(t))
\right]
\right|_{r=t}
\\
&\leq  p_i(t) \cdot (-1) + (1-p_i(t)) \cdot \frac{1}{t} &&\text{by \eqref{eq:pr-i-joins} and \eqref{eq:pr-i-stays}}\\
&=\frac{1}{t} - p_i(t)\cdot \left(1+\frac{1}{t}\right)
\end{align*}

Thus, \(p_i(t)\) satisfies the previous differential equation and the starting condition that \(p_i(\varepsilon) = \varnothing\), since \(A(\varepsilon) = 0\).
Observe that
\begin{align*}
\left.
\frac{\partial (r e^r p_i(r))}{\partial r}
\right|_{r=t}
&=
e^t p_i(t) + te^tp_i(t) + te^t\left.\frac{\partial (p_i(r))}{\partial r}
\right|_{r=t}\\
&\leq
e^t p_i(t) + te^tp_i(t) + e^t - te^tp_i(t)
- e^tp_i(t)\\
&=
e^t.
\end{align*}
By integrating both sides from \(\varepsilon\)
to \(t\), we have
\[
t e^t p_i(t) \leq \int_\varepsilon^t e^s ds = e^t - e^\varepsilon.
\]
Dividing both sides by \(t e^t\), we get
\[
p_i(t) \leq \frac1{t} (1 - e^{-t + \varepsilon}).
\qedhere
\]
\end{proof}

\subsection{\texorpdfstring{Proof of \Cref{lem:almost_average_poisson}}{}}

We start by proving the analogous approximated version of \Cref{lem:potential_average}.

\begin{lemma}
\label{lem:almost_potential_average}   
If \Cref{alg:Poisson} uses an $\eta$-almost-above-average valid swap procedure, then for any \(t \in [\varepsilon, 1)\) and condition on the event \(A(t) = S\).
    it holds that
     \[
    \frac{\partial_+}{\partial r} V_{S, t}(r) \big\rvert_{r = t} \geq  F(t\1_S \vee \1_\OPT) - F(t\1_S) - \eta.
    \]
    Therefore, it holds that
     \[
    Q'(t) \geq \E\!
    \left[
    F(\1_{\OPT}\vee t\1_{A(t)})
    \right]
    -
    Q(t) - \eta.
    \]
\end{lemma}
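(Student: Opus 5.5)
The argument mirrors the proof of \Cref{lem:potential_average}; the only change is that \eqref{eq:almost-above-avg} replaces \eqref{eq:above-avg}. The first step is to observe that \Cref{lem:cond-pot} uses only validity of the swap procedure and never the above-average property. So for every \(t\in[\varepsilon,1)\) and every \(S\in\Ical\) we still have
\[
\frac{\partial_+}{\partial r} V_{S,t}(r)\big\rvert_{r=t} \geq k\,\E[F(t\1_S\vee\1_J)-F(t\1_S)],
\]
where \(J\) is the second coordinate returned by \(\swap(t,S)\).

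Next we plug in \eqref{eq:almost-above-avg} with \(A=S\). It gives
\[
k\,\E[F(t\1_S\vee\1_J)-F(t\1_S)] \geq F(t\1_S\vee\1_\OPT)-F(t\1_S)-\eta .
\]
The factor \(k\) cancels the \(\frac{1}{k}\) on both the main term and the error term \(\frac{\eta}{k}\). This yields the conditional bound
\[
\frac{\partial_+}{\partial r} V_{S,t}(r)\big\rvert_{r=t} \geq F(t\1_S\vee\1_\OPT)-F(t\1_S)-\eta ,
\]
which is the first claim.

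For the bound on \(Q'(t)\) we write \(Q(r)=\sum_{S\in\Ical}\Pr(A(t)=S)\,V_{S,t}(r)\) for \(r\ge t\), exactly as in the proof of \Cref{lem:potential_average}. We then differentiate term by term at \(r=t\). This is legitimate because the sum is finite (\(\Ical\) is finite) and the weights \(\Pr(A(t)=S)\) do not depend on \(r\). Applying the conditional bound to each term and using \(\sum_S\Pr(A(t)=S)=1\), the constant \(-\eta\) passes through the average unchanged. We obtain
\[
Q'(t)\ge \E[F(\1_\OPT\vee t\1_{A(t)})]-Q(t)-\eta .
\]

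The only step needing care is the derivative itself. We have only established a right derivative, whereas the statement is written with \(Q'\). As in the proof of \Cref{lem:potential_average}, I would interpret \(Q'\) as the right derivative. Right continuity of the swap procedure, together with \Cref{lem:transit-prob}, then makes it a genuine derivative almost everywhere, which suffices for the subsequent integration. Everything else is routine bookkeeping, so I expect no real obstacle beyond confirming that \Cref{lem:cond-pot} did not rely on the above-average assumption.
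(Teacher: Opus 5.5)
Your proposal is correct and follows the paper's proof step for step. You apply \Cref{lem:cond-pot}, which uses only validity, and substitute \eqref{eq:almost-above-avg} so that the factor $k$ cancels both $\frac{1}{k}$ terms. You then average the conditional bound via $Q(r)=\sum_{S\in\Ical}\Pr(A(t)=S)\,V_{S,t}(r)$, so that $-\eta$ passes through unchanged. Your caveat about reading $Q'$ as a right derivative addresses a point the paper itself glosses over, and it does not affect the argument.
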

\begin{proof}
     From \Cref{lem:cond-pot} and \eqref{eq:almost-above-avg},
    we immediately obtain
    \begin{equation}
    \label{eq:almost-partial-bound}
    \frac{\partial_+}{\partial r} V_{S, t}(r) \big\rvert_{r = t} \geq F(t\1_S \vee \1_\OPT) - F(t\1_S) - \eta.
    \end{equation}
    Furthermore,
    By the definition of \(Q(r)\) and \(V_{S, t}(r)\), for every $r \geq t$,
\[
Q(r)
=
\sum_{S\in\Ical}
\Pr(A(t)=S)\cdot V_{S,t}(r).
\]
Hence,
\begin{align*}
Q'(t)
&=
\sum_{S\in\Ical}
\Pr(A(t)=S)
\left.
\frac{\partial V_{S,t}(r)}
{\partial r}
\right|_{r=t}
\\
&\ge
\sum_{S\in\Ical}
\Pr(A(t)=S)
\left(F(\1_{\OPT}\vee t\1_S)-F(t\1_S) - \eta \right)
&&\text{by \eqref{eq:almost-partial-bound}}
\\
&=
\sum_{S\in\Ical}
\Pr(A(t)=S)
F(\1_{\OPT}\vee t\1_S)
-
\sum_{S\in\Ical}
\Pr(A(t)=S)
F(t\1_S)
-\eta
\sum_{S\in\Ical}
\Pr(A(t)=S)
\\
&=
\E\!\left[
F(\1_{\OPT}\vee t\1_{A(t)})
\right]
-
Q(t)
-\eta.
\end{align*}
    
\end{proof}

\begin{proof}[Proof of \Cref{lem:almost_average_poisson}]

We start with the case when \(f\) is a monotone submodular function.
By \Cref{lem:almost_potential_average}, we have
\[
Q'(r) \geq
\E\!\left[ F(\1_{\OPT}\vee r\1_{A(r)}) \right]
-
Q(r) - \eta
\geq
f(\OPT) - Q(r) - \eta,
\]
where in the second inequality we used monotonicity. We now have
\[
\frac{\partial e^r Q(r)}{\partial r}
=
e^r Q(r) + e^rQ'(r)
\geq
e^r f(\OPT) - e^r \eta.
\]
By integrating from \(\varepsilon\) to \(t\), we obtain \(e^t Q(t) - e^\varepsilon Q(\varepsilon)
\geq (e^t - e^\varepsilon)(f(\OPT) - \eta)\).
Using the fact that \(Q(\varepsilon) = f(\varnothing)\), and evaluating the expression 
at \(t = 1\), we get
\[
Q(1) \geq \left(1 - \frac{e^\varepsilon}{e}\right)(f(\OPT) -\eta) + \frac{e^\varepsilon}{e} f(\varnothing)
\geq
(1 - \varepsilon)(1 - \nicefrac{1}{e}) f(\OPT)
+ e^{-1} f(\varnothing) - \eta.
\]

Since \ref{it-valid} guarantees that \(A(t)\) is feasible with probability \(1\) for every \(t\), the set \(A(1)\) is a feasible solution with expected value \(Q(1)\).

We now look at the non-monotone case.
We again start with \Cref{lem:almost_potential_average}:
\[
Q'(r) \geq \E\!\left[ F(\1_{\OPT}\vee r\1_{A(r)}) \right]
- Q(r) -\eta.
\]
Instead of monotonicity,
we now apply \Cref{lem:lovasz-estimate}. Each element from \(\OPT\) appears with probability \(1\) and for each element not in \(\OPT\):
\[
\Pr[i \in R] = r \Pr[i \in A(r)] \leq 1 - e^{-r+\varepsilon}.
\]
Consequently,
\[
Q'(r) \geq e^{-r+\varepsilon} f(\OPT) -Q(r) -\eta.
\]
We now have
\[
\frac{\partial e^r Q(r)}{\partial r}
=
e^r Q(r) + e^rQ'(r)
\geq
e^\varepsilon f(\OPT) - e^r \eta.
\]
By integrating from \(\varepsilon\) to \(t\), we obtain \(e^t Q(t) - e^\varepsilon Q(\varepsilon)
\geq (t-\varepsilon)e^\varepsilon f(\OPT) - (e^t - e^\varepsilon)\eta\).
Using the fact that \(Q(\varepsilon) = f(\varnothing)\), and evaluating the expression 
at \(t = 1\), we get
\[
Q(1) \geq (1-\varepsilon)\left(\frac{e^\varepsilon}{e}\right)f(\OPT) - \left(1 - \frac{e^\varepsilon}{e}\right)\eta + \frac{e^\varepsilon}{e} f(\varnothing)
\geq
(1 - \varepsilon)(\nicefrac{1}{e}) f(\OPT)
+ e^{-1} f(\varnothing) - \eta.
\]
Since \ref{it-valid} guarantees that \(A(t)\) is feasible with probability \(1\) for every \(t\), the set \(A(1)\) is a feasible solution with expected value \(Q(1)\).

The expected number of swap procedure calls is the expected number of events in the Poisson process:
\[
\int_\varepsilon^1 \lambda(t) dt
=
k
\int_\varepsilon^1 \frac{1}{t} dt
=
k \ln(1/\varepsilon). \qedhere
\]
\end{proof}

\section{Preprocessing}\label{sec:preprocessing}

In this section, we prove Lemma~\ref{lema:imppre2}. 
Before giving the algorithm, we use the following lemma from \cite{BFNS14} that returns a constant factor estimate of the optimal value. 
\begin{lemma} \label{lem:estimate}\cite{BFNS14}
There is a polynomial time algorithm that returns a number $V$ such that $\max_{S\in \I} f(S)\leq V\leq \cons \max_{S\in \I} f(S)$ that makes $O(nk)$ queries for a general matroid and $O(n)$ queries for a partition matroid.   
\end{lemma}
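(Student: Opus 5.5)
The plan is to produce a feasible set $D\in\I$ whose value is at least a constant fraction of $\max_{S\in\I}f(S)$ \emph{deterministically} (not just in expectation), and to output $V=c\cdot\max(\text{candidate values})$. Because every candidate is a feasible set, $V/c\le\max_{S\in\I}f(S)$ holds automatically. So the work is in the lower bound $\max_{S\in\I} f(S)\le c\cdot\max(\text{candidates})$. The construction uses only greedy passes and one unconstrained step, so the query counts are easy to control. Let $O$ be an optimal independent set.

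\textbf{Step 1 (first greedy).} Run the matroid greedy algorithm on $U$ with a stopping rule. Repeatedly add the feasible element of largest marginal value, and stop once no feasible element has positive marginal value. Call the output $G_1$. The standard exchange argument pairs each $o\in O\setminus G_1$ with a distinct greedy step, either via a bijection between bases or via the stopping condition. At that step the marginal value of $o$ was at most the realized gain, so submodularity gives $f(G_1\cup O)-f(G_1)\le\sum_{o}(\text{marginal of }o)\le f(G_1)$, i.e.
\[ f(G_1\cup O)\le 2f(G_1). \]

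\textbf{Step 2 (second greedy and unconstrained step).} Run the same greedy on $U\setminus G_1$ to obtain $G_2$. The same argument with $O\setminus G_1$ gives $f(G_2\cup(O\setminus G_1))\le 2f(G_2)$. Since $G_1$ and $G_2$ are disjoint, submodularity and non-negativity give
\[ f(O\setminus G_1)\le f(G_1\cup O)+f(G_2\cup(O\setminus G_1)). \]
To handle $O\cap G_1$, run the deterministic double greedy of \cite{BFNS15} on $f$ restricted to $2^{G_1}$. This returns $D\subseteq G_1$, which is independent, with $f(D)\ge\frac13\max_{T\subseteq G_1}f(T)\ge\frac13 f(O\cap G_1)$, at a cost of $O(k)$ queries. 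Finally $f(O)\le f(O\cap G_1)+f(O\setminus G_1)$ by submodularity and non-negativity. Combining these bounds gives
\[ f(O)\le 2f(G_1)+2f(G_2)+3f(D)\le 7\max\{f(G_1),f(G_2),f(D)\}. \]
So $V=7\max\{f(G_1),f(G_2),f(D)\}$ satisfies $f(O)\le V\le 7f(O)$.

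This argument yields the constant $7$ rather than the $\cons$ stated in the lemma. Reaching exactly $\cons$ would require the sharper analysis of \cite{BFNS14}, which I would take as given. For the later use in \Cref{lema:imppre2}, any absolute constant suffices up to adjusting $\constwo$.

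\textbf{Step 3 (query counts).} For a general matroid, each greedy pass has at most $k$ rounds, and each round re-evaluates at most $n$ marginals, for $O(nk)$ queries in total. For a partition matroid the parts can be processed one at a time in an arbitrary order, picking the best element of each part (or nothing, if no element of the part has positive marginal). The charging argument still works: the $O$-element of a part is charged to the element chosen from that same part, at the moment that part was processed. Each element is then evaluated once per pass, so the total is $O(n)$.

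The main obstacle is Step 1 in the non-monotone setting with the stopping rule. The exchange bound must account for elements of $O$ that became infeasible versus those rejected because their marginal value was non-positive. I would handle this by extending $G_1$ with dummy elements to a base and using the bijection of \Cref{def:exchange} between $O$ and that base. Each $o$ is then charged either to a real greedy step or to a step where its marginal value is non-positive, which contributes nothing to the sum.
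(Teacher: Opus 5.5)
Your argument is correct, but it proves a weaker statement: it produces a $V$ with $f(O)\le V\le 7f(O)$, not the stated bound with constant $\cons$. For $\cons$ you fall back on \cite{BFNS14}, which is also all the paper does, since it states this lemma with a citation and gives no proof. On the exact statement, then, you and the paper are in the same position.

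The self-contained part checks out. In Step~1, state explicitly why the exchange bijection of \Cref{def:exchange} respects the greedy order. Write $G_1=\{g_1,\dots,g_m\}$ in greedy order and let $B$ be $G_1$ padded with dummies. If $h(o)=g_i$, then $B-g_i+o\in\I$ contains $\{g_1,\dots,g_{i-1}\}+o$. So $o$ was a feasible candidate at step $i$, and its marginal with respect to $G_1$ is at most the gain of step $i$. If $h(o)$ is a dummy, then $G_1+o\in\I$ and the stopping rule makes that marginal non-positive. Each step is charged at most once and every realized gain is positive, so $f(G_1\cup O)\le 2f(G_1)-f(\varnothing)$ follows without needing the gains to be non-increasing. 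The remaining steps are also correct: the identity $(G_1\cup O)\cap(G_2\cup(O\setminus G_1))=O\setminus G_1$, the deterministic $\frac13$-guarantee of double greedy on $G_1$ with $O(k)$ queries, and the part-by-part charging for partition matroids, which gives $O(n)$ queries per pass.

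Compared with the cited route, yours is elementary and fully deterministic, so the sandwich holds on every run. That is the form the preprocessing actually uses. The drift bound in the proof of \Cref{lema:imppre2} applies $V\ge\max_{S\in\I}f(S)$ inside a conditional expectation given $\F_t$, i.e., realization by realization. An algorithm whose guarantee holds only in expectation would not by itself supply such a $V$.

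The price is the constant, and it is harmless. The upper constant enters only property~1 of \Cref{lema:imppre2}, where $\constwo\cdot\cons=80$ becomes $140$. In the proof of \Cref{thrm:fastf} this gives $\eta\le 141\delta f(O)$. With $\delta=\varepsilon/1000$ one still has $\frac{1}{e}\bigl(1-\frac{\varepsilon}{1000}\bigr)^2-\frac{141\varepsilon}{1000}\ge\frac{1-\varepsilon}{e}$. So you do not even need to adjust $\constwo$, since it interacts with $V$ only through the lower bound.
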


Now, we give the algorithm for a general matroid~\ref{algo:advpre}. For this algorithm, we prove the Lemma~\ref{lema:imppre2} for $\delta=0$ since the running time does not degrade with $\delta$. Observe that the first property holds due to the stopping condition and $V\leq \cons \max_{S\in \I}f(S)$.
We now show the second property before discussing the runtime analysis.
Let $\tau $ denote the stopping time. Observe that $\tau\leq k$. Let $\F_t$ denote the filtration defined by the random choices of the algorithm until step $t$.
\begin{algorithm}[!h]
\caption{Advanced Preprocessing}
\label{algo:advpre}
\vspace{4pt}\hrule\vspace{4pt}

$Q_0 \gets \varnothing,\ i\gets 0$\;

\While{
$\displaystyle
\max_{T\subseteq U:\,T\cup Q_i\in\mathcal I}
\left\{
\sum_{j\in T}f(Q_i+j)-f(Q_i)
\right\}
\ge \constwo V$
}{
    \(
    Z \gets
    \arg\max_{T\subseteq U:\,T\cup Q_i\in\mathcal I}
    \left\{
    \sum_{j\in T}f(Q_i+j)-f(Q_i)
    \right\}
    \)
    
    \tcp{We can assume $Z\cap Q_i=\emptyset$ and that $Z\cup Q_i$ is a base}

    Let $j_i$ be a uniformly random element from the base $Z$\;

    $Q_{i+1}\gets Q_i+j_i$\;

    $i\gets i+1$\;
}

\Return{$\bar S\coloneqq Q_i$}\;

\end{algorithm}

Let $O$ denote an optimal solution. For every $1\leq t\leq \tau$, let $Z_t$ denote the set $Z$ selected in step $t$. Observe that $Z_t\cup Q_{t}\in \I$. We now show random subsets $O_t$ for each $t\leq \tau$ such that $O_t\subseteq O_{t-1}$ for each $t\leq \tau$ and $Q_t\cup O_t\in \I$. Observe that $O_0=O$ suffices since $Q_0=\emptyset$. Now we define $O_{t+1}$ for any $t\geq 0$. Let  $h_t:Z_t\cup Q_t\rightarrow O_t\cup Q_t$ be a map such that $O_t\cup Q_t\setminus {h(j)}\cup \{j\}\in \I$ for each  $j\in Z_t\cup Q_t$ and it is identity on the  common elements. We define $O_{t+1}\gets O_t\setminus h(j_t)$ where $j_t$ is the element chosen in step $t$ of the algorithm. Observe that the two conditions are satisfied by construction since $Q_{t+1}=Q_t\cup \{j_t\}$.

Let $M_t:=f(O_t\cup Q_t)+\frac12 f(Q_t)$ for each $t$. We now show that $M_{t\wedge \tau}$ is a sub-martingale. 
Indeed for any $t< \tau$, we have

\begin{align*}
    E[f(Q_{t+1})-f(Q_t)|\F_t]\geq \frac{\constwo V}{k-t}
\end{align*}
since $|Z|=k-t$.
While, for any $t\leq k-2$, we have
\begin{align}
    E[f(Q_{t+1}\cup O_{t+1})|\F_t]\geq \left(1-\frac{2}{k-t}\right) f(Q_t\cup O_t)\label{eqn:q_o_t}
\end{align}

since any element from $Q_t \cup O_t$ is dropped with probability at most $\frac{1}{k-t}$ and any element not in $Q_t \cup O_t$  is included in $Q_{t+1}\cup O_{t+1}$ with probability at most $\frac{1}{k-t}$. Thus the inequality follows from Lemma~\ref{lem:lovasz-estimate} by setting $A=Q_t\cup O_t$ and $p=1-\frac{1}{k-t}$ and $q=\frac{1}{k-t}$. Putting together we obtain,  

\begin{align*}
    E[f(Q_{t+1}\cup O_{t+1})+\frac12f(Q_{t+1})-f(Q_t\cup O_t)-\frac12f(Q_t)|\F_t]\geq \frac{-2}{k-t}f(Q_t\cup O_t) +\frac{10 V}{k-t}\geq \frac{8\cdot OPT}{k-t}  \geq 0 
\end{align*}
where we use $V\geq OPT$ and $f(Q_t\cup O_t)\leq OPT$. By optional stopping theorem, we have 
$$E[f(Q_{\tau}\cup O_{\tau})+\frac12f(Q_{\tau})]\geq f(Q_{0}\cup O_0] + \frac12f(Q_{0})$$
or equivalently, 
$$E[f(\bar{S}\cup O_{\tau})+\frac12 f(\bar{S})]\geq f(O) $$
where we use the fact that $Q_{\tau}=\bar{S}$, $Q_0=\emptyset$ and $O_{0}=O.$ Since $\bar{S}\cup O_{\tau}\in \I$, this proves the second property. 

Now, we analyze the run-time.  A straightforward implementation gives a $O(nk)$ time with $k$ iterations, each taking $O(n)$ time.

Now, we show how to give a faster implementation for a partition matroid with a slight loss in property (2).  First, we observe that, if we do not check the condition to exit, then the algorithm can first sample a random part and only restrict itself to elements in the same part. This takes $O(n)$ samples over all iterations to compute $Q_i$ for each $i=0,1,\ldots, k$. The main challenge is checking the criterion on each of the $k$ iterations which takes $O(n)$ time per iteration. We now show that it is enough to check the criterion every $\frac{\delta k}{20}$ iterations, and output the first time the condition is violated. Observe that we check total of $O(\frac{1}{\delta})$ times, where each check takes time $O(n)$ and thus the number of oracle calls is $O(\frac{n}{\delta})$ as claimed. Let $\sigma\geq \tau$ be the first time in the list that we check where $\tau$ was the first time where the exit condition is satisfied. Clearly, $\sigma-\tau<\frac{\delta k}{20}$.  

Observe that $Q_{\sigma}$ satisfies the first condition as the LHS only decreases over iterations. We show that the second condition is satisfied with an error of $\delta.$ 

Recall that for $t< \tau$, we have
$$ E[M_{t+1}-M_{t}|\F_{t}]\geq \frac{8\cdot OPT}{k-t}. $$
Let $\phi(s)=\sum_{t=0}^{s-1} \frac{1}{k-t}$. Then 
$M_t-8\phi(t)OPT$ is a submartingale upto time $\tau$. Thus by optional stopping theorem, 
$$ E[M_{\tau}-8\phi(\tau)]\geq  M_0-8\phi(0)$$
or equivalently,
$$ E[M_{\tau}]\geq 8E[\phi(\tau)] +OPT$$
 We now bound the loss from $M_{\sigma}$ to $M_{\tau}$. We first observe that $f(Q_{\sigma})\geq f(Q_{\tau})$ as $f(Q_{t+1})\geq f(Q_t)$ for each $t$, since the algorithm can always select a dummy element with zero marginal.

Thus, 
\begin{equation}\label{eqn:bound1} M_{\sigma}-M_{\tau}=f(Q_{\sigma}\cup O_{\sigma})+\frac12 f(Q_{\sigma})-f(Q_{\tau}\cup O_{\tau})-\frac12 f(Q_{\tau})\geq f(Q_{\sigma}\cup O_{\sigma})-f(Q_{\tau}\cup O_{\tau})\geq -OPT 
\end{equation}

where the last inequality uses that $Q_{\tau}\cup O_{\tau}$ is feasible. 
Moreover, for any $t$, we have
\begin{align}
E[M_{t+1}-M_t|\F_t]&=E[f(Q_{t+1}\cup O_{t+1})+\frac12 f(Q_{t+1})-f(Q_{t}\cup O_{t})-\frac12 f(Q_{t})]\nonumber\\
&\geq E[f(Q_{t+1}\cup O_{t+1})-f(Q_{t}\cup O_{t})]\nonumber\\
&\geq -\frac{2}{k-t} OPT \label{eqn:bound2}
\end{align} 
where the last inequality follows from \eqref{eqn:q_o_t}.
Thus $M_t+2 \phi(t) OPT$ is a submartingale and by optional stopping theorem, we have
\begin{align*}
E[M_{\sigma}-M_{\tau}|\F_{\tau}]\geq -2E[\phi(\sigma)-\phi(\tau)|\F_{\tau}] OPT
\end{align*}

Now we show,
$$ E[8\phi(t)+M_\sigma-M_\tau|\F_{\tau}]\geq -\delta OPT$$
We make two cases depending on the event $A=\{\phi(\tau)\geq \frac18\}$ that is $\F_\tau$-measurable.

\textbf{Case 1.} $A$ happens. Then  
\begin{align*}
    E[8\phi(\tau)+M_\sigma-M_\tau|\F_{\tau}]\geq E[8\phi(\tau)-OPT|\F_{\tau}]\geq 0
\end{align*}
as claimed. 

\textbf{Case 2.} $A$ doesn't happen. Since,
Observe $\phi(\tau)\geq  \frac{\tau}{k}$ and thus by the assumption, we have $\tau\leq \frac{k}{8}$ if $A$ doesn't happen.  Then
\begin{align*}
    \phi(\sigma)-\phi(\tau)=\sum_{t=\tau+1}^\sigma \frac{1}{k-t}\leq \frac{\delta k/20}{k-k/8-\delta k/20}\leq \frac{\delta}{10}
\end{align*}
and thus
\begin{align*}
    E[8\phi(t)+M_\sigma-M_\tau|\F_{\tau}]\geq  -2E[\phi(\sigma)-\phi(\tau)|\F_{\tau}] OPT\geq  -\frac{\delta}{5} OPT
\end{align*}

Thus putting together, we have
\begin{align*}
    \E[M_\sigma]&=\E[M_\tau]+\E[M_\sigma-M_\tau]\\
     &\geq OPT + \E[ 8\phi(\tau)]+\E[M_{\sigma}-M_\tau]\\
    &\geq OPT + \E[ 8\phi(\tau)+M_{\sigma}-M_\tau|\F_{\tau}]]\\
     &\geq OPT +\E[-\delta OPT]\\
     &\geq (1-\delta)OPT
\end{align*}

and thus 
 $E[M_{\sigma}]\geq (1-\delta) OPT$ as required. 
 
\ignore{

Consider the event $\Ev=\{\tau<\frac{k}{2}\}$. Since $\sigma-\tau\leq r:=\frac{\delta\cdot k}{20}$,  we have for any $t\leq \sigma\leq \tau+r$, $k-t\geq \frac{k}{2}-r$. Thus, in the event $\Ev$, from \eqref{eqn:bound2} we have

\begin{align}
    E[(M_{\sigma}-M_{\tau})\indicator_{\Ev}]&\geq  -\frac{2r}{k/2-r} OPT \geq -\delta OPT
\end{align}

In the event $\bar{\Ev}=\{\tau>\frac{k}{2}\}$, we have 
\begin{align}
   E[M_{\sigma}1_{\bar{\Ev}}]\geq E[M_{\tau}\1_{\bar{\Ev}}]-OPT\geq (H_{k/2}+1) OPT-OPT\geq H_{k/2}OPT
\end{align}

Thus in either case, we have 
\begin{align}
   E[M_{\sigma}]\geq (1-\delta) OPT
\end{align}
proving the statement.
}


\ignore{
\newpage
~\\
**********************************************************\\
OLD SECTIONS FROM HERE\\
**********************************************************\\
\roy{from this point onwards everything will be moved to the proper section above or removed.}

\section{Introduction}
In this work, we present an algorithm to maximize non-monotone submodular functions under matroid constraint.
The algorithm is based on a non-homogeneous Poisson process inspired by the algorithm from \cite{GanzKSS26a}.
Its approximation guarantee is \(\tfrac1{e} - \varepsilon\).
The algorithm can be implemented to achieve the fastest running time for general matroids without an exponential dependency on \(\varepsilon\).
Furthermore, after a preprocessing algorithm it can be made even faster for the class of generalized partition matroids, a class which includes partition and cardinality matroids.

\subsection{Related Work}

\begin{table}[!h]
\centering
\small
\begin{tabular}{|l|l|p{7cm}|}
\hline
Reference & Approximation & Complexity \\
\hline

\multicolumn{3}{|c|}{\textbf{General matroid}} \\
\hline

\citeauthor{BuchbinderF23a}~\cite{BuchbinderF23a}
& $0.401$
& $\Omega(n^6)$ (estimated) \\
\hline

\citeauthor{BuchcinderF16a}~\cite{BuchcinderF16a}
& $0.385$
& $\Omega(n^7)$ (estimated) \\
\hline

\citeauthor{ChenGLSZ26a}~\cite{ChenGLSZ26a}
& $0.385$
& $\Omega(n^5)$  \\
\hline

\citeauthor{EneN16a}~\cite{EneN16a}
& $0.372$
& $\Omega(n^4)$ (estimated) \\
\hline

\citeauthor{Segui-GascoS17a}~\cite{Segui-GascoS17a}
& $\frac1e-\varepsilon$
& $O\!\left(
\frac{nk^2}{\varepsilon^4}
\left(\frac{a+b}{a}\right)^2
\ln^2\frac{n}{\varepsilon}
\right)$ \\
\hline

\citeauthor{BuchbinderF24a}~\cite{BuchbinderF24a}
& $\frac1e-\varepsilon$
& $O\!\left(
n2^{O(1/\varepsilon^4)}
\ln k
\right)$ \\
\hline

\citeauthor{ChenNPK25a}~\cite{ChenNPK25a}
& $0.305$
& $O(nk)$ \\
\hline

\citeauthor{HanCC20a}~\cite{HanCC20a}
& $\frac14-\varepsilon$
& $O\!\left(
\frac{n}{\varepsilon}
\ln\frac{k}{\varepsilon}
\right)$ \\
\hline

\multicolumn{3}{|c|}{\textbf{Cardinality matroid}} \\
\hline

\citeauthor{TukanMF24a}~\cite{TukanMF24a}
& $0.385$
& $O(n+k^2)$ \\
\hline

\citeauthor{ChenNPK25a}~\cite{ChenNPK25a}
& $0.377$
& $O(n\ln k)$ \\
\hline

\citeauthor{BuchbinderFS14a}~\cite{BuchbinderFS14a}
& $\frac1e-\varepsilon$
& $\min\left\{
O\left(
\frac{n}{\varepsilon^2}
\ln\frac1\varepsilon
\right),
O\left(
k\sqrt{
\frac{n}{\varepsilon}
\ln\frac{k}{\varepsilon}
}
+
\frac{n}{\varepsilon}
\ln\frac{k}{\varepsilon}
\right)
\right\}$ \\
\hline

\citeauthor{Sakaue20a}~\cite{Sakaue20a}
& $\frac14-\varepsilon$
& $O\left(
n+\frac{k}{\varepsilon}
\right)$ \\
\hline

\end{tabular}

\caption{
Approximation ratios and value oracle complexities for non-monotone submodular maximization under matroid constraints.
}
\label{tab:summary}

\end{table}
\ignore{

\subsection{Paper Organization}

In \cref{sec:swap}, we describe the process and prove the approximation factor of \(1/e - \varepsilon\) and discuss the time complexity of a naive implementation.

In \cref{sec:faster}, we describe a basic preprocessing algorithm to achieve faster running times for general matroids and an improved preprocessing to reduce the running time even more for the class of generalized parititon matroids.

\section{Swap-Based Approach}
\label{sec:swap}

Our main goal in this session is to prove the following Theorem


\subsection{Time Complexity for \texorpdfstring{\(F\)}{}}
The expected number of events is \(k \ln(1/\varepsilon)\).
For a general matroid, for each event we have to compute a max-weight base, and to compute a base we need \(O(n)\) calls to \(F(\cdot)\). The total time complexity is \(O(n k \ln(1/\varepsilon))\).
For the specific case of Partition matroid, we can sample a part and pick the max-weight element of that part. Hence, in expectation we only need to compute \(n/k\) weights. The final time complexity is
\(n \ln(1/\varepsilon)\) calls to \(F(\cdot)\).

\subsection{Time complexity for \texorpdfstring{\(f\)}{}}
We will now prove what we can achieve by estimating \(F\) by sampling from \(f\).
Every time an event happen, say at time \(t\), we will estimate \(w_i(t)\) by independently sampling
a set from the distribution of \(F(t\1_{A(t)})\),
we denote our estimation by \(\tilde w_i(t)\).
We say an execution of the algorithm is good if for every event, and for every element, it holds that
\[
|\tilde w_i(t) - w_i(t)| \leq \frac{\varepsilon}{2k} f(\OPT).
\]

Then, in the proof of \Cref{cor:condpot}, if we condition on having a good run, we would obtain
\begin{align*}
    \frac{\partial_+}{\partial r} V_{S, t}(r)\big\vert_{r = t} 
    &=
    \sum_{i \in Z} w_i\\
    &\geq
    \sum_{i \in Z} \left(\tilde w_i - \frac{\varepsilon}{2k} f(\OPT)\right)\\
    &\geq
    \sum_{i \in \OPT} \left(\tilde w_i - \frac{\varepsilon}{2k} f(\OPT)\right)&&\text{by optimality of \(Z\)}\\
    &\geq
    \sum_{i \in \OPT} \left(w_i - \frac{\varepsilon}{k} f(\OPT)\right) \\
    &=
    \sum_{i \in \OPT} w_i - \varepsilon f(\OPT)\\
    &=
    \sum_{i \in \OPT} \left(F(t\1_S \vee \1_{\{i\}}) - F(t\1_S) \right) - \varepsilon f(\OPT)\\
    &\geq
    F(t\1_S \vee \1_\OPT) - F(t\1_S) - \varepsilon f(\OPT)
    &&\text{by submodularity.}
\end{align*}
Carrying out the computations with this extra term leads to
\[
Q(1) \geq (e^{-1} - 2\varepsilon) f(\OPT).
\]
Thus, if we can prove that a good run happens with probability at least \(1 - \varepsilon\), it would lead to a \((1/e - 3\varepsilon)\)-approximation.

Let \(T\) be the random variable representing the 
set of times an event happened.
We first prove that with high probability the number of events is not too large.
As noted earlier, it holds that \(\E[|T|] = k \ln(1/\varepsilon)\).
Thus, by Markov,
\[
\Pr\left[|T| > \frac{2k}{\varepsilon} \ln(1/\varepsilon)\right] \leq \frac{\varepsilon}{2}.
\]

Now let \(t \in T\), we will compute how many samples are needed so that the weights estimates have small additive error with high probability.
To estimate \(\tilde w_i(t)\), we sample a random subset \(R \subseteq A(t)\) by adding each element independently with probability \(t\), and compute
\[f(R \cup i) - f(R).\]
Note that because \(R \subseteq A(t)\), it is a feasible set and thus
\[
- f(\OPT) \leq -f(R) \leq f(R \cup i) - f(R)
\leq f(i) - f(\varnothing) \leq f(i) \leq f(\OPT).
\]
Thus, the random variable \(\tilde w_i(t)\) always belongs to the interval \(-[f(\OPT), f(\OPT)]\).
By Hoeffding's inequality, for any \(\eta > 0\) after \(m\) samples we have
\[
\Pr[|w_i(t) - \tilde w_i(t)| > \eta]
\leq 2 \exp\left(\frac{-\eta^2 m}{2 f(\OPT)^2}\right).
\]
By setting \(\eta \coloneqq \tfrac{\varepsilon}{2k} f(\OPT)\) and \(m \coloneqq \tfrac{8k^2}{\varepsilon^2} \ln \left(\frac{16nk}{\varepsilon^3} \right) \), we have
\[
\Pr\left[|w_i(t) - \tilde w_i(t)| > \frac{\varepsilon}{2k} f(\OPT)\right]
\leq  2\exp\left(\frac{-\varepsilon^2 f(\OPT)^2 m}{8k^2 f(\OPT)^2}\right)
=
2\exp\left(-\ln \left( \frac{16nk}{\varepsilon^3}\right)\right)
=
\frac{\varepsilon^3}{8 nk}.
\]
By union bound,
\[
\Pr\left[\forall i \in [n]: |w_i(t) - \tilde w_i(t)| \leq \frac{\varepsilon}{2k} f(\OPT)\right]
\geq
1 - \frac{\varepsilon^3}{8k}
\geq
e^{-\varepsilon^3/4k}.
\]
Finally,
\begin{align*}
    \Pr[\text{algorithm has a good run}]
    &\geq
    \Pr\left[|T| \leq \frac{2k}{\varepsilon} \ln(1/\varepsilon)\right]\\
    &\times
    \Pr\left[\forall t \in T, \forall i \in [n]: |w_i(t) - \tilde w_i(t)| \leq \frac{\varepsilon}{2k} f(\OPT) \; \bigg| \; |T| \leq \frac{2k}{\varepsilon} \ln(1/\varepsilon)\right]\\
    &\geq
    \left(1 - \frac{\varepsilon}{2}\right)
    \left(e^{-\varepsilon^3/4k}\right)^{\tfrac{2k}{\varepsilon} \ln(1/\varepsilon)}\\
    &\geq
    \left(1 - \frac{\varepsilon}{2}\right)
    e^{-\varepsilon/2}\\
    &\geq
    \left(1 - \frac{\varepsilon}{2}\right)^2\\
    &\geq
    1 - \varepsilon.
\end{align*}

Thus, to obtain a \((1/e - \varepsilon))\)-approximation, it suffices that for every element and for every event, we estimate \(w_i(t)\)
with \(m = O(\tfrac{k^2}{\varepsilon^2}\ln(n/\varepsilon))\).
Hence, summing over all elements and the expected number of events the number of oracle calls to \(f\) is \(O\left(\tfrac{nk^3}{\varepsilon^2} \ln\left(\tfrac{n}{\varepsilon}\right)\right)\).
For partition matroid, instead of computing the weight of all elements, we can select a random part and compute the weight of that part only.
The expected size of a part is \(n/k\), and thus the number of oracle calls become
\(O\left(\tfrac{nk^2}{\varepsilon^2} \ln\left(\tfrac{n}{\varepsilon}\right)\right)\).

We have the following result.
\begin{lemma}
\label{lema:basicpre}
There exists a randomized polynomial time algorithm that 
given a non-negative submodular function $f:2^{U}\rightarrow \mathbb{R}_+$ and a matroid $\M=(U,\I)$ and $0<\delta \leq \frac12$, returns a set $\bar{S}\in \I$ such that 
\begin{enumerate}
    \item $\max_{T\subseteq U: T\cup \bar{S}\in \I} \sum_{i\in T} f(\bar{S}\cup \{i\})-f(\bar{S})\leq \frac{1}{\delta} \max_{S\in \I} f(S).$ 
   \item $\E[\max_{T\subseteq U: T\cup \bar{S}\in \I}  f(\bar{S}\cup T)]\geq (1-\cons \delta) \max_{S\in \I} f(S).$ 
     \item $\min_{T\subseteq U: T\cup \bar{S}\in \I}  \sum_{i\in T} f(\bar{S}\cup \{i\})-f(\bar{S})\geq -\max_{S\in \I} f(S).$ 
\end{enumerate}
The expected number of function oracle calls made is $O(\delta k n)$ where $k$ is the rank of the matroid. For a generalized partition matroid, the expected number of oracle calls for $f$ needed is $O(\delta n \ln \frac{n}{\delta})$.

\end{lemma}
\begin{proof}
Consider the Algorithm
\begin{algorithm}[!h]
\caption{Basic Preprocessing}
\label{algo:baspre}
\vspace{4pt}\hrule\vspace{4pt}

$Q_0 \gets \varnothing,\ i\gets 0$\;

\While{
$\textstyle
\max_{T\subseteq U:\,T\cup Q_i\in\mathcal I}
\left\{
\sum_{j\in T}f(Q_i+j)-f(Q_i)
\right\}
\ge \frac{\cons V}{\delta}$
}{
    \(
    \textstyle
    Z \gets
    \arg\max_{T\subseteq U:\,T\cup Q_i\in\mathcal I}
    \left\{
    \sum_{j\in T}f(Q_i+j)-f(Q_i)
    \right\}
    \)
    
    \tcp{We can assume $Z\cap Q_i=\emptyset$ and that $Z\cup Q_i$ is a base}

    Let $j_i$ be a uniformly random element from the base $Z\cup Q_i$\;

    $Q_{i+1}\gets Q_i+j_i$\;

    $i\gets i+1$\;
}

\Return{$\bar S\coloneqq Q_i$}\;

\end{algorithm}


Observe that when the algorithm ends, we have  
$$ \max_{T \subseteq U: T \cup \bar{S}\in \Ical} 
        \left\{
        \sum_{j \in T} f(\bar S+ j) - f(\bar S)
       \right \} <  \frac{V}{\cons\delta}\leq \frac{1}{\delta}\max_{S\in \I} f(S)$$
        where the last inequality follows from Lemma~\ref{lem:estimate}. This proves the first property. 

       Let $\tau$ be the stopping time of the algorithm.  For each $1\leq t\leq \tau$, we have
        \begin{align*}
            E[f(Q_t)-f(Q_{t-1}|Q_{t-1}]\geq \frac{V}{\cons\delta k}\geq \frac{OPT}{\cons k\delta}
        \end{align*}
    Taking expectation over $Q_{t-1}$ and summing over $1\leq t\leq \tau$, we get
  \begin{align*}
            E[f(Q_{\tau})-f(Q_{0})]\geq E[\tau] \frac{OPT}{\cons k\delta}
        \end{align*}

        But we have $f(Q_0)=0$ and $f(Q_{\tau})\leq OPT$. Thus we have $E[\tau]\leq \cons k \delta$. In each iteration $t$, we compute $f(Q_t+e)-f(Q_t)$ for each elements that takes a total of $O(n)$ oracle calls. Thus the expected number of oracle calls is at most $O(\delta kn)$ as claimed.

We now show property (2). Observe that for any element $e\in U$, $Pr[e\in Q_t\setminus Q_{t-1}]\leq \frac{1}{k}$. Thus $Pr[e\in \bar{S}]=Pr[e\in Q_{\tau}]\leq \frac{E[\tau]}{k}\leq \cons\delta$.

Let $O$ denote the optimal solution. We claim that there exists a subset $O'\subseteq O$ such that $O'\cup \bar{S}\in \I$ and for each $e\in O$, $Pr[e\in O']\geq 1-100\delta$. Let $O_0=O$. Indeed, every time $t$, we compute $Z$, we find a map $h:O_{t-1}\cup Q_{t-1}\rightarrow Z_{t-1}\cup Q_{t-1}$ such that for each $e\in O_{t-1}\cup Q_{t-1}$ can be replaced by $h(e)$ while maintaining independence. We set $O_{t-1}$ accordingly. Since, we pick an element of $Z_{t-1}$ at random, the probability of dropping any $e\in O_{t-1}$ is at most $\frac{1}{k}$ in any iteration. Since the expected number of iterations is at most $100\delta k$, we get the desired result. 

Now, applying Lemma~\ref{lem:lovasz-estimate}, we get 
$$ E[f(O_{\tau}\cup Q_{\tau})]\geq (1-100\delta) f(O)$$
as claimed.

Property (3) is immediate.
\end{proof}

\begin{lemma}
    The non-monotone Poisson Process can be implemented in \(O(\frac{nk}{\varepsilon^4} \log(n/\varepsilon))\) for general matroids and in \(O(\frac{n}{\varepsilon^4} \log(n/\varepsilon))\) for partition matroids.
\end{lemma}

\begin{proof}
    As we proved before, it is sufficient to show that for every set \(A \in \Ical\) it holds with high probability that
    \[
    |w(A) - \tilde w (A)| \leq \tfrac{\varepsilon}{2} f(\OPT).
    \]
    In the previous section, we proved that for each element the difference was no more than \(\frac{\varepsilon}{2k} f(\OPT)\), and thus summing over the \(|A| \leq k\) elements of \(A\), we get the desired result.

    Let \(\bar S\) be the set returned by \cref{lema:basicpre} with \(\delta \coloneqq \varepsilon\). Define \(g(S) \coloneqq f(\bar S \cup S)\) for all \(S \subseteq U \setminus \bar S\), and consider the matroid \(\Mcal'\) defined by contracting \(\bar S\) in \(\Mcal\).
    Thus, running the Poisson process with \(g\) and \(\Mcal'\) would lead to a \((1 - O(\varepsilon))\frac{1}{e}\) approximation for \(f\).

    For an element \(i\) at time \(t\), define
    \begin{align*}
        a_i(t) &\coloneqq g(A(t) \cup i) - g(A(t)),\\
        b_i(t) &\coloneqq g(i) - g(\varnothing).
    \end{align*}
    Note that \(a_i(t) \leq \tilde w_i(t) \leq b_i(t)\).
    By Hoeffding's inequality, for any \(\eta > 0\) after \(m\) samples we have
    \[
    \Pr[|w_i(t) - \tilde w_i(t)| > \eta]
    \leq 2 \exp\left(\frac{-2\eta^2 m}{(b_i(t) - a_i(t))^2}\right).
    \]
    By taking \(\eta \coloneqq \tfrac{\varepsilon^2}{4} (b_i(t) - a_i(t))\) and \(m\coloneqq \tfrac{8}{\varepsilon^4} \log \left(\tfrac{16nk}{\varepsilon^3}\right)\), we have
    \[
    \Pr\left[|w_i(t) - \tilde w_i(t)| > \tfrac{\varepsilon^2}{4} (b_i(t) - a_i(t))\right]
    \leq 2 \exp\left(\tfrac{-\varepsilon^4 m}{8}\right) = \frac{\varepsilon^3}{8nk}.
    \]
    By union bound,
\[
\Pr\left[\forall i \in [n]: |w_i(t) - \tilde w_i(t)| \leq\tfrac{\varepsilon^2}{4} (b_i(t) - a_i(t))\right]
\geq
1 - \frac{\varepsilon^3}{8k}
\geq
e^{-\varepsilon^3/4k}.
\]

Finally, we have that if
\[
\forall i \in [n]: |w_i(t) - \tilde w_i(t)| \leq\tfrac{\varepsilon^2}{4} (b_i(t) - a_i(t)),
\]
then for any set \(A \in \Ical'\), it holds that
\[
|w(A) - \tilde w(A)|
\leq
\sum_{i \in A} |w_i(t) - \tilde w_i(t)|
\leq
\frac{\varepsilon^2}{4} \left(\sum_{i \in A} b_i(t) - \sum_{i \in A} a_i(t)\right).
\]
Furthermore,
\[
\sum_{i \in A} b_i(t)
=
\sum_{i \in A} g(i) - g(\varnothing)
=
\sum_{i \in A} f(\bar S \cup \{i\}) - f(\bar S)
\leq
\frac{1}{\varepsilon} f(\OPT),
\]
where the last inequality comes from the properties of \(\bar S\). And
\begin{align*}
\sum_{i \in A} a_i(t)
&=
\sum_{i \in A} g(A(t) \cup \{i\}) - g(A(t))\\
&\geq
g(A(t) \cup A) - g(A) &&\text{by submodularity,}\\
&\geq
- g(A) &&\text{by non-negativity,})\\
&=
f(\bar S \cup A)\\
&\geq
-f(\OPT) &&\text{since \(\bar S \cup A \in \Ical\).}
\end{align*}
Thus,
\[
|w(A) - \tilde w(A)| \leq \frac{\varepsilon^2}{4}
\left(\frac1{\varepsilon} + 1\right)f(\OPT)
\leq
\frac{\varepsilon}{2} f(\OPT).
\]

Thus, with probability as least \(e^{-\tfrac{\varepsilon^3}{4k}}\), for every set \(A \in \Ical'\), it holds that \(|w(A) - \tilde w(A)| \leq \tfrac{\varepsilon}{2} f(\OPT)\).
The proof can be finished now as in the previous section. \thiago{make this formal by having a claim (?) environment}.

Each element needed \(O(\frac{\log(n /\varepsilon)}{\varepsilon^4})\) queries.
Thus, the final time complexity is
\(O(\frac{nk}{\varepsilon^4} \log \left(\frac{n}{\varepsilon}\right) )\).
We used the fact that the Poisson Process time dominates the Preprocessing time.
For partition matroid, the time reduces to
\(O(\frac{n}{\varepsilon^4} \log \left(\frac{n}{\varepsilon}\right) )\), since we only need to compute the weight of a random part.

\end{proof}

}

\begin{lemma}
    The non-monotone Poisson Process can be implemented in \(O(\frac{nk}{\varepsilon^2} \log(n/\varepsilon))\) for general matroids and in \(O(\frac{n}{\varepsilon^2} \log(n/\varepsilon))\) for partition matroids.
\end{lemma}

\begin{proof}

    Let \(\bar S\) be the set returned by \cref{lema:imppre} with \(\delta \coloneqq \varepsilon\). Define \(g(S) \coloneqq f(\bar S \cup S)\) for all \(S \subseteq U \setminus \bar S\), and consider the matroid \(\Mcal'\) defined by contracting \(\bar S\) in \(\Mcal\).
    Thus, running the Poisson process with \(g\) and \(\Mcal'\) would lead to a \((1 - O(\varepsilon))\frac{1}{e}\) approximation for \(f\).

    For an element \(i\) at time \(t\), define
    \begin{align*}
        a_i(t) &\coloneqq g(A(t) \cup i) - g(A(t)),\\
        b_i(t) &\coloneqq g(i) - g(\varnothing).
    \end{align*}
    Note that \(a_i(t) \leq \tilde w_i(t) \leq b_i(t)\).
    By Hoeffding's inequality, for any \(\eta > 0\) after \(m\) samples we have
    \[
    \Pr[|w_i(t) - \tilde w_i(t)| > \eta]
    \leq 2 \exp\left(\frac{-2\eta^2 m}{(b_i(t) - a_i(t))^2}\right).
    \]
    By taking \(\eta \coloneqq \tfrac{\varepsilon}{1000} (b_i(t) - a_i(t))\) and \(m\coloneqq \tfrac{5 \times 10^5}{\varepsilon^2} \log \left(\tfrac{16nk}{\varepsilon^3}\right)\), we have
    \[
    \Pr\left[|w_i(t) - \tilde w_i(t)| > \tfrac{\varepsilon}{1000} (b_i(t) - a_i(t))\right]
    \leq 2 \exp\left(\tfrac{-\varepsilon^2 m}{5 \times 10^5}\right) = \frac{\varepsilon^3}{8nk}.
    \]
    By union bound,
\[
\Pr\left[\forall i \in [n]: |w_i(t) - \tilde w_i(t)| \leq\tfrac{\varepsilon^2}{4} (b_i(t) - a_i(t))\right]
\geq
1 - \frac{\varepsilon^3}{8k}
\geq
e^{-\varepsilon^3/4k}.
\]

Finally, we have that if
\[
\forall i \in [n]: |w_i(t) - \tilde w_i(t)| \leq\tfrac{\varepsilon}{1000} (b_i(t) - a_i(t)),
\]
then for any set \(A \in \Ical'\), it holds that
\[
|w(A) - \tilde w(A)|
\leq
\sum_{i \in A} |w_i(t) - \tilde w_i(t)|
\leq
\frac{\varepsilon}{1000} \left(\sum_{i \in A} b_i(t) - \sum_{i \in A} a_i(t)\right).
\]
Furthermore,
\[
\sum_{i \in A} b_i(t)
=
\sum_{i \in A} g(i) - g(\varnothing)
=
\sum_{i \in A} f(\bar S \cup \{i\}) - f(\bar S)
\leq
200f(\OPT),
\]
where the last inequality comes from the properties of \(\bar S\). And
\begin{align*}
\sum_{i \in A} a_i(t)
&=
\sum_{i \in A} g(A(t) \cup \{i\}) - g(A(t))\\
&\geq
g(A(t) \cup A) - g(A) &&\text{by submodularity,}\\
&\geq
- g(A) &&\text{by non-negativity,})\\
&=
f(\bar S \cup A)\\
&\geq
-f(\OPT) &&\text{since \(\bar S \cup A \in \Ical\).}
\end{align*}
Thus,
\[
|w(A) - \tilde w(A)| \leq \frac{\varepsilon}{1000}
\left(200 + 1\right)f(\OPT)
\leq
\frac{\varepsilon}{2} f(\OPT).
\]

Thus, with probability as least \(e^{-\tfrac{\varepsilon^3}{4k}}\), for every set \(A \in \Ical'\), it holds that \(|w(A) - \tilde w(A)| \leq \tfrac{\varepsilon}{2} f(\OPT)\).
The proof can be finished now as in the previous section. \thiago{make this formal by having a claim (?) environment}.

Each element needed \(O(\frac{\log(n /\varepsilon)}{\varepsilon^2})\) queries.
Thus, the final time complexity is
\(O(\frac{nk}{\varepsilon^2} \log \left(\frac{n}{\varepsilon}\right) )\).
We used the fact that the Poisson Process time dominates the Preprocessing time.
For partition matroid, the time reduces to
\(O(\frac{n}{\varepsilon^2} \log \left(\frac{n}{\varepsilon}\right) )\), since we only need to compute the weight of a random part.

\end{proof}
}
\ignore{
\section{Continuous Interpretation of the Algorithm}

Recall the continuous view  of the measured-greedy algorithm\cite{} that maintains a vector $x(t)\in [0,1]^U$ initialized at $x(0)=0$ with the following dynamics. For any two vectors $a,b\in \Re^n$, we let $a\odot b$ denote the vector obtained by coordinate-wise multiplication. 
\begin{align*}
    \frac{dx(t)}{dt}=y(t)\odot (1-x(t))\\
    y(t)=\argmax\{ \sum_{e\in Y}F(x(t)\vee 1_e)-F(x(t)): Y\in \I\}
\end{align*}

Now, consider the time-scaled vector $z(t):=\frac{x(t)}{t}$. Then translating the dynamics to $z(t)$, we obtain 

\begin{align*}
    \frac{dz(t)}{dt}=\frac{y(t) -z(t) -t y(t)\odot z(t) }{t}\\
    y(t)=\argmax\{ \sum_{e\in Y}F(x(t)\vee 1_e)-F(x(t)): Y\in \I\}
\end{align*}

Indeed the set $A(t)$ is maintained as \emph{round on the fly} solution for $z(t)$ and the term $-t y(t)\odot z(t)$ exactly corresponds to the spiteful part of the Poisson process where the coordinate for any element $i$ decreases if it non-zero in both $y(t)$ (best response base) and $z(t)$.

\ms{say why swap procedure needs what it needs (the occupancy bounds).}
}
{\small
\printbibliography

@article{BuchcinderF16a,
  title={Constrained Submodular Maximization via a Nonsymmetric Technique},
  author={Buchbinder, Niv and Feldman, Moran},
  journal={Mathematics of Operations Research},
  volume={44},
  number={3},
  pages={988--1005},
  year={2019},
  publisher={INFORMS}
}

@book{schrijver2003combinatorial,
  title={Combinatorial optimization: polyhedra and efficiency},
  author={Schrijver, Alexander and others},
  volume={24},
  number={2},
  year={2003},
  publisher={Springer}
}

@inproceedings{BuchbinderF23a,
  author    = {Niv Buchbinder and
               Moran Feldman},
  title     = {Constrained Submodular Maximization via New Bounds for DR-Submodular Functions},
  booktitle = {Proceedings of the 56th Annual {ACM} Symposium on Theory of Computing,
               {STOC} 2024, Vancouver, BC, Canada, June 24-28, 2024},
  pages     = {1820--1831},
  publisher = {ACM},
  year      = {2024}
}

@misc{BuchbinderF24a,
      title={Extending the Extension: Deterministic Algorithm for Non-monotone Submodular Maximization}, 
      author={Niv Buchbinder and Moran Feldman},
      year={2024},
      eprint={2409.14325},
      archivePrefix={arXiv},
      primaryClass={cs.DS},
      url={https://arxiv.org/abs/2409.14325}, 
}

@article{BuchbinderFS14a,
  title={Comparing apples and oranges: Query trade-off in submodular maximization},
  author={Buchbinder, Niv and Feldman, Moran and Schwartz, Roy},
  journal={Mathematics of Operations Research},
  volume={42},
  number={2},
  pages={308--329},
  year={2017},
  publisher={INFORMS}
}

@misc{ChenGLSZ26a,
      title={Deterministic Algorithm for Non-monotone Submodular Maximization under Matroid and Knapsack Constraints}, 
      author={Shengminjie Chen and Yiwei Gao and Kaifeng Lin and Xiaoming Sun and Jialin Zhang},
      year={2026},
      eprint={2603.11996},
      archivePrefix={arXiv},
      primaryClass={cs.DS},
      url={https://arxiv.org/abs/2603.11996}, 
}

@misc{ChenNPK25a,
      title={Discretely Beyond $1/e$: Guided Combinatorial Algorithms for Submodular Maximization}, 
      author={Yixin Chen and Ankur Nath and Chunli Peng and Alan Kuhnle},
      year={2025},
      eprint={2405.05202},
      archivePrefix={arXiv},
      primaryClass={cs.DS},
      url={https://arxiv.org/abs/2405.05202}, 
}

@inproceedings{EneN16a,
  title={Constrained Submodular Maximization: Beyond 1/e},
  author={Ene, Alina and Nguyen, Huy L.},
  booktitle={Proceedings of the 57th Annual IEEE Symposium on Foundations of Computer Science (FOCS)},
  pages={248--257},
  year={2016},
  organization={IEEE}
}

@misc{HanCC20a,
      title={Deterministic Approximation for Submodular Maximization over a Matroid in Nearly Linear Time}, 
      author={Kai Han and Zongmai Cao and Shuang Cui and Benwei Wu},
      year={2020},
      eprint={2010.11420},
      archivePrefix={arXiv},
      primaryClass={cs.DS},
      url={https://arxiv.org/abs/2010.11420}, 
}

@misc{GanzKSS26a,
      title={A Poisson Process for Submodular Maximization}, 
      author={Amit Ganz-Rozenman and Ariel Kulik and Roy Schwartz and Mohit Singh},
      year={2026},
      eprint={2605.03071},
      archivePrefix={arXiv},
      primaryClass={cs.DS},
      url={https://arxiv.org/abs/2605.03071}, 
}

@misc{Segui-GascoS17a,
      title={Fast Non-Monotone Submodular Maximisation Subject to a Matroid Constraint}, 
      author={Pau Segui-Gasco and Hyo-Sang Shin},
      year={2017},
      eprint={1703.06053},
      archivePrefix={arXiv},
      primaryClass={cs.DS},
      url={https://arxiv.org/abs/1703.06053}, 
}

@InProceedings{Sakaue20a,
  title = 	 {Guarantees of Stochastic Greedy Algorithms for Non-monotone Submodular Maximization with Cardinality Constraints},
  author =       {Sakaue, Shinsaku},
  booktitle = 	 {Proceedings of the Twenty Third International Conference on Artificial Intelligence and Statistics},
  pages = 	 {11--21},
  year = 	 {2020},
  volume = 	 {108},
  series = 	 {Proceedings of Machine Learning Research},
  month = 	 {26--28 Aug},
  publisher =    {PMLR}
}

@inproceedings{TukanMF24a,
 author = {Tukan, Murad and Mualem, Loay and Feldman, Moran},
 booktitle = {Advances in Neural Information Processing Systems},
 pages = {51223--51253},
 publisher = {Curran Associates, Inc.},
 title = {Practical 0.385-Approximation for Submodular Maximization Subject to a Cardinality Constraint},
 volume = {37},
 year = {2024}
}

@book{ross2014introduction,
  title={Introduction to Probability Models},
  author={Ross, Sheldon M.},
  year={2014},
  edition={11th},
  publisher={Academic Press},
  isbn={9780124079489}
}

@article{fisher1978analysis,
author="Fisher, M. L.
and Nemhauser, G. L.
and Wolsey, L. A.",
editor="Balinski, M. L.
and Hoffman, A. J.",
title="An analysis of approximations for maximizing submodular set functions---{II}",
journal="Math. Prog. Study.",
year="1978",
pages="73--87",
volume="8"
}

@article{nemhauser1978analysis,
  title={An analysis of approximations for maximizing submodular set functions—I},
  author={Nemhauser, George L and Wolsey, Laurence A and Fisher, Marshall L},
  journal={Mathematical programming},
  volume={14},
  pages={265--294},
  year={1978},
  publisher={Springer}
}

@article{NW78,
  title={Best algorithms for approximating the maximum of a submodular set function},
  author={Nemhauser, George L and Wolsey, Laurence A},
  journal={Mathematics of operations research},
  volume={3},
  number={3},
  pages={177--188},
  year={1978},
  publisher={INFORMS}
}

@inproceedings{kempe2003maximizing,
  title={Maximizing the spread of influence through a social network},
  author={Kempe, David and Kleinberg, Jon and Tardos, {\'E}va},
  booktitle={Proceedings of the ninth ACM SIGKDD international conference on Knowledge discovery and data mining},
  pages={137--146},
  year={2003}
}

@inproceedings{HMS08,
author = {Hartline, Jason and Mirrokni, Vahab and Sundararajan, Mukund},
title = {Optimal marketing strategies over social networks},
year = {2008},
isbn = {9781605580852},
publisher = {Association for Computing Machinery},
address = {New York, NY, USA},
url = {https://doi.org/10.1145/1367497.1367524},
doi = {10.1145/1367497.1367524},
booktitle = {Proceedings of the 17th International Conference on World Wide Web},
pages = {189–198},
numpages = {10},
location = {Beijing, China},
series = {WWW '08}
}

@article{MR10,
author = {Mossel, Elchanan and Roch, Sebastien},
title = {Submodularity of Influence in Social Networks: From Local to Global},
journal = {SIAM Journal on Computing},
volume = {39},
number = {6},
pages = {2176-2188},
year = {2010},
doi = {10.1137/080714452},
URL = { 
    
        https://doi.org/10.1137/080714452
},
eprint = { 
        https://doi.org/10.1137/080714452
}
}

@article{KLGVF08,
author = {Andreas Krause  and Jure Leskovec  and Carlos Guestrin  and Jeanne VanBriesen  and Christos Faloutsos },
title = {Efficient Sensor Placement Optimization for Securing Large Water Distribution Networks},
journal = {Journal of Water Resources Planning and Management},
volume = {134},
number = {6},
pages = {516-526},
year = {2008},
doi = {10.1061/(ASCE)0733-9496(2008)134:6(516)},
}

@article{KSG08,
author = {Krause, Andreas and Singh, Ajit and Guestrin, Carlos},
title = {Near-Optimal Sensor Placements in Gaussian Processes: Theory, Efficient Algorithms and Empirical Studies},
year = {2008},
issue_date = {6/1/2008},
publisher = {JMLR.org},
volume = {9},
issn = {1532-4435},
journal = {J. Mach. Learn. Res.},
month = jun,
pages = {235–284},
numpages = {50}
}

@inproceedings{LB10,
author = {Lin, Hui and Bilmes, Jeff},
title = {Multi-document summarization via budgeted maximization of submodular functions},
year = {2010},
isbn = {1932432655},
publisher = {Association for Computational Linguistics},
address = {USA},
booktitle = {Human Language Technologies: The 2010 Annual Conference of the North American Chapter of the Association for Computational Linguistics},
pages = {912–920},
numpages = {9},
location = {Los Angeles, California},
series = {HLT '10}
}

@inproceedings{LB11,
    title = "A Class of Submodular Functions for Document Summarization",
    author = "Lin, Hui  and
      Bilmes, Jeff",
    editor = "Lin, Dekang  and
      Matsumoto, Yuji  and
      Mihalcea, Rada",
    booktitle = "Proceedings of the 49th Annual Meeting of the Association for Computational Linguistics: Human Language Technologies",
    month = jun,
    year = "2011",
    address = "Portland, Oregon, USA",
    publisher = "Association for Computational Linguistics",
    url = "https://aclanthology.org/P11-1052/",
    pages = "510--520"
}

@inproceedings{mirza16,
  title={Fast constrained submodular maximization: Personalized data summarization},
  author={Mirzasoleiman, Baharan and Badanidiyuru, Ashwinkumar and Karbasi, Amin},
  booktitle={International Conference on Machine Learning},
  pages={1358--1367},
  year={2016},
  organization={PMLR}
}

@inproceedings{LWKSB13,
author = {Liu, Yuzong and Wei, Kai and Kirchhoff, Katrin and Song, Yisong and Bilmes, Jeff},
year = {2013},
pages = {7184-7188},
title = {Submodular feature selection for high-dimensional acoustic score spaces},
booktitle = {Acoustics, Speech, and Signal Processing, 1988. ICASSP-88., 1988 International Conference on},
doi = {10.1109/ICASSP.2013.6639057}
}

@InProceedings{KEDNG17,
  title = 	 {{Scalable Greedy Feature Selection via Weak Submodularity}},
  author = 	 {Khanna, Rajiv and Elenberg, Ethan and Dimakis, Alex and Negahban, Sahand and Ghosh, Joydeep},
  booktitle = 	 {Proceedings of the 20th International Conference on Artificial Intelligence and Statistics},
  pages = 	 {1560--1568},
  year = 	 {2017},
  editor = 	 {Singh, Aarti and Zhu, Jerry},
  volume = 	 {54},
  series = 	 {Proceedings of Machine Learning Research},
  month = 	 {20--22 Apr},
  publisher =    {PMLR},
  url = 	 {https://proceedings.mlr.press/v54/khanna17b.html}
}

@inproceedings{BHZ22,
author = {Bao, Wei-Xuan and Hang, Jun-Yi and Zhang, Min-Ling},
title = {Submodular Feature Selection for Partial Label Learning},
year = {2022},
isbn = {9781450393850},
publisher = {Association for Computing Machinery},
address = {New York, NY, USA},
url = {https://doi.org/10.1145/3534678.3539292},
doi = {10.1145/3534678.3539292},
booktitle = {Proceedings of the 28th ACM SIGKDD Conference on Knowledge Discovery and Data Mining},
pages = {26–34},
numpages = {9},
location = {Washington DC, USA},
series = {KDD '22}
}

@book{B13,
author = {Bach, Francis},
title = {Learning with Submodular Functions: A Convex Optimization Perspective},
year = {2013},
isbn = {1601987560},
publisher = {Now Publishers Inc.},
address = {Hanover, MA, USA},
}

@article{CVZ14,
author = {Chekuri, Chandra and Vondr\'{a}k, Jan and Zenklusen, Rico},
title = {Submodular Function Maximization via the Multilinear Relaxation and Contention Resolution Schemes},
journal = {SIAM Journal on Computing},
volume = {43},
number = {6},
pages = {1831-1879},
year = {2014},
doi = {10.1137/110839655},
URL = {
        https://doi.org/10.1137/110839655
},
eprint = { 
        https://doi.org/10.1137/110839655
}
}

@inproceedings{FNS11,
author = {Feldman, Moran and Naor, Joseph (Seffi) and Schwartz, Roy},
title = {A Unified Continuous Greedy Algorithm for Submodular Maximization},
year = {2011},
isbn = {9780769545714},
publisher = {IEEE Computer Society},
address = {USA},
url = {https://doi.org/10.1109/FOCS.2011.46},
doi = {10.1109/FOCS.2011.46},
booktitle = {Proceedings of the 2011 IEEE 52nd Annual Symposium on Foundations of Computer Science},
pages = {570–579},
numpages = {10},
series = {FOCS '11}
}

@article{LMNS10,
author = {Lee, Jon and Mirrokni, Vahab S. and Nagarajan, Viswanath and Sviridenko, Maxim},
title = {Maximizing Nonmonotone Submodular Functions under Matroid or Knapsack Constraints},
journal = {SIAM Journal on Discrete Mathematics},
volume = {23},
number = {4},
pages = {2053-2078},
year = {2010}
}

@inproceedings{BFNS14,
	title={Submodular maximization with cardinality constraints},
	author={Buchbinder, Niv and Feldman, Moran and Naor, Joseph and Schwartz, Roy},
	booktitle={Proceedings of the twenty-fifth annual ACM-SIAM symposium on Discrete algorithms},
	pages={1433--1452},
	year={2014},
	organization={SIAM}
}

@InProceedings{GNS23,
  author =	{Ganz, Amit and Nuti, Pranav and Schwartz, Roy},
  title =	{{A Tight Competitive Ratio for Online Submodular Welfare Maximization}},
  booktitle =	{31st Annual European Symposium on Algorithms (ESA 2023)},
  pages =	{52:1--52:17},
  year =	{2023},
  volume =	{274}
}

@inproceedings{OV11,
  title={Submodular Maximization by Simulated Annealing},
  author={Oveis Gharan, Shayan and Vondr{\'a}k, Jan},
  booktitle={Proceedings of the Twenty-Second Annual ACM-SIAM Symposium on Discrete Algorithms},
  pages={1098--1116},
  year={2011},
  organization={SIAM},
  doi={10.1137/1.9781611973082.83}
}

@article{V13,
author = {Vondr\'{a}k, Jan},
title = {Symmetry and Approximability of Submodular Maximization Problems},
journal = {SIAM Journal on Computing},
volume = {42},
number = {1},
pages = {265-304},
year = {2013}
}

@article{CCPV11,
  title={Maximizing a monotone submodular function subject to a matroid constraint},
  author={Calinescu, Gruia and Chekuri, Chandra and Pal, Martin and Vondr{\'a}k, Jan},
  journal={SIAM Journal on Computing},
  volume={40},
  number={6},
  pages={1740--1766},
  year={2011},
  publisher={SIAM}
}

@article{filmus2014monotone,
  title={Monotone submodular maximization over a matroid via non-oblivious local search},
  author={Filmus, Yuval and Ward, Justin},
  journal={SIAM Journal on Computing},
  volume={43},
  number={2},
  pages={514--542},
  year={2014},
  publisher={SIAM}
}

@inproceedings{BF24deter,
  title={Deterministic algorithm and faster algorithm for submodular maximization subject to a matroid constraint},
  author={Buchbinder, Niv and Feldman, Moran},
  booktitle={2024 IEEE 65th Annual Symposium on Foundations of Computer Science (FOCS)},
  pages={700--712},
  year={2024},
  organization={IEEE}
}

@Article{AS04,
author="Ageev, A.A.
and Sviridenko, M.I.",
title="Pipage Rounding: A New Method of Constructing Algorithms with Proven Performance Guarantee",
journal="Journal of Combinatorial Optimization",
year="2004",
month="9",
day="01",
volume="8",
number="3",
pages="307--328"
}

@article{AS99,
author = {Ageev, A. A. and Sviridenko, M. I.},
title = {An 0.828–approximation algorithm for the uncapacitated facility location problem},
year = {1999},
issue_date = {July 20, 1999},
volume = {93},
number = {2–3},
journal = {Discrete Appl. Math.},
month = jul,
pages = {149–156},
numpages = {8}
}

@article{CFN77,
author = {Cornuejols, Gerard and Fisher, Marshall L. and Nemhauser, George L.},
title = {Location of Bank Accounts to Optimize Float: An Analytic Study of Exact and Approximate Algorithms},
year = {1977},
issue_date = {April 1977},
volume = {23},
number = {8},
journal = {Manage. Sci.},
month = apr,
pages = {789–810},
numpages = {22}
}

@incollection{CFN77b,
title = {On the Uncapacitated Location Problem},
series = {Annals of Discrete Mathematics},
publisher = {Elsevier},
volume = {1},
pages = {163-177},
year = {1977},
booktitle = {Studies in Integer Programming},
author = {Gerard Cornuejols and Marshall Fisher and George L. Nemhauser}
}

@article{GW95,
author = {Goemans, Michel X. and Williamson, David P.},
title = {Improved approximation algorithms for maximum cut and satisfiability problems using semidefinite programming},
year = {1995},
issue_date = {Nov. 1995},
volume = {42},
number = {6},
journal = {J. ACM},
month = nov,
pages = {1115–1145},
numpages = {31}
}

@inproceedings{Karp72,
  author = {Karp, Richard M.},
  booktitle = {Complexity of Computer Computations},
  editor = {Miller, Raymond E. and Thatcher, James W.},
  pages = {85-103},
  publisher = {Plenum Press, New York},
  series = {The IBM Research Symposia Series},
  title = {Reducibility Among Combinatorial Problems.},
  year = {1972}
}

@article{KKMO07,
author = { Khot, Subhash and  Kindler, Guy and  Mossel, Elchanan and  O’Donnell, Ryan},
title = {Optimal Inapproximability Results for MAX‐CUT and Other 2‐Variable CSPs?},
journal = {SIAM Journal on Computing},
volume = {37},
number = {1},
pages = {319-357},
year = {2007}
}

@inproceedings{HZ01,
author = {Halperin, Eran and Zwick, Uri},
title = {Combinatorial approximation algorithms for the maximum directed cut problem},
year = {2001},
pages = {1–7},
numpages = {7},
series = {SODA '01}
}

@inproceedings{LLZ02,
author = {Lewin, Michael and Livnat, Dror and Zwick, Uri},
title = {Improved Rounding Techniques for the MAX 2-SAT and MAX DI-CUT Problems},
year = {2002},
booktitle = {Proceedings of the 9th International IPCO Conference on Integer Programming and Combinatorial Optimization},
pages = {67–82},
numpages = {16}
}

@article{BHPZ26,
author = {Brakensiek, Joshua and Huang, Neng and Potechin, Aaron and Zwick, Uri},
title = {Separating MAX 2-AND, MAX DI-CUT, and MAX CUT},
journal = {SIAM Journal on Computing},
volume = {55},
number = {3},
year = {2026}
}

@inproceedings{fleischer2006tight,
  title={Tight approximation algorithms for maximum general assignment problems},
  author={Fleischer, Lisa and Goemans, Michel X and Mirrokni, Vahab S and Sviridenko, Maxim},
  booktitle={SODA},
  volume={6},
  pages={611--620},
  year={2006}
}

@inproceedings{feige2006approximation,
  title={Approximation algorithms for allocation problems: Improving the factor of 1-1/e},
  author={Feige, Uriel and Vondr{\'a}k, Jan},
  booktitle={2006 47th Annual IEEE Symposium on Foundations of Computer Science (FOCS'06)},
  pages={667--676},
  year={2006},
  organization={IEEE}
}

@inproceedings{BS18,
author = {Balkanski, Eric and Singer, Yaron},
title = {The adaptive complexity of maximizing a submodular function},
year = {2018},
booktitle = {Proceedings of the 50th Annual ACM SIGACT Symposium on Theory of Computing},
pages = {1138–1151},
numpages = {14},
location = {Los Angeles, CA, USA},
series = {STOC 2018}
}

@inproceedings{BRS19,
author = {Balkanski, Eric and Rubinstein, Aviad and Singer, Yaron},
title = {An optimal approximation for submodular maximization under a matroid constraint in the adaptive complexity model},
year = {2019},
booktitle = {Proceedings of the 51st Annual ACM SIGACT Symposium on Theory of Computing},
pages = {66–77},
numpages = {12},
series = {STOC 2019}
}

@inproceedings{CQ19,
author = {Chekuri, Chandra and Quanrud, Kent},
title = {Parallelizing greedy for submodular set function maximization in matroids and beyond},
year = {2019},
booktitle = {Proceedings of the 51st Annual ACM SIGACT Symposium on Theory of Computing},
pages = {78–89},
numpages = {12},
series = {STOC 2019}
}

@inproceedings{CK19b,
  author    = {Chandra Chekuri and
               Kent Quanrud},
  title     = {Submodular Function Maximization in Parallel via the Multilinear Relaxation},
  booktitle = {Proceedings of the Thirtieth Annual {ACM-SIAM} Symposium on Discrete Algorithms},
  pages     = {303--322},
  publisher = {SIAM},
  year      = {2019}
}

@INPROCEEDINGS{DENW16,
  author={Da Ponte Barbosa, Rafael and Ene, Alina and Nguyen, Huy L. and Ward, Justin},
  booktitle={2016 IEEE 57th Annual Symposium on Foundations of Computer Science (FOCS)}, 
  title={A New Framework for Distributed Submodular Maximization}, 
  year={2016},
  pages={645-654}
}

@article{F98,
author = {Feige, Uriel},
title = {A Threshold of $\ln{n}$ for Approximating Set Cover},
year = {1998},
issue_date = {July 1998},
publisher = {Association for Computing Machinery},
address = {New York, NY, USA},
volume = {45},
number = {4},
journal = {J. ACM},
month = {7},
pages = {634–652},
numpages = {19}
}

@article{KMN99,
title = {The budgeted maximum coverage problem},
journal = {Information Processing Letters},
volume = {70},
number = {1},
pages = {39-45},
year = {1999},
author = {Samir Khuller and Anna Moss and Joseph (Seffi) Naor}
}

@incollection{Mc98,
  title={Concentration},
  author={McDiarmid, Colin},
  booktitle={Probabilistic methods for algorithmic discrete mathematics},
  pages={195--248},
  year={1998},
  publisher={Springer}
}

@article{FZ18,
author = {Feldman, Moran and Zenklusen, Rico},
title = {The Submodular Secretary Problem Goes Linear},
journal = {SIAM Journal on Computing},
volume = {47},
number = {2},
pages = {330-366},
year = {2018}
}

@article{BFS19online,
author = {Buchbinder, Niv and Feldman, Moran and Schwartz, Roy},
title = {Online Submodular Maximization with Preemption},
year = {2019},
issue_date = {July 2019},
publisher = {Association for Computing Machinery},
address = {New York, NY, USA},
volume = {15},
number = {3},
journal = {ACM Trans. Algorithms},
month = jun,
articleno = {30},
numpages = {31}
}

@article{BFFG20,
author = {Buchbinder, Niv and Feldman, Moran and Filmus, Yuval and Garg, Mohit},
title = {Online submodular maximization: beating 1/2 made simple},
year = {2020},
journal = {Mathematical Programming},
volume = {183},
issue = {1},
pages = {149-169}
}

@article{BHZ13,
author = {Bateni, Mohammadhossein and Hajiaghayi, Mohammadtaghi and Zadimoghaddam, Morteza},
title = {Submodular secretary problem and extensions},
year = {2013},
volume = {9},
number = {4},
journal = {ACM Trans. Algorithms},
month = oct,
articleno = {32},
numpages = {23}
}

@inproceedings{KPV13,
author = {Kapralov, Michael and Post, Ian and Vondr\'{a}k, Jan},
title = {Online submodular welfare maximization: greedy is optimal},
year = {2013},
booktitle = {Proceedings of the Twenty-Fourth Annual ACM-SIAM Symposium on Discrete Algorithms},
pages = {1216–1225},
numpages = {10},
series = {SODA '13}
}

@article{KMZ18,
author = {Korula, Nitish and Mirrokni, Vahab and Zadimoghaddam, Morteza},
title = {Online Submodular Welfare Maximization: Greedy Beats 1/2 in Random Order},
journal = {SIAM Journal on Computing},
volume = {47},
number = {3},
pages = {1056-1086},
year = {2018}
}

@article{AEFNS22,
author = {Alaluf, Naor and Ene, Alina and Feldman, Moran and Nguyen, Huy and Suh, Andrew},
title = {An Optimal Streaming Algorithm for Submodular Maximization with a Cardinality Constraint},
journal = {Mathematics of Operations Research},
volume = {47},
number = {4},
year = {2022}
}

@InProceedings{CGQ15,
author="Chekuri, Chandra
and Gupta, Shalmoli
and Quanrud, Kent",
title="Streaming Algorithms for Submodular Function Maximization",
booktitle="42nd International Colloquium on Automata, Languages, and Programming (ICALP)",
year="2015",
pages="318--330"
}

@InProceedings{FLNSZ22,
  author =	{Feldman, Moran and Liu, Paul and Norouzi-Fard, Ashkan and Svensson, Ola and Zenklusen, Rico},
  title =	{{Streaming Submodular Maximization Under Matroid Constraints}},
  booktitle =	{49th International Colloquium on Automata, Languages, and Programming (ICALP)},
  pages =	{59:1--59:20},
  year =	{2022},
}

@article{FNSZ23,
author = {Feldman, Moran and Norouzi-Fard, Ashkan and Svensson, Ola and Zenklusen, Rico},
title = {The One-Way Communication Complexity of Submodular Maximization with Applications to Streaming and Robustness},
year = {2023},
issue_date = {August 2023},
volume = {70},
number = {4},
journal = {J. ACM},
month = aug,
articleno = {24},
numpages = {52}
}

@InProceedings{KMZLK19,
  title = 	 {Submodular Streaming in All Its Glory: Tight Approximation, Minimum Memory and Low Adaptive Complexity},
  author =       {Kazemi, Ehsan and Mitrovic, Marko and Zadimoghaddam, Morteza and Lattanzi, Silvio and Karbasi, Amin},
  booktitle = 	 {Proceedings of the 36th International Conference on Machine Learning},
  pages = 	 {3311--3320},
  year = 	 {2019},
  volume = 	 {97},
  series = 	 {Proceedings of Machine Learning Research},
  month = 	 {09--15 Jun}
}

@misc{CGLXZ26,
      title={Deterministic Algorithm for Non-monotone Submodular Maximization under Matroid and Knapsack Constraints}, 
      author={Shengminjie Chen and Yiwei Gao and Kaifeng Lin and Xiaoming Sun and Jialin Zhang},
      year={2026},
      eprint={2603.11996},
      archivePrefix={arXiv},
      primaryClass={cs.DS}
}

@inproceedings{EN19,
  title={Towards Nearly-linear Time Algorithms for Submodular Maximization with a Matroid Constraint},
  author={Ene, Alina and Nguyen, Huy L},
  booktitle={International Colloquium on Automata, Languages, and Programming (ICALP)},
  volume={132},
  year={2019}
}

@book{Gonzalez2018HandbookV1,
  editor    = {Teofilo F. Gonzalez},
  title     = {Handbook of Approximation Algorithms and Metaheuristics, Second Edition: Methodologies and Traditional Applications},
  publisher = {Chapman and Hall/CRC},
  year      = {2018},
  isbn      = {9781482258905},
  url       = {https://routledge.com}
}

@article{Qi23,
author = {Qi, Benjamin},
title = {On Maximizing Sums of Non-monotone Submodular and Linear Functions},
year = {2023},
issue_date = {Apr 2024},
publisher = {Springer-Verlag},
address = {Berlin, Heidelberg},
volume = {86},
number = {4},
journal = {Algorithmica},
month = nov,
pages = {1080–1134},
numpages = {55}
}

@article{BFNS15,
  title={A tight linear time (1/2)-approximation for unconstrained submodular maximization},
  author={Buchbinder, Niv and Feldman, Moran and Naor, Joseph (Seffi) and Schwartz, Roy},
  journal={SIAM Journal on Computing},
  volume={44},
  number={5},
  pages={1384--1402},
  year={2015},
  publisher={SIAM}
}

@article{FMV11,
author = {Feige, Uriel and Mirrokni, Vahab S. and Vondr\'{a}k, Jan},
title = {Maximizing Non-monotone Submodular Functions},
year = {2011},
issue_date = {July 2011},
publisher = {Society for Industrial and Applied Mathematics},
address = {USA},
volume = {40},
number = {4},
issn = {0097-5397},
url = {https://doi.org/10.1137/090779346},
doi = {10.1137/090779346},
journal = {SIAM J. Comput.},
month = jul,
pages = {1133–1153},
numpages = {21}
}

@inproceedings{MSS12,
author = {Makarychev, Konstantin and Schudy, Warren and Sviridenko, Maxim},
title = {Concentration inequalities for nonlinear matroid intersection},
year = {2012},
booktitle = {Proceedings of the Twenty-Third Annual ACM-SIAM Symposium on Discrete Algorithms},
pages = {420–436},
numpages = {17},
location = {Kyoto, Japan},
series = {SODA '12}
}

@misc{BFLS26,
      title={Semi-Streaming Algorithms for Submodular Maximization under Random Arrival Order}, 
      author={Niv Buchbinder and Moran Feldman and Siyue Liu and Sherry Sarkar},
      year={2026},
      eprint={2605.14296},
      archivePrefix={arXiv},
      primaryClass={cs.DS},
      url={https://arxiv.org/abs/2605.14296}, 
}
}

\end{document}